\newtheorem{theorem}{Theorem}
\newtheorem{algorithm}{Algorithm}
\newtheorem{axiom}[theorem]{Axiom}
\newtheorem{conjecture}[theorem]{Conjecture}
\newtheorem{corollary}[theorem]{Corollary}
\newtheorem{lemma}[theorem]{Lemma}
\newtheorem{proposition}[theorem]{Proposition}
\newtheorem{definition}[theorem]{Definition}
\theoremstyle{definition}
\newtheorem{example}{Example}
\newtheorem*{example*}{Example}
\newtheorem{exercise}{Exercise}
\newtheorem{remark}{Remark}
\newcommand*\bTextWidth{\adjustbox{width=\textwidth,center}\bgroup}
\newcommand*\eTextWidth{\egroup}
\newcommand{\pr}{\mathbb{P}}
\newcommand{\ind}{\mathbbm{1}}
\newcommand{\e}{\mathbb{E}}
\newcommand{\prob}{\overset{p}{\rightarrow}}
\newcommand{\dist}{\overset{d}{\rightarrow}}
\newcommand{\sumin}{\sum_{i=1}^n}
\renewcommand{\qed}{\hfill Q.E.D.}
\DeclareMathOperator*{\argmax}{\mathrm{arg}\!\max\limits}
\DeclareMathOperator*{\argmin}{\mathrm{arg}\!\min\limits}
\newcommand*\norm[1]{\left\Vert #1\right\Vert}
\newcommand{\abs}[1]{\left\vert #1\right\vert}
\newcommand{\ex}[1]{\e\left[#1\right]}
\newcommand{\inter}[1]{\text{int}\left(#1\right)}
\newcommand{\supp}[1]{\text{supp}\left(#1\right)}
\newcommand*{\QEDA}{\hfill\ensuremath{\blacksquare}}
\newcommand\bProof{\begin{proof}}
\newcommand\eProof{\end{proof}}
\newcommand\bResizeBox{\adjustbox{minipage=\textwidth,width=\textwidth}\bgroup}
\newcommand\eResizeBox{\egroup}
\theoremstyle{namedAssumption}
\newtheorem{assumption}{Assumption}
\numberwithin{assumption}{section}
\numberwithin{example}{section}
\let\pdfoutput=\undefined\fi
\chardef\@x10\chardef\@xv60
\def\tcitime{
\def\@time{%
  \@minute\time\@hour\@minute\divide\@hour\@xv
  \ifnum\@hour<\@x 0\fi\the\@hour:%
  \multiply\@hour\@xv\advance\@minute-\@hour
  \ifnum\@minute<\@x 0\fi\the\@minute
  }}%
\def\x@hyperref#1#2#3{%
   \catcode`\~ = 12
   \catcode`\$ = 12
   \catcode`\_ = 12
   \catcode`\# = 12
   \catcode`\& = 12
   \catcode`\% = 12
   \y@hyperref{#1}{#2}{#3}%
}
\def\y@hyperref#1#2#3#4{%
   #2\ref{#4}#3
   \catcode`\~ = 13
   \catcode`\$ = 3
   \catcode`\_ = 8
   \catcode`\# = 6
   \catcode`\& = 4
   \catcode`\% = 14
}
\def\QCTOpt[#1]#2{%
  \def\QCTOptB{#1}
  \def\QCTOptA{#2}
}
\def\QCTNOpt#1{%
  \def\QCTOptA{#1}
  \let\QCTOptB\empty
}
\def\Qct{%
  \@ifnextchar[{%
    \QCTOpt}{\QCTNOpt}
}
\def\QCBOpt[#1]#2{%
  \def\QCBOptB{#1}%
  \def\QCBOptA{#2}%
}
\def\QCBNOpt#1{%
  \def\QCBOptA{#1}%
  \let\QCBOptB\empty
}
\def\Qcb{%
  \@ifnextchar[{%
    \QCBOpt}{\QCBNOpt}%
}
\def\PrepCapArgs{%
  \ifx\QCBOptA\empty
    \ifx\QCTOptA\empty
      {}%
    \else
      \ifx\QCTOptB\empty
        {\QCTOptA}%
      \else
        [\QCTOptB]{\QCTOptA}%
      \fi
    \fi
  \else
    \ifx\QCBOptA\empty
      {}%
    \else
      \ifx\QCBOptB\empty
        {\QCBOptA}%
      \else
        [\QCBOptB]{\QCBOptA}%
      \fi
    \fi
  \fi
}
\def\GRAPHICSPS#1{%
 \ifcase\GRAPHICSTYPE
   \special{ps: #1}%
 \or
   \special{language "PS", include "#1"}%
 \fi
}%
\def\graffile#1#2#3#4{%
    \bgroup
	   \@inlabelfalse
       \leavevmode
       \@ifundefined{bbl@deactivate}{\def~{\string~}}{\activesoff}%
        \raise -#4 \BOXTHEFRAME{%
           \hbox to #2{\raise #3\hbox to #2{\null #1\hfil}}}%
    \egroup
}%
\def\draftbox#1#2#3#4{%
 \leavevmode\raise -#4 \hbox{%
  \frame{\rlap{\protect\tiny #1}\hbox to #2%
   {\vrule height#3 width\z@ depth\z@\hfil}%
  }%
 }%
}%
\let\nographics=\@msidraft
\newif\ifwasdraft
\def\GRAPHIC#1#2#3#4#5{%
   \ifnum\@msidraft=\@ne\draftbox{#2}{#3}{#4}{#5}%
   \else\graffile{#1}{#3}{#4}{#5}%
   \fi
}
\def\addtoLaTeXparams#1{%
    \edef\LaTeXparams{\LaTeXparams #1}}%
\newif\ifBoxFrame \BoxFramefalse
\newif\ifOverFrame \OverFramefalse
\newif\ifUnderFrame \UnderFramefalse
\def\BOXTHEFRAME#1{%
   \hbox{%
      \ifBoxFrame
         \frame{#1}%
      \else
         {#1}%
      \fi
   }%
}
\def\doFRAMEparams#1{\BoxFramefalse\OverFramefalse\UnderFramefalse\readFRAMEparams#1\end}%
\def\readFRAMEparams#1{%
 \ifx#1\end%
  \let\next=\relax
  \else
  \ifx#1i\dispkind=\z@\fi
  \ifx#1d\dispkind=\@ne\fi
  \ifx#1f\dispkind=\tw@\fi
  \ifx#1t\addtoLaTeXparams{t}\fi
  \ifx#1b\addtoLaTeXparams{b}\fi
  \ifx#1p\addtoLaTeXparams{p}\fi
  \ifx#1h\addtoLaTeXparams{h}\fi
  \ifx#1X\BoxFrametrue\fi
  \ifx#1O\OverFrametrue\fi
  \ifx#1U\UnderFrametrue\fi
  \ifx#1w
    \ifnum\@msidraft=1\wasdrafttrue\else\wasdraftfalse\fi
    \@msidraft=\@ne
  \fi
  \let\next=\readFRAMEparams
  \fi
 \next
 }%
\def\IFRAME#1#2#3#4#5#6{%
      \bgroup
      \let\QCTOptA\empty
      \let\QCTOptB\empty
      \let\QCBOptA\empty
      \let\QCBOptB\empty
      #6%
      \parindent=0pt
      \leftskip=0pt
      \rightskip=0pt
      \setbox0=\hbox{\QCBOptA}%
      \@tempdima=#1\relax
      \ifOverFrame
          \typeout{This is not implemented yet}%
          \show\HELP
      \else
         \ifdim\wd0>\@tempdima
            \advance\@tempdima by \@tempdima
            \ifdim\wd0 >\@tempdima
               \setbox1 =\vbox{%
                  \unskip\hbox to \@tempdima{\hfill\GRAPHIC{#5}{#4}{#1}{#2}{#3}\hfill}%
                  \unskip\hbox to \@tempdima{\parbox[b]{\@tempdima}{\QCBOptA}}%
               }%
               \wd1=\@tempdima
            \else
               \textwidth=\wd0
               \setbox1 =\vbox{%
                 \noindent\hbox to \wd0{\hfill\GRAPHIC{#5}{#4}{#1}{#2}{#3}\hfill}\\%
                 \noindent\hbox{\QCBOptA}%
               }%
               \wd1=\wd0
            \fi
         \else
            \ifdim\wd0>0pt
              \hsize=\@tempdima
              \setbox1=\vbox{%
                \unskip\GRAPHIC{#5}{#4}{#1}{#2}{0pt}%
                \break
                \unskip\hbox to \@tempdima{\hfill \QCBOptA\hfill}%
              }%
              \wd1=\@tempdima
           \else
              \hsize=\@tempdima
              \setbox1=\vbox{%
                \unskip\GRAPHIC{#5}{#4}{#1}{#2}{0pt}%
              }%
              \wd1=\@tempdima
           \fi
         \fi
         \@tempdimb=\ht1
         \advance\@tempdimb by -#2
         \advance\@tempdimb by #3
         \leavevmode
         \raise -\@tempdimb \hbox{\box1}%
      \fi
      \egroup%
}%
\def\DFRAME#1#2#3#4#5{%
  \vspace\topsep
  \hfil\break
  \bgroup
     \leftskip\@flushglue
	 \rightskip\@flushglue
	 \parindent\z@
	 \parfillskip\z@skip
     \let\QCTOptA\empty
     \let\QCTOptB\empty
     \let\QCBOptA\empty
     \let\QCBOptB\empty
	 \vbox\bgroup
        \ifOverFrame 
           #5\QCTOptA\par
        \fi
        \GRAPHIC{#4}{#3}{#1}{#2}{\z@}%
        \ifUnderFrame 
           \break#5\QCBOptA
        \fi
	 \egroup
  \egroup
  \vspace\topsep
  \break
}%
\def\FFRAME#1#2#3#4#5#6#7{%
  \@ifundefined{floatstyle}
    {
     \begin{figure}[#1]%
    }
    {
	 \ifx#1h
      \begin{figure}[H]%
	 \else
      \begin{figure}[#1]%
	 \fi
	}
  \let\QCTOptA\empty
  \let\QCTOptB\empty
  \let\QCBOptA\empty
  \let\QCBOptB\empty
  \ifOverFrame
    #4
    \ifx\QCTOptA\empty
    \else
      \ifx\QCTOptB\empty
        \caption{\QCTOptA}%
      \else
        \caption[\QCTOptB]{\QCTOptA}%
      \fi
    \fi
    \ifUnderFrame\else
      \label{#5}%
    \fi
  \else
    \UnderFrametrue%
  \fi
  \begin{center}\GRAPHIC{#7}{#6}{#2}{#3}{\z@}\end{center}%
  \ifUnderFrame
    #4
    \ifx\QCBOptA\empty
      \caption{}%
    \else
      \ifx\QCBOptB\empty
        \caption{\QCBOptA}%
      \else
        \caption[\QCBOptB]{\QCBOptA}%
      \fi
    \fi
    \label{#5}%
  \fi
  \end{figure}%
 }%
\def\makeactives{
  \catcode`\"=\active
  \catcode`\;=\active
  \catcode`\:=\active
  \catcode`\'=\active
  \catcode`\~=\active
}
   \gdef\activesoff{%
      \def"{\string"}%
      \def;{\string;}%
      \def:{\string:}%
      \def'{\string'}%
      \def~{\string~}%
    }
\def\FRAME#1#2#3#4#5#6#7#8{%
 \bgroup
 \ifnum\@msidraft=\@ne
   \wasdrafttrue
 \else
   \wasdraftfalse%
 \fi
 \def\LaTeXparams{}%
 \dispkind=\z@
 \def\LaTeXparams{}%
 \doFRAMEparams{#1}%
 \ifnum\dispkind=\z@\IFRAME{#2}{#3}{#4}{#7}{#8}{#5}\else
  \ifnum\dispkind=\@ne\DFRAME{#2}{#3}{#7}{#8}{#5}\else
   \ifnum\dispkind=\tw@
    \edef\@tempa{\noexpand\FFRAME{\LaTeXparams}}%
    \@tempa{#2}{#3}{#5}{#6}{#7}{#8}%
    \fi
   \fi
  \fi
  \ifwasdraft\@msidraft=1\else\@msidraft=0\fi{}%
  \egroup
 }%
\def\TEXUX#1{"texux"}
\def\limfunc#1{\mathop{\rm #1}}%
\long\def\QQQ#1#2{%
     \long\expandafter\def\csname#1\endcsname{#2}}%
\long\def\QQA#1#2{}%
\def\QTR#1#2{{\csname#1\endcsname {#2}}}%
\def\EXPAND#1[#2]#3{}%
\def\NOEXPAND#1[#2]#3{}%
\def\LaTeXparent#1{}%
\def\ChildStyles#1{}%
\def\ChildDefaults#1{}%
\def\QTagDef#1#2#3{}%
  \providecommand{\UNICODE}[2][]{\protect\rule{.1in}{.1in}}
  \providecommand{\U}[1]{\protect\rule{.1in}{.1in}}
\def\QQfnmark#1{\footnotemark}
 \def\abstract{%
  \if@twocolumn
   \section*{Abstract (Not appropriate in this style!)}%
   \else \small 
   \begin{center}{\bf Abstract\vspace{-.5em}\vspace{\z@}}\end{center}%
   \quotation 
   \fi
  }%
   \def\registered{\relax\ifmmode{}\r@gistered
                    \else$\m@th\r@gistered$\fi}%
 \def\r@gistered{^{\ooalign
  {\hfil\raise.07ex\hbox{$\scriptstyle\rm\text{R}$}\hfil\crcr
  \mathhexbox20D}}}}{}%
\newdimen\theight
\def\newfmtname{LaTeX2e}
  \DeclareOldFontCommand{\rm}{\normalfont\rmfamily}{\mathrm}
  \DeclareOldFontCommand{\sf}{\normalfont\sffamily}{\mathsf}
  \DeclareOldFontCommand{\tt}{\normalfont\ttfamily}{\mathtt}
  \DeclareOldFontCommand{\bf}{\normalfont\bfseries}{\mathbf}
  \DeclareOldFontCommand{\it}{\normalfont\itshape}{\mathit}
  \DeclareOldFontCommand{\sl}{\normalfont\slshape}{\@nomath\sl}
  \DeclareOldFontCommand{\sc}{\normalfont\scshape}{\@nomath\sc}
\def\alpha{{\Greekmath 010B}}%
\def\beta{{\Greekmath 010C}}%
\def\gamma{{\Greekmath 010D}}%
\def\delta{{\Greekmath 010E}}%
\def\epsilon{{\Greekmath 010F}}%
\def\zeta{{\Greekmath 0110}}%
\def\eta{{\Greekmath 0111}}%
\def\theta{{\Greekmath 0112}}%
\def\iota{{\Greekmath 0113}}%
\def\kappa{{\Greekmath 0114}}%
\def\lambda{{\Greekmath 0115}}%
\def\mu{{\Greekmath 0116}}%
\def\nu{{\Greekmath 0117}}%
\def\xi{{\Greekmath 0118}}%
\def\pi{{\Greekmath 0119}}%
\def\rho{{\Greekmath 011A}}%
\def\sigma{{\Greekmath 011B}}%
\def\tau{{\Greekmath 011C}}%
\def\upsilon{{\Greekmath 011D}}%
\def\phi{{\Greekmath 011E}}%
\def\chi{{\Greekmath 011F}}%
\def\psi{{\Greekmath 0120}}%
\def\omega{{\Greekmath 0121}}%
\def\varepsilon{{\Greekmath 0122}}%
\def\vartheta{{\Greekmath 0123}}%
\def\varpi{{\Greekmath 0124}}%
\def\varrho{{\Greekmath 0125}}%
\def\varsigma{{\Greekmath 0126}}%
\def\varphi{{\Greekmath 0127}}%
\def\nabla{{\Greekmath 0272}}
\def\FindBoldGroup{%
   {\setbox0=\hbox{$\mathbf{x\global\edef\theboldgroup{\the\mathgroup}}$}}%
}
\def\Greekmath#1#2#3#4{%
    \if@compatibility
        \ifnum\mathgroup=\symbold
           \mathchoice{\mbox{\boldmath$\displaystyle\mathchar"#1#2#3#4$}}%
                      {\mbox{\boldmath$\textstyle\mathchar"#1#2#3#4$}}%
                      {\mbox{\boldmath$\scriptstyle\mathchar"#1#2#3#4$}}%
                      {\mbox{\boldmath$\scriptscriptstyle\mathchar"#1#2#3#4$}}%
        \else
           \mathchar"#1#2#3#4%
        \fi 
    \else 
        \FindBoldGroup
        \ifnum\mathgroup=\theboldgroup 
           \mathchoice{\mbox{\boldmath$\displaystyle\mathchar"#1#2#3#4$}}%
                      {\mbox{\boldmath$\textstyle\mathchar"#1#2#3#4$}}%
                      {\mbox{\boldmath$\scriptstyle\mathchar"#1#2#3#4$}}%
                      {\mbox{\boldmath$\scriptscriptstyle\mathchar"#1#2#3#4$}}%
        \else
           \mathchar"#1#2#3#4%
        \fi     	    
	  \fi}
\newif\ifGreekBold  \GreekBoldfalse
\let\SAVEPBF=\pbf
\def\pbf{\GreekBoldtrue\SAVEPBF}%
  \newcounter{equationnumber}  
  \def\mathletters{%
     \addtocounter{equation}{1}
     \edef\@currentlabel{\theequation}%
     \setcounter{equationnumber}{\c@equation}
     \setcounter{equation}{0}%
     \edef\theequation{\@currentlabel\noexpand\alph{equation}}%
  }
    \def\BibTeX{{\rm B\kern-.05em{\sc i\kern-.025em b}\kern-.08em
                 T\kern-.1667em\lower.7ex\hbox{E}\kern-.125emX}}}{}%
\def\AmS{{\protect\usefont{OMS}{cmsy}{m}{n}%
                A\kern-.1667em\lower.5ex\hbox{M}\kern-.125emS}}}{}%
\def\@@eqncr{\let\@tempa\relax
    \ifcase\@eqcnt \def\@tempa{& & &}\or \def\@tempa{& &}%
      \else \def\@tempa{&}\fi
     \@tempa
     \if@eqnsw
        \iftag@
           \@taggnum
        \else
           \@eqnnum\stepcounter{equation}%
        \fi
     \fi
     \global\tag@false
     \global\@eqnswtrue
     \global\@eqcnt\z@\cr}
\def\TCItag{\@ifnextchar*{\@TCItagstar}{\@TCItag}}
\def\@TCItag#1{%
    \global\tag@true
    \global\def\@taggnum{(#1)}%
    \global\def\@currentlabel{#1}}
\def\@TCItagstar*#1{%
    \global\tag@true
    \global\def\@taggnum{#1}%
    \global\def\@currentlabel{#1}}
\def\tint{\msi@int\textstyle\int}%
\def\tiint{\msi@int\textstyle\iint}%
\def\tiiint{\msi@int\textstyle\iiint}%
\def\tiiiint{\msi@int\textstyle\iiiint}%
\def\tidotsint{\msi@int\textstyle\idotsint}%
\def\toint{\msi@int\textstyle\oint}%
\newtoks\temptoksa
\newtoks\temptoksb
\newtoks\temptoksc
\def\msi@int#1#2{%
 \def\@temp{{#1#2\the\temptoksc_{\the\temptoksa}^{\the\temptoksb}}}%
 \futurelet\@nextcs
 \@int
}
\def\@int{%
   \ifx\@nextcs\limits
      \typeout{Found limits}%
      \temptoksc={\limits}%
	  \let\@next\@intgobble%
   \else\ifx\@nextcs\nolimits
      \typeout{Found nolimits}%
      \temptoksc={\nolimits}%
	  \let\@next\@intgobble%
   \else
      \typeout{Did not find limits or no limits}%
      \temptoksc={}%
      \let\@next\msi@limits%
   \fi\fi
   \@next   
}%
\def\@intgobble#1{%
   \typeout{arg is #1}%
   \msi@limits
}
\def\msi@limits{%
   \temptoksa={}%
   \temptoksb={}%
   \@ifnextchar_{\@limitsa}{\@limitsb}%
}
\def\@limitsa_#1{%
   \temptoksa={#1}%
   \@ifnextchar^{\@limitsc}{\@temp}%
}
\def\@limitsb{%
   \@ifnextchar^{\@limitsc}{\@temp}%
}
\def\@limitsc^#1{%
   \temptoksb={#1}%
   \@ifnextchar_{\@limitsd}{\@temp}%
}
\def\@limitsd_#1{%
   \temptoksa={#1}%
   \@temp
}
\def\dint{\msi@int\displaystyle\int}%
\def\diint{\msi@int\displaystyle\iint}%
\def\diiint{\msi@int\displaystyle\iiint}%
\def\diiiint{\msi@int\displaystyle\iiiint}%
\def\didotsint{\msi@int\displaystyle\idotsint}%
\def\doint{\msi@int\displaystyle\oint}%
\def\ExitTCILatex{\makeatother }
\if@compatibility\message{amsmath already loaded}\fi\aftergroup\ExitTCILatex}
\if@compatibility\message{amstex already loaded}\fi\aftergroup\ExitTCILatex}
\if@compatibility\message{amsgen already loaded}\fi\aftergroup\ExitTCILatex}
\let\DOTSI\relax
\def\RIfM@{\relax\ifmmode}%
\def\FN@{\futurelet\next}%
\def\iint{\DOTSI\intno@\tw@\FN@\ints@}%
\def\iiint{\DOTSI\intno@\thr@@\FN@\ints@}%
\def\iiiint{\DOTSI\intno@4 \FN@\ints@}%
\def\idotsint{\DOTSI\intno@\z@\FN@\ints@}%
\def\ints@{\findlimits@\ints@@}%
\newif\iflimtoken@
\newif\iflimits@
\def\findlimits@{\limtoken@true\ifx\next\limits\limits@true
 \else\ifx\next\nolimits\limits@false\else
 \limtoken@false\ifx\ilimits@\nolimits\limits@false\else
 \ifinner\limits@false\else\limits@true\fi\fi\fi\fi}%
\def\multint@{\int\ifnum\intno@=\z@\intdots@                          
 \else\intkern@\fi                                                    
 \ifnum\intno@>\tw@\int\intkern@\fi                                   
 \ifnum\intno@>\thr@@\int\intkern@\fi                                 
 \int}
\def\multintlimits@{\intop\ifnum\intno@=\z@\intdots@\else\intkern@\fi
 \ifnum\intno@>\tw@\intop\intkern@\fi
 \ifnum\intno@>\thr@@\intop\intkern@\fi\intop}%
\def\intic@{%
    \mathchoice{\hskip.5em}{\hskip.4em}{\hskip.4em}{\hskip.4em}}%
\def\negintic@{\mathchoice
 {\hskip-.5em}{\hskip-.4em}{\hskip-.4em}{\hskip-.4em}}%
\def\ints@@{\iflimtoken@                                              
 \def\ints@@@{\iflimits@\negintic@
   \mathop{\intic@\multintlimits@}\limits                             
  \else\multint@\nolimits\fi                                          
  \eat@}
 \else                                                                
 \def\ints@@@{\iflimits@\negintic@
  \mathop{\intic@\multintlimits@}\limits\else
  \multint@\nolimits\fi}\fi\ints@@@}%
\def\intkern@{\mathchoice{\!\!\!}{\!\!}{\!\!}{\!\!}}%
\def\plaincdots@{\mathinner{\cdotp\cdotp\cdotp}}%
\def\intdots@{\mathchoice{\plaincdots@}%
 {{\cdotp}\mkern1.5mu{\cdotp}\mkern1.5mu{\cdotp}}%
 {{\cdotp}\mkern1mu{\cdotp}\mkern1mu{\cdotp}}%
 {{\cdotp}\mkern1mu{\cdotp}\mkern1mu{\cdotp}}}%
\def\RIfM@{\relax\protect\ifmmode}
\def\text{\RIfM@\expandafter\text@\else\expandafter\mbox\fi}
\let\nfss@text\text
\def\text@#1{\mathchoice
   {\textdef@\displaystyle\f@size{#1}}%
   {\textdef@\textstyle\tf@size{\firstchoice@false #1}}%
   {\textdef@\textstyle\sf@size{\firstchoice@false #1}}%
   {\textdef@\textstyle \ssf@size{\firstchoice@false #1}}%
   \glb@settings}
\def\textdef@#1#2#3{\hbox{{%
                    \everymath{#1}%
                    \let\f@size#2\selectfont
                    #3}}}
\newif\iffirstchoice@
\def\Let@{\relax\iffalse{\fi\let\\=\cr\iffalse}\fi}%
\def\vspace@{\def\vspace##1{\crcr\noalign{\vskip##1\relax}}}%
\def\multilimits@{\bgroup\vspace@\Let@
 \baselineskip\fontdimen10 \scriptfont\tw@
 \advance\baselineskip\fontdimen12 \scriptfont\tw@
 \lineskip\thr@@\fontdimen8 \scriptfont\thr@@
 \lineskiplimit\lineskip
 \vbox\bgroup\ialign\bgroup\hfil$\m@th\scriptstyle{##}$\hfil\crcr}%
\def\Sb{_\multilimits@}%
\def\endSb{\crcr\egroup\egroup\egroup}%
\def\Sp{^\multilimits@}%
\newdimen\ex@
\def\rightarrowfill@#1{$#1\m@th\mathord-\mkern-6mu\cleaders
 \hbox{$#1\mkern-2mu\mathord-\mkern-2mu$}\hfill
 \mkern-6mu\mathord\rightarrow$}%
\def\leftarrowfill@#1{$#1\m@th\mathord\leftarrow\mkern-6mu\cleaders
 \hbox{$#1\mkern-2mu\mathord-\mkern-2mu$}\hfill\mkern-6mu\mathord-$}%
\def\leftrightarrowfill@#1{$#1\m@th\mathord\leftarrow
\mkern-6mu\cleaders
 \hbox{$#1\mkern-2mu\mathord-\mkern-2mu$}\hfill
 \mkern-6mu\mathord\rightarrow$}%
\def\overrightarrow{\mathpalette\overrightarrow@}%
\def\overrightarrow@#1#2{\vbox{\ialign{##\crcr\rightarrowfill@#1\crcr
 \noalign{\kern-\ex@\nointerlineskip}$\m@th\hfil#1#2\hfil$\crcr}}}%
\def\overleftarrow{\mathpalette\overleftarrow@}%
\def\overleftarrow@#1#2{\vbox{\ialign{##\crcr\leftarrowfill@#1\crcr
 \noalign{\kern-\ex@\nointerlineskip}$\m@th\hfil#1#2\hfil$\crcr}}}%
\def\overleftrightarrow{\mathpalette\overleftrightarrow@}%
\def\overleftrightarrow@#1#2{\vbox{\ialign{##\crcr
   \leftrightarrowfill@#1\crcr
 \noalign{\kern-\ex@\nointerlineskip}$\m@th\hfil#1#2\hfil$\crcr}}}%
\def\underrightarrow{\mathpalette\underrightarrow@}%
\def\underrightarrow@#1#2{\vtop{\ialign{##\crcr$\m@th\hfil#1#2\hfil
  $\crcr\noalign{\nointerlineskip}\rightarrowfill@#1\crcr}}}%
\def\underleftarrow{\mathpalette\underleftarrow@}%
\def\underleftarrow@#1#2{\vtop{\ialign{##\crcr$\m@th\hfil#1#2\hfil
  $\crcr\noalign{\nointerlineskip}\leftarrowfill@#1\crcr}}}%
\def\underleftrightarrow{\mathpalette\underleftrightarrow@}%
\def\underleftrightarrow@#1#2{\vtop{\ialign{##\crcr$\m@th
  \hfil#1#2\hfil$\crcr
 \noalign{\nointerlineskip}\leftrightarrowfill@#1\crcr}}}%
\def\qopnamewl@#1{\mathop{\operator@font#1}\nlimits@}
\let\nlimits@\displaylimits
\def\setboxz@h{\setbox\z@\hbox}
\def\varlim@#1#2{\mathop{\vtop{\ialign{##\crcr
 \hfil$#1\m@th\operator@font lim$\hfil\crcr
 \noalign{\nointerlineskip}#2#1\crcr
 \noalign{\nointerlineskip\kern-\ex@}\crcr}}}}
 \def\rightarrowfill@#1{\m@th\setboxz@h{$#1-$}\ht\z@\z@
  $#1\copy\z@\mkern-6mu\cleaders
  \hbox{$#1\mkern-2mu\box\z@\mkern-2mu$}\hfill
  \mkern-6mu\mathord\rightarrow$}
\def\leftarrowfill@#1{\m@th\setboxz@h{$#1-$}\ht\z@\z@
  $#1\mathord\leftarrow\mkern-6mu\cleaders
  \hbox{$#1\mkern-2mu\copy\z@\mkern-2mu$}\hfill
  \mkern-6mu\box\z@$}
\def\projlim{\qopnamewl@{proj\,lim}}
\def\injlim{\qopnamewl@{inj\,lim}}
\def\varinjlim{\mathpalette\varlim@\rightarrowfill@}
\def\varprojlim{\mathpalette\varlim@\leftarrowfill@}
\def\varliminf{\mathpalette\varliminf@{}}
\def\varliminf@#1{\mathop{\underline{\vrule\@depth.2\ex@\@width\z@
   \hbox{$#1\m@th\operator@font lim$}}}}
\def\varlimsup{\mathpalette\varlimsup@{}}
\def\varlimsup@#1{\mathop{\overline
  {\hbox{$#1\m@th\operator@font lim$}}}}
\def\align{\@verbatim \frenchspacing\@vobeyspaces \@alignverbatim
You are using the "align" environment in a style in which it is not defined.}
\let\csname endalign*\endcsname =\endtrivlist
\def\alignat{\@verbatim \frenchspacing\@vobeyspaces \@alignatverbatim
You are using the "alignat" environment in a style in which it is not defined.}
\let\csname endalignat*\endcsname =\endtrivlist
\def\xalignat{\@verbatim \frenchspacing\@vobeyspaces \@xalignatverbatim
You are using the "xalignat" environment in a style in which it is not defined.}
\let\csname endxalignat*\endcsname =\endtrivlist
\def\gather{\@verbatim \frenchspacing\@vobeyspaces \@gatherverbatim
You are using the "gather" environment in a style in which it is not defined.}
\let\csname endgather*\endcsname =\endtrivlist
\def\multiline{\@verbatim \frenchspacing\@vobeyspaces \@multilineverbatim
You are using the "multiline" environment in a style in which it is not defined.}
\let\csname endmultiline*\endcsname =\endtrivlist
\def\arrax{\@verbatim \frenchspacing\@vobeyspaces \@arraxverbatim
You are using a type of "array" construct that is only allowed in AmS-LaTeX.}
\def\tabulax{\@verbatim \frenchspacing\@vobeyspaces \@tabulaxverbatim
You are using a type of "tabular" construct that is only allowed in AmS-LaTeX.}
\let\csname endarrax*\endcsname =\endtrivlist
\let\csname endtabulax*\endcsname =\endtrivlist
 \def\endequation{%
     \ifmmode\ifinner 
      \iftag@
        \addtocounter{equation}{-1} 
        $\hfil
           \displaywidth\linewidth\@taggnum\egroup \endtrivlist
        \global\tag@false
        \global\@ignoretrue   
      \else
        $\hfil
           \displaywidth\linewidth\@eqnnum\egroup \endtrivlist
        \global\tag@false
        \global\@ignoretrue 
      \fi
     \else   
      \iftag@
        \addtocounter{equation}{-1} 
        \eqno \hbox{\@taggnum}
        \global\tag@false%
        $$\global\@ignoretrue
      \else
        \eqno \hbox{\@eqnnum}
        $$\global\@ignoretrue
      \fi
     \fi\fi
 } 
 \newif\iftag@ \tag@false
 \def\TCItag{\@ifnextchar*{\@TCItagstar}{\@TCItag}}
 \def\@TCItag#1{%
     \global\tag@true
     \global\def\@taggnum{(#1)}%
     \global\def\@currentlabel{#1}}
 \def\@TCItagstar*#1{%
     \global\tag@true
     \global\def\@taggnum{#1}%
     \global\def\@currentlabel{#1}}
     \def\tag{\@ifnextchar*{\@tagstar}{\@tag}}
     \def\@tag#1{%
         \global\tag@true
         \global\def\@taggnum{(#1)}}
     \def\@tagstar*#1{%
         \global\tag@true
         \global\def\@taggnum{#1}}
\def\dfrac#1#2{{\displaystyle {#1 \over #2}}}%
\def\binom#1#2{{#1 \choose #2}}%
\begin{document}

\title{\vspace*{-1in}\textsc{Simple Estimation of Semiparametric Models with
Measurement Errors}}
\author{Kirill S. \textsc{Evdokimov}\thanks{%
Universitat Pompeu Fabra and Barcelona School of Economics: \textsf{%
kirill.evdokimov@upf.edu}.} \and Andrei \textsc{Zeleneev}\thanks{%
University College London: \textsf{a.zeleneev@ucl.ac.uk}.} \thanks{%
First version: November 7, 2016. A part of the material of this paper was
previously circulated as a part of Evdokimov and Zeleneev (2018).} \thanks{%
We thank the participants of the numerous seminars and conferences for
helpful comments and suggestions. We are also grateful to the Gregory C.
Chow Econometrics Research Program at Princeton University and the
Department of Economics at the Massachusetts Institute of Technology for
their hospitality and support. Evdokimov also gratefully acknowledges the
support from the National Science Foundation via grant SES-1459993, and from
the Spanish MCIN/AEI via grants RYC2020-030623-I, PID2019-107352GB-I00, PID2022-140825NB-I00, and Severo Ochoa Programme CEX2024-001476-S, funded by MICIU/AEI/10.13039/501100011033.}}
\date{This version: \today}
\maketitle

\begin{abstract}
We develop a practical way of addressing the Errors-In-Variables (EIV)
problem in the Generalized Method of Moments (GMM) framework. We focus on
the settings in which the variability of the EIV is a fraction of that of
the mismeasured variables, which is typical for empirical applications. For
any initial set of moment conditions our approach provides a ``corrected''
set of moment conditions that are robust to the EIV. We show that the GMM
estimator based on these moments is $\sqrt{n}$-consistent, with the standard
tests and confidence intervals providing valid inference. This is true even
when the EIV are so large that naive estimators (that ignore the EIV
problem) are heavily biased with their confidence intervals having 0\%
coverage. Our approach involves no nonparametric estimation, which is
especially important for applications with many covariates and settings
with multivariate EIV. In particular, the approach makes it
easy to use instrumental variables to address EIV in nonlinear models.

\bigskip\noindent\textbf{Keywords:} errors-in-variables, nonstandard
asymptotic approximation, nonparametric
identification, instrumental variables 
\end{abstract}

\newpage

\section{Introduction}

Measurement errors are a common problem for empirical studies. Addressing
the Errors-In-Variables (EIV) bias in nonlinear models requires elaborate
strategies.\footnote{%
See \cite%
{HINP1991JoE,HausmanNeweyPowell1995JoE,Newey2001REStat,Schennach2007Ecta,Li2002JoE,Schennach2004Ecta,ChenHongTamer2005ReStud,HuSchennach2008Ecta,Schennach2014Ecta-ELVIS,Wilhelm2019WP-TestingForME}%
, among others.} Despite the fundamental theoretical progress in
identification and estimation of nonlinear models with EIV, the problem of
EIV is still rarely addressed in empirical work outside of linear
specifications.

\setlength{\belowdisplayskip}{8.0pt plus 2.0pt minus 7.0pt}%
\setlength{\abovedisplayskip}{6.0pt plus 2.0pt minus 5.0pt}

The goal of this paper is to develop a simple and practical approach to 
estimation of nonlinear semiparametric models that can be expressed in the
form of general moment conditions%
\begin{equation}
\mathbb{E}[g(X_{i}^{\ast },S_{i},\theta )]=0\text{ iff }\theta =\theta _{0}%
\text{,}  \label{eq:g descr moment}
\end{equation}%
where $g\left( \cdot \right) $ is a vector of moment functions and $\theta
_{0}$ is the parameter vector of interest. The researcher has a random
sample of $\left\{ X_{i},S_{i}\right\} _{i=1}^{n}$, where scalar or vector $%
X_{i}$ is a mismeasured version of unobserved $X_{i}^{\ast }$ with
measurement error $\varepsilon _{i}$:%
\begin{equation*}
X_{i}=X_{i}^{\ast }+\varepsilon _{i}.
\end{equation*}%
We will refer to $%
g\left( \cdot \right) $ as the \emph{original} moment function, since it
would have been valid had the researcher observed $X_{i}^{\ast }$. A naive
GMM estimator (that ignores the EIV and uses $X_{i}$ in place of $%
X_{i}^{\ast }$) based on $g\left( \cdot \right) $ is biased because $\mathbb{%
E}[g(X_{i},S_{i},\theta _{0})]\neq 0$, in contrast to equation~(\ref{eq:g
descr moment}).

\begin{example*}[Nonlinear Regression, NLR]
\label{eg:NLR}Let $Y_{i}$ denote a scalar outcome, and let $X_{i}^{\ast }$
and $W_{i}$ be the covariates. Suppose 
\begin{equation}
E\left[ Y_{i}|X_{i}^{\ast },W_{i}\right] =\rho \left( X_{i}^{\ast
},W_{i},\theta _{0}\right)  \label{eq:Intro:eg NLR}
\end{equation}%
for some function $\rho $ known up to the parameter $\theta $. For example,
in the Logit model, $Y_{i}$ is binary, $\rho \left( x,w,\theta \right)
\equiv 1\left/ \left( 1+\exp \left( -\left( \theta _{x}^{\prime }x+\theta
_{w}^{\prime }w\right) \right) \right) \right. $, and $\theta \equiv \left(
\theta _{x}^{\prime },\theta _{w}^{\prime }\right) ^{\prime }$.

Suppose the researcher has an instrumental variable $Z_{i}$. Then, they can
use 
\begin{equation*}
g\left( y,x,w,z;\theta \right) \equiv \left( y-\rho \left( x,w,\theta
\right) \right) \varphi \left( x,w,z\right)
\end{equation*}%
as the original moment function, where $\varphi \left( x,w,z\right) $ is a
vector that, for example, can include $x$, $z$, $w$, their powers and/or
interactions.\footnote{%
Note that the moment condition~(\ref{eq:g descr moment}) is stated in terms
of the true (correctly measured) $X_{i}^{\ast }$. Determining what functions 
$g\left( \cdot \right) $ (or $h\left( \cdot \right) $ in the NLR model)
satisfy this moment condition does not involve any consideration of the
measurement errors and hence is straightforward.} Here $S_{i}=\left(
Y_{i},W_{i},Z_{i}\right) $. \QEDA
\end{example*}

Even in this well-studied example of nonlinear regression, estimation in the
presence of the EIV is a difficult problem. Importantly, nonlinear
instrumental variable regression estimator cannot be used, since it is
inconsistent in the presence of EIV \citep{Amemiya1985JoE}. The existing
approaches typically require nonparametric estimation that can be
impractical in many empirical applications. In contrast, in this paper, we
develop an alternative class of estimators, that are essentially GMM
estimators that modify the original moment functions $g\left( \cdot \right) $
in a way that makes the moment conditions robust to the EIV. In particular,
our approach makes it easy to use instrumental variables to address EIV\ in
nonlinear models.

To provide a practical estimation approach\ for the general class of models~(%
\ref{eq:g descr moment}), we focus on empirical settings in which the
researcher believes the variability of the measurement error to be at most a
fraction of the variability of the mismeasured variable, i.e., the
noise-to-signal ratio $\tau \equiv \sigma _{\varepsilon }/\sigma _{X^{\ast
}} $ to be moderate. {} 
The absolute magnitude of the measurement error $\sigma _{\varepsilon }$
does not need to be small. Existing validation studies provide insights into
the magnitude of $\tau $ for some key economic variables and datasets. %
\citet{BoundKrueger1991JoLaborEcon} consider log-earnings in the Current
Population Survey (CPS) data matched to the Social Security payroll records.
Their estimates of the variance of the measurement errors correspond to $%
\tau $ of approximately $0.47$ and $0.30$ for the subsamples of men and
women, respectively. \citet{BoundBrownDuncanRodgers1994JoLaborEcon} consider
a validation study of Panel Study of Income Dynamics (PSID). Their estimates
imply $\tau =0.39-0.66$ for log-earnings and $\tau =0.63-0.76$ for hours
worked. \citet{Pischke1995JBES} estimates correspond to $\tau =0.39-0.50$
for the log-earnings in PSID. \citet{AshenfelterKrueger1994AER} assess the
mismeasurement in the years of education; their estimates correspond to $%
\tau =0.30-0.37$.

Focusing on these settings allows us to isolate the most important aspects
of the problem and, as result, to develop a simple estimator, which does not
require any nonparametric estimation or simulation. Such simple estimation
becomes possible because in these settings we can obtain a simple
approximation of the EIV\ bias of the moment conditions as a function of $%
\theta $.

We propose to bias correct the original moments $g\left( \cdot \right) $,
which in turn removes the bias of the corresponding estimator of $\theta
_{0} $. This bias correction depends on some moments of the distribution of
the measurement errors that are unknown. Another difficulty is that the
estimators of some components of the bias correction themselves may need to
be bias corrected. To address these issues, we develop the \emph{corrected
moment conditions}, which depend on $\theta $ and additional {}parameters $\gamma $ that govern the bias correction. The true
parameter value $\gamma _{0}$ is associated with (possibly conditional)
low-order moments of $\varepsilon _{i}$. Despite some theoretical subtleties
with the construction of the corrected moment conditions, their practical
implementation is straightforward and they can be automatically computed for
any original moment function $g\left( \cdot \right) $.

We introduce the Measurement Error Robust Moments (MERM) estimator, which is
a GMM estimator that uses the corrected moment conditions to jointly
estimate parameters $\theta _{0}$ and $\gamma _{0}$. The estimator can be
computed using any standard software for GMM estimation. Joint estimation of
parameters $\theta _{0}$ and $\gamma _{0}$ using the corrected moment
conditions effectively robustifies moment conditions $g\left( \cdot \right) $
against the impact of the measurement errors.

To make these ideas precise and to study the properties of the proposed
estimators, we develop an asymptotic theory using a nonstandard asymptotic
approximation that models $\tau $ as slowly shrinking with the sample size.
Standard asymptotics considers $\tau $ to be constant, which implies that as 
$n\rightarrow \infty $ the bias of a naive estimator dwarfs its sampling
variability: the bias is constant while the standard errors shrink
proportionally to $1/\sqrt{n}$. As a result, under the standard asymptotics,
the problem of removing the EIV bias becomes central in the analysis, with
relatively little attention paid to the sampling variability of estimators.
However, this focus does not seem to be appropriate in many empirical
applications, in which the researcher does not expect the potential EIV bias
to be several orders of magnitude larger than the standard errors.\footnote{%
Such empirical settings appear to be widespread. Although the concerns about
measurement errors are often raised, the majority of applied work does not
explicitly correct the EIV bias in nonlinear models, and instead implicitly
or explicitly argues or conjectures that the EIV bias is likely not to be
too large.%
} By considering $\tau $ as drifting towards zero with the sample size, our
approach provides a better guidance on construction of EIV robust estimators
with good finite sample properties when $\tau $ is small or moderate.%
\footnote{%
Nonstandard asymptotic approximations with drifting parameters are often
used to obtain better approximations of the finite sample behavior of
estimators and tests. For example, in the instrumental variable regression
settings, to consider the settings with relatively small first stage
coefficients, \cite{StaigerStock1997} model them as shrinking with $n$. It
is important to keep in mind that such nonstandard asymptotic approximations
are merely mathematical tools. One should not take them literally and think
of parameters somehow changing if more data is collected. {}}{}

Using this approximation, we show that the proposed estimation approach
indeed addresses the EIV problem. The MERM estimator is shown to be $\sqrt{n}
$-consistent and asymptotically normal and unbiased. The standard confidence
intervals and tests for GMM estimators are also valid for the MERM\
estimator. Additionally, the standard GMM arsenal of assessment tools can be
applied to the MERM estimator, allowing one to test model identification,
conduct valid inference, and perform model specification diagnostics.

The usefulness of a large sample theory is measured by its ability to
approximate the finite sample properties of the estimators and inference
procedures. Thus, we study the MERM estimators in a variety of simulation
experiments. The results confirm that the nonstandard asymptotic theory
indeed provides a good approximation of the finite sample properties of the
estimators even in the settings with relatively large EIV. In some of the
simulation experiments, the EIV are so large that for the naive estimators'
standard $95\%$ confidence intervals have actual coverages of $0\%$ in
finite samples, due to the magnitude of the EIV\ bias. At the same time,
even in these settings the MERM estimators perform well, removing the EIV
bias and providing confidence intervals with the correct coverage. In
particular, the simulation results show that despite the simplicity of
implementation, the MERM estimators can compete with and outperform
semi-nonparametric estimators.{}

The MERM estimator is structurally different from the existing approaches
that require nonparametric estimation of some nuisance parameters, for
example, of the density $f_{X^{\ast }|Z,W}$. Avoiding nonparametric
estimation has at least two advantages. First, since the majority of
empirical applications include additional covariates $%
W_{i}$, nonparametric estimation is often infeasible due to the curse of
dimensionality. Because the MERM estimator does not involve any
nonparametric estimation, it can be used in applications with a relatively
large number of additional covariates $W_{i}$, and remains feasible even in
the more complicated settings, including multi-equation and structural
models, and applications with multiple mismeasured variables $X_{i}$.
Second, estimation of infinite-dimensional nuisance parameters is typically
more demanding towards the sources of identification available in the data,
for example, requiring an instrumental variable with a large support
(continuously distributed). In contrast, having a discrete instrument is
sufficient for the MERM approach because the nuisance parameter $\gamma _{0}$
is finite-dimensional.

For example, in Section~\ref{sec:empirical} we consider estimation of the
model of multinomial choice among three modes of transportation. A leading
alternative approach to the errors-in-variables problem in this model is the
semi-nonparametric sieve-MLE estimator advocated by %
\citet{HuSchennach2008Ecta,CarrollChenHu2010JoNS}, among others. This
approach requires, among other things, estimating the conditional density $%
f_{X^{\ast }|Z,W}$ of $X_{i}^{\ast }$ given the instrument $Z_{i}$ and
covariates $W_{i}$. In this empirical example, $W_{i}$ includes four
continuously distributed covariates (two continuously distributed
characteristics per choice) and a discrete one, while scalar $X_{i}^{\ast }$
and $Z_{i}$ are also continuous. Thus, $f_{X^{\ast }|Z,W}$ is a function of
six continuous and one discrete variable. Hence, for typical sample sizes,
estimating $f_{X^{\ast }|Z,W}$ in this example is infeasible due to the
curse of dimensionality. In contrast, as the results of Section~\ref%
{sec:empirical} demonstrate, the MERM approach is practical and effective in
this application, in part because 
it avoids estimation of the high-dimensional nuisance functions like $f_{X^{\ast }|Z,W}$ altogether. 

The simplicity and practicality of the MERM\ approach do come at a cost:
there is a limit on the magnitude of the measurement errors it can handle.
For example, one generally should not expect the MERM\ approach to work well
when $\tau >1$, i.e., when the noise dominates the signal; in this case the
researcher should seek an alternative estimation method.

We view the MERM\ approach as providing a bridge between the settings in
which the measurement errors are guaranteed to be absent or negligible, and
the settings where the measurement errors are so large that one has to use
the relatively more complicated estimators from the earlier literature (if
they exist at all for the model of interest). 

\smallskip

{\noindent \textbf{Related Literature}} \cite{ChenHongNekipelov2011JEL}, 
\cite{Schennach2016AnnRev}, and \cite{Schennach2020HB-ME} provide excellent
overviews of the measurement error literature.{}

The existing semiparametric approaches to estimation and inference in models
with EIV\ involve nonparametric estimation of infinite-dimensional nuisance
parameters (e.g., \QTR{citealp}{%
Chesher2000WP,Li2002JoE,Schennach2004Ecta,Schennach2007Ecta,HuSchennach2008Ecta,SchennachHu2013JASA,Song2015JoE%
}), simulation (e.g., \QTR{citealp}{Schennach2014Ecta-ELVIS}), or both
(e.g., \QTR{citealp}{Newey2001REStat,WangHsiao2011JoE}$)$. The exceptions
include models with linear and polynomial regression functions (see 
\QTR{citealp}{HINP1991JoE,HausmanNeweyPowell1995JoE}$)$, and Gaussian
control variable models such as Probit and Tobit with endogeneity (see 
\QTR{citealp}{SmithBlundell1986Ecta,RiversVuong1988JoE}).

To the best of our knowledge, this paper is the first to provide an approach
for $\sqrt{n}$-consistent and asymptotically normal and unbiased estimation
of general GMM models with EIV that does not require any nonparametric
estimation (or simulation).%

We are able to provide such an estimator because we focus on the models with
moderate measurement errors. Modeling the variance of the measurement error
as shrinking to zero with the sample size is a popular approach in
Statistics. The method has been proposed by \cite{WolterFuller1982AS}, who
used it to construct an approximate MLE\ estimator of a nonlinear regression
model with Gaussian errors. Following their approach, the Statistics
literature has mainly focused on the settings where the moments of the EIV
needed to bias correct the estimators are either known or can be directly
estimated from the available data such as repeated measurements (e.g., 
\QTR{citealp}{CarrollStefanski1990JASA,CarrollEtAl2006Book-ME}). In
Economics, such data are relatively rare. The use of approximations with
shrinking variance of measurement errors in Econometrics literature has been
pioneered by \cite{Kadane1971Ecta}, \cite{Amemiya1985JoE}, and \cite%
{Chesher1991Biomet}. Such approximations have been used to check the
sensitivity of naive estimators to the EIV by considering how the estimates
change as the unknown moments of the measurement errors vary within some set
of plausible values, e.g., see \cite{ChesherSchluter2002ReStud}, \cite%
{ChesherDumanganeSmith2002JoE}, \cite{BattistinChesher2014JoE}, \cite%
{Chesher2017JoE}, and \cite{HongTamer2003JoE}. %
\citet{BoundBrownMathiowetz2001HBoE} review a broad list of validation
studies matching standard economic dataset to administrative records. The
estimates they report suggest that the measurement errors of moderate
magnitude are typical for empirical applications. This suggests that the
approach developed in this paper could prove valuable for a wide range of
applied work.

This paper differs from the earlier literature in several ways. First, it
presents a way to estimate the unknown nuisance parameters (moments of the
measurement errors) jointly with the parameters of interest. As a result,
the approach can, for example, use instrumental variables as a source of
identification. {} Second, the method applies to
a very general class of semiparametric models specified by moment
conditions. Third, the MERM approach allows the measurement errors to have
larger magnitudes than most of the papers in the earlier literature; this is
achieved by the MERM approach recursively bias correcting the bias
correction terms.%

The most widespread approach to identification of the EIV\ models in
economic applications is to use instrumental variables, e.g., see \cite%
{HINP1991JoE,Newey2001REStat,Schennach2007Ecta,WangHsiao2011JoE}. In a
recent paper, \cite{HahnHausmanKim2021EL} reconsider the regression model in 
\cite{Amemiya1990JoE} using a bias correction similar to ours. When proper
excluded variables are not available, researchers have considered using
higher moments of $X_{i}$ as instruments, e.g., see \cite%
{Reiersol1950Ecta,Lewbel1997Ecta,EricksonWhited2002ET,SchennachHu2013JASA,BenMosheDHaultfeuilleLewbel2017JoE}%
. When available, repeated measurements can also be used to identify the
model, e.g., see \cite%
{HINP1991JoE,LiVuong1998JoE,Li2002JoE,Schennach2004Ecta}. The MERM estimator
accommodates these identification approaches within a unified estimation
framework.

The power of the general MERM\ approach can be illustrated in the NLR model.
For example, when a candidate instrumental variable is available, the
conditions it needs to satisfy are much weaker than what is required by many
existing approaches. Availability of a discrete instrument is sufficient for
identification; and the instrument is allowed to have heterogeneous impact
on covariates $X_{i}^{\ast }$.%
\ One can also take a nonclassical, nonlinear (e.g., discretized or
censored), or biased measurement of $X_{i}^{\ast }$ as an instrument in the
MERM approach. We discuss identification in Section~\ref{ssec:MERM-ID}. In
addition, in a related paper \cite{EvdokimovZeleneev2022WP-NPID} study
nonparametric regression with EIV using the $\tau \rightarrow 0$
approximation, and demonstrate that the MERM approach can also be motivated
from a nonparametric perspective. 

\cite%
{KitamuraOtsuEvdokimov2013Ecta,AndrewsGentzkowShapiro2017QJE,ArmstrongKolesar2021QE,BonhommeWeidner2021QE}%
, among others, develop tools for estimation and inference in GMM, which are
robust to general perturbation or misspecification of the true data
generating process. They focus on the settings in which these perturbations
are sufficiently small, so that naive estimators remain $\sqrt{n}$%
-consistent, and their biases are of the same order of magnitude as their
standard errors. In contrast, we focus on more specific forms of data
contamination due to the EIV. This allows the MERM approach to remain valid
even in the settings with larger measurement errors, in which naive
estimators may have slower than $\sqrt{n}$ rates of convergence.%

The MERM approach also provides a useful foundation for dealing with EIV in
more complicated settings. \cite{EvdokimovZeleneev2018WP-Inference} utilize
the MERM framework to address an issue of nonstandard inference, which turns
out to arise generally when EIV models are identified using instrumental
variables. \cite{EvdokimovZeleneev2019WP-Panel} extend the analysis of this
paper to long panel and network settings.

\smallskip

{\noindent \textbf{Organization of the paper}} Section~\ref{sec:framework}
introduces the Moderate Measurement Error framework and the proposed MERM\
estimator. Section \ref{sec:MC} presents several Monte Carlo experiments
that illustrate finite sample properties of the MERM\ estimators. Section %
\ref{sec:Extensions} considers several extensions of the framework. A supplementary appendix contains all proofs and
additional results for the numerical and empirical illustrations.

\section{Moderate Measurement Errors Framework\label{sec:framework}}

To present the main ideas we first consider the case of univariate $%
X_{i}^{\ast }$. We will consider multivariate $X_{i}^{\ast }$ later. We assume that the measurement error is classical, i.e., that $\varepsilon_i$ is independent of $X_i^*$ and $S_i$; later we will discuss how this assumption can be relaxed. Following the rest of the literature, we assume
that $\mathbb{E}\left[ \varepsilon _{i}\right] =0$.\footnote{%
A location normalization such as $\mathbb{E}\left[ \varepsilon _{i}\right]
=0 $ is usually necessary because it is not possible to separately identify
the means $\mathbb{E}\left[ X_{i}^{\ast }\right] $ and $\mathbb{E}\left[
\varepsilon _{i}\right] $.}

To develop a practical estimation approach for general moment condition
models we focus on the settings in which $\tau \equiv \sigma _{\varepsilon
}/\sigma _{X^{\ast }}$ is small or moderate. We consider an asymptotic
approximation with $\tau _{n}\equiv \tau \rightarrow 0$ as $n\rightarrow
\infty $.

Note that economically meaningful parameters are usually invariant
to rescaling of~$X_{i}^{\ast }$. Likewise, the extent of the EIV problem
does not change with such rescaling. For simplicity of exposition, it is convenient to
assume that $X_{i}^{\ast }$ is scaled so that $\sigma _{X^{\ast }}$ is of
order one and, correspondingly, the moments $\mathbb{E}[\left\vert \varepsilon
_{i}\right\vert ^{k}]\propto \tau _{n}^{k}$ decrease with $k$ when $\tau
_{n}<1$. For example, this could be ensured by normalizing observed $X_{i}$
to have $\sigma _{X}=1$. Let us stress that this normalization is used only
to simplify the exposition; as we show in Appendix~\ref{sec:MME expansion
with large ME}, the proposed MERM\ estimator does not require any
normalizations in practice.

\subsection{Special Case: Quadratic Expansion}

For clarity, we first consider a simple special case of the general
approach. Let us denote $g_{x}^{(k)}\left( x,s,\theta \right) \equiv
\partial ^{k}g\left( x,s,\theta \right) /\partial x^{k}$. Since $\mathbb{E}%
[\left\vert \varepsilon _{i}\right\vert ^{k}]\propto \tau
_{n}^{k}\rightarrow 0$ as $n\rightarrow \infty $, under some regularity
conditions, we can write the quadratic Taylor expansion of function $%
g(X_{i},S_{i},\theta )=g(X_{i}^{\ast }+\varepsilon _{i},S_{i},\theta )$
around $\varepsilon _{i}=0$ as%
\begin{align}
\mathbb{E}[g(X_{i},S_{i},\theta )]& =\mathbb{E}\left[ g(X_{i}^{\ast
},S_{i},\theta )+g_{x}^{(1)}(X_{i}^{\ast },S_{i},\theta )\varepsilon _{i}+%
\dfrac{1}{2}g_{x}^{(2)}(X_{i}^{\ast },S_{i},\theta )\varepsilon _{i}^{2}%
\right] +O(\mathbb{E}[\left\vert \varepsilon _{i}\right\vert ^{3}])  \notag
\\
& =\mathbb{E}[g(X_{i}^{\ast },S_{i},\theta )]+\dfrac{\mathbb{E}[\varepsilon
_{i}^{2}]}{2}\mathbb{E}[g_{x}^{(2)}(X_{i}^{\ast },S_{i},\theta )]+O(\tau
_{n}^{3}),  \label{eq:moment expectation for K=2}
\end{align}%
where the second equality holds because $\varepsilon _{i}$ and $%
\left( X_{i}^{\ast },S_{i}\right) $ are independent, and ${\mathbb{E}\left[
\varepsilon _{i}\right] =0}$.

Evaluating the expansion above at $\theta = \theta_0$ gives $\mathbb{E}[g(X_{i},S_{i},\theta _{0})]=O\left(
\sigma _{\varepsilon }^{2}\right) =O(\tau _{n}^{2})$, because $\mathbb{E}[g(X_{i}^*,S_{i},\theta _{0})] = 0$. As a result, a naive
estimator that ignores the EIV and uses $X_{i}$ in place of $X_{i}^{\ast }$
has EIV\ bias of order $\tau _{n}^{2}$.\footnote{%
For example, consider a linear regression with a scalar mismeasured
regressor. The bias of the naive OLS\ estimator of the slope parameter $%
\theta _{01}$ is $-\theta _{01}\frac{\tau _{n}^{2}}{1+\tau _{n}^{2}}=-\theta
_{01}\tau _{n}^{2}+O\left( \tau _{n}^{4}\right) $.} The bias of the naive
estimator should be compared with its standard error, which is of order $%
n^{-1/2}$. Thus, the bias of the naive estimator is not negligible, unless the
measurement error is rather small (theoretically, unless $\tau
_{n}^{2}=o\left( n^{-1/2}\right) $). In particular, tests and confidence
intervals based on the naive estimator are invalid and can provide highly
misleading results. Moreover, if $\tau _{n}^{2}$ shrinks at a rate slower
than $O\left( n^{-1/2}\right) $, {} the rate of convergence of the naive estimator is slower
than $\sqrt{n}$.

Suppose $\tau _{n}=o\left( n^{-1/6}\right) $. Then, $O(\tau
_{n}^{3})=o\left( n^{-1/2}\right) $ and we can rearrange equation~(\ref%
{eq:moment expectation for K=2}) as%
\begin{equation}
\mathbb{E}[g(X_{i}^{\ast },S_{i},\theta )]=\mathbb{E}[g(X_{i},S_{i},\theta
)]-\frac{\mathbb{E}[\varepsilon _{i}^{2}]}{2}\mathbb{E}\left[
g_{x}^{(2)}(X_{i}^{\ast },S_{i},\theta )\right] +o(n^{-1/2}).
\label{eq:g MME mu for K=2}
\end{equation}%
The left-hand side of this equation is exactly the moment condition~(\ref%
{eq:g descr moment}) that we would like to use for estimation of $\theta
_{0} $. The first term on the right-hand side involves only observed
variables, and can be estimated by the sample average $\overline{g}(\theta
)\equiv n^{-1}\sum_{i=1}^{n}g(X_{i},S_{i},\theta ).$ The second term on the
right-hand side can be thought of as a bias correction that removes the
EIV-bias from the expected moment function $\mathbb{E}[g(X_{i},S_{i},\theta
)]$.

The idea of the MERM\ estimator we propose is to make use of expansions such
as~(\ref{eq:g MME mu for K=2}) to bias correct the moment condition $\mathbb{%
E}[g(X_{i},S_{i},\theta )]$, which in turn removes the bias of the estimator
of the parameters of interest $\theta _{0}$. To perform the bias correction
we need to estimate two quantities: $\mathbb{E}[\varepsilon _{i}^{2}]$ and $%
\mathbb{E}[g_{x}^{(2)}(X_{i}^{\ast },S_{i},\theta )]$. {}

First, we show that in equation~(\ref{eq:g MME mu for K=2}) we can
substitute $\mathbb{E}[g_{x}^{(2)}(X_{i}^{\ast },S_{i},\theta )]$ with $%
\mathbb{E}[g_{x}^{(2)}(X_{i},S_{i},\theta )]$, which in turn can be
estimated by $\overline{g}_{x}^{(2)}(\theta )\equiv
n^{-1}\sum_{i=1}^{n}g_{x}^{(2)}(X_{i},S_{i},\theta )$. By the Taylor
expansion around $\varepsilon _{i}=0$ similar to equation~(\ref{eq:moment
expectation for K=2}), we can show that $\mathbb{E}[g_{x}^{(2)}(X_{i}^{\ast
},S_{i},\theta )]=\mathbb{E}[g_{x}^{(2)}(X_{i},S_{i},\theta )]+O(\tau
_{n}^{2})$ and hence%
\begin{equation}
\frac{1}{2}\mathbb{E}[\varepsilon _{i}^{2}]\left( \mathbb{E}\left[
g_{x}^{(2)}(X_{i}^{\ast },S_{i},\theta )\right] -\mathbb{E}\left[
g_{x}^{(2)}(X_{i},S_{i},\theta )\right] \right) =\mathbb{E}[\varepsilon
_{i}^{2}]O\left( \tau _{n}^{2}\right) =O\left( \tau _{n}^{4}\right) .
\label{eq:bias replacing Xs with X for K=2}
\end{equation}%
Here $O\left( \tau _{n}^{4}\right) =o\left( n^{-1/2}\right) $ because we
assume that $\tau _{n}=o\left( n^{-1/6}\right) $. The idea behind this
substitution is that the bias of order $O(\tau _{n}^{2})$ in $\mathbb{E}%
[g_{x}^{(2)}(X_{i},S_{i},\theta )]$ can be ignored because it is multiplied
by $E\left[ \varepsilon _{i}^{2}\right] =O\left( \tau _{n}^{2}\right) $.%
\footnote{%
Such substitutions of $X^{\ast }$ with $X$ have been used in other contexts,
e.g., \cite{ChesherSchluter2002ReStud}.} With the substitution, we can
rearrange equation~(\ref{eq:g MME mu for K=2}) and write it as%
\begin{equation}
\mathbb{E}[g(X_{i}^{\ast },S_{i},\theta )]=\mathbb{E}\left[
g(X_{i},S_{i},\theta )-\frac{\mathbb{E}[\varepsilon _{i}^{2}]}{2}%
g_{x}^{(2)}(X_{i},S_{i},\theta )\right] +o(n^{-1/2}).
\label{eq:expansion - almost psi for K=2}
\end{equation}

Second, we propose estimating the unknown $\mathbb{E}[\varepsilon _{i}^{2}]$
together with the parameter of interest $\theta $. Specifically, let $\gamma
_{02}\equiv \mathbb{E}[\varepsilon _{i}^{2}]/2$ denote the true value of
parameter $\gamma _{2}$, and consider the following \emph{corrected moment
function}:%
\begin{equation}
\psi (X_{i},S_{i},\theta ,\gamma )\equiv g(X_{i},S_{i},\theta )-\gamma
_{2}g_{x}^{(2)}(X_{i},S_{i},\theta ).
\label{eq: psi moms definition for K=2}
\end{equation}%
Function $\psi $ is a moment function parameterized by $\theta $ and $\gamma 
$, and%
\begin{equation}
\mathbb{E}[\psi (X_{i},S_{i},\theta _{0},\gamma _{02})]=\mathbb{E}%
[g(X_{i}^{\ast },S_{i},\theta _{0})]+o\left( n^{-1/2}\right) =o\left(
n^{-1/2}\right) ,  \label{eq:E psi = o(n-12) for K=2}
\end{equation}%
where the first equality follows from equation~(\ref{eq:expansion - almost
psi for K=2}) and the definition of $\gamma _{02}$, and the second equality
follows from equation~(\ref{eq:g descr moment}). Hence, the corrected moment
conditions $\psi $ can be used to jointly estimate the true parameters $%
\theta _{0}$ and $\gamma _{02}$ by a GMM\ estimator.\footnote{%
In the moment condition settings, having $o\left( n^{-1/2}\right) $ is
equivalent to having $0$ on the right-hand side\ of equation~(\ref{eq:E psi
= o(n-12) for K=2}).}

\begin{remark}
\label{remark: symmetric ME} If $\mathbb{E}[\varepsilon _{i}^{3}]=0$ (e.g.,
if the distribution of $\varepsilon _{i}$ is symmetric), the remainder in
equation~(\ref{eq:moment expectation for K=2}) is of a smaller order $O(\tau
_{n}^{4})$. Hence, the corrected moments~(\ref{eq:E psi = o(n-12) for K=2})
remain valid for larger values of $\tau _{n}$, requiring only the weaker
condition $\tau _{n}=o(n^{-1/8})$. The bias of the naive estimators in this
case can be as large as $o(n^{-1/4})$.
\end{remark}

\subsection{General Case: Expansion of order $K$}

The quadratic expansion of equation~(\ref{eq:moment expectation for K=2})
can be extended to general order $K\geq 2$. Considering larger $K$
theoretically allows $\tau _{n}$ converging to zero at a slower rate. In
finite samples this corresponds to the asymptotics providing good
approximations for larger values of $\tau _{n}$, i.e., large measurement
errors. {} Expanding $g(X_{i}^{\ast
}+\varepsilon _{i},S_{i},\theta )$ around $\varepsilon _{i}=0$ we have,%
\begin{equation}
\mathbb{E}[g(X_{i},S_{i},\theta )]=\mathbb{E}\left[ g(X_{i}^{\ast
},S_{i},\theta )+\sum_{k=1}^{K}\frac{\varepsilon _{i}^{k}}{k!}%
g_{x}^{(k)}(X_{i}^{\ast },S_{i},\theta )\right] +O\left( \mathbb{E}\left[
\left\vert \varepsilon _{i}\right\vert ^{K+1}\right] \right) .
\label{eq: moment exp approx - step 1}
\end{equation}%
The above special case of quadratic expansion corresponds to $K=2$.

The approximation we consider is formalized by the following assumption.

\begin{assumption}[MME]
\emph{(Moderate Measurement Errors)} \namedlabel{ass:MME}{MME} %
(i) $\tau _{n}=o(n^{-1/\left( 2K+2\right) })$ for some integer $K\geq 2$;
and (ii) $\mathbb{E}[\left\vert \varepsilon _{i}\right\vert ^{L}] \leq C
\sigma_\varepsilon^L $ for some $L\geq K+1$ and $C > 0$. 
\end{assumption}

Assumption~\ref{ass:MME}(i) limits the magnitude of the measurement errors
and implies that $\tau _{n}^{K+1}=o\left( n^{-1/2}\right) $. Assumption~\ref%
{ass:MME}(ii) implies that $\mathbb{E}[\left\vert \varepsilon
_{i}\right\vert ^{k}]=O\left( \sigma _{\varepsilon }^{k}\right) $, and
requires the tails of $\varepsilon _{i}/\sigma _{\varepsilon }$ to be
sufficiently thin. Together, parts (i) and (ii) imply that $\mathbb{E}%
[\left\vert \varepsilon _{i}\right\vert ^{K+1}]=O\left( \tau
_{n}^{K+1}\right) =o\left( n^{-1/2}\right) $, and hence ensure that the
remainder in equation~(\ref{eq: moment exp approx - step 1})$\ $is
negligible. Using $\mathbb{E}\left[ \varepsilon _{i}|X_{i}^{\ast },S_{i}%
\right] =0$ to further simplify this expansion and rearranging the terms we
obtain%
\begin{equation}
\mathbb{E}[g(X_{i}^{\ast },S_{i},\theta )]=\mathbb{E}[g(X_{i},S_{i},\theta
)]-\sum_{k=2}^{K}\frac{\mathbb{E}[\varepsilon _{i}^{k}]}{k!}\mathbb{E}\left[
g_{x}^{(k)}(X_{i}^{\ast },S_{i},\theta )\right] +o(n^{-1/2}).
\label{eq:g MME mu}
\end{equation}%
This equation is the general expansion analog of equation~(\ref{eq:g MME mu
for K=2}). The summation on the right hand side is the bias correction term,%
{}
which we use to construct the MERM\ estimator.\footnote{It is useful to get a sense of the
magnitudes of the coefficients $\mathbb{E}\left[ \varepsilon _{i}^{k}\right]
/k!$ in equation~(\ref{eq:g MME mu}). Suppose $\varepsilon _{i}\sim N\left(
0,\sigma _{\varepsilon }^{2}\right) $, $\sigma _{\varepsilon }=0.5$, and $%
\sigma _{X^{\ast }}=1$, so$\ \tau =\sigma _{\varepsilon }=0.5$. Then the
coefficients in front of $g_{x}^{\left( 2\right) }$, $g_{x}^{\left( 4\right)
}$, and $g_{x}^{\left( 6\right) }$ are $\mathbb{E}\left[ \varepsilon _{i}^{2}%
\right] /2!=0.125$, $\mathbb{E}\left[ \varepsilon _{i}^{4}\right] /4!\approx
0.008$, and $\mathbb{E}\left[ \varepsilon _{i}^{6}\right] /6!\approx 0.0003$%
. {}} {}

It turns out that for $K\geq 4$, estimation of $\mathbb{E}%
[g_{x}^{(k)}(X_{i}^{\ast },S_{i},\theta )]$ is more intricate than in the
case of $K=2$, and the substitution we made in equation~(\ref{eq:expansion -
almost psi for K=2}) no longer works. Larger values of $K$ allow for larger
values of $\tau _{n}$ and hence larger EIV\ biases of naive estimators $%
n^{-1}\sum_{i=1}^{n}g_{x}^{(k)}(X_{i},S_{i},\theta )$. The expansion of
order $K$ includes terms up to the order $\tau _{n}^{K}$, with the
asymptotically negligible remainder of order $O\left( \tau _{n}^{K+1}\right) 
$. For $K\geq 4$, terms of order $\tau _{n}^{4}$ are not negligible. This
implies that we cannot ignore the EIV bias that would arise from
substituting $\mathbb{E}[g_{x}^{(2)}(X_{i}^{\ast },S_{i},\theta )]$ with $%
\mathbb{E}[g_{x}^{(2)}(X_{i},S_{i},\theta )]$ in equation~(\ref{eq:g MME mu}%
), because this bias is of order $O\left( \tau _{n}^{4}\right) $ according
to equation~(\ref{eq:bias replacing Xs with X for K=2}). To address this
problem, we instead replace $\mathbb{E}[g_{x}^{(2)}(X_{i}^{\ast
},S_{i},\theta )]$ with the bias corrected expression $\mathbb{E}%
[g_{x}^{(2)}(X_{i},S_{i},\theta )]-\left( \mathbb{E}[\varepsilon
_{i}^{2}]/2\right) \mathbb{E}[g_{x}^{(4)}(X_{i},S_{i},\theta )]$. Thus, for $%
K\geq 4$, one needs to bias correct the estimator of the bias correction
term. Moreover, for larger $K$ one needs to bias correct the bias correction
of the bias correction term and so on.

Fortunately, we show that these bias corrections can be constructed as
linear combinations of the expectations of the higher order derivatives of $%
g_{x}^{\left( k\right) }(X_{i},S_{i},\theta )$. Let us define the following 
\emph{corrected moment function}:%
\begin{equation}
\psi (X_{i},S_{i},\theta ,\gamma )\equiv g(X_{i},S_{i},\theta
)-\sum_{k=2}^{K}\gamma _{k}g_{x}^{(k)}(X_{i},S_{i},\theta ),
\label{eq: psi moms definition}
\end{equation}%
where $\gamma =(\gamma _{2},\dots ,\gamma _{K})^{\prime }$ is a $K-1$
dimensional vector of parameters. Let $\gamma _{0}\equiv (\gamma _{02},\dots
,\gamma _{0K})^{\prime }$ denote the vector of true parameters $\gamma _{0k}$%
, defined as%
\begin{equation}
\gamma _{02}\equiv \frac{\mathbb{E}\left[ \varepsilon _{i}^{2}\right] }{2}%
\text{,} \;\; \gamma _{03}\equiv \frac{\mathbb{E}\left[ \varepsilon _{i}^{3}%
\right] }{6}, \;\; \text{and} \;\; \gamma _{0k}\equiv 
\frac{\mathbb{E}\left[ \varepsilon _{i}^{k}\right] }{k!}-\sum_{\ell =2}^{k-2}%
\frac{\mathbb{E}\left[ \varepsilon _{i}^{k-\ell }\right] }{(k-\ell )!}\gamma
_{0\ell }\;\; \text{for} \;\; k\geq 4.  \label{eq: gammas}
\end{equation}

We formalize this discussion below.

\begin{assumption}[CME]
\namedlabel{ass: CME}{CME}
\emph{(Classical Measurement Error)} $\varepsilon _{i}$ is independent from $%
(X_{i}^{\ast },S_{i})$ and $\mathbb{E}[\varepsilon _{i}]=0$.
\end{assumption}

The following lemma establishes validity of the corrected moment conditions
under Assumptions \ref{ass:MME}, \ref{ass: CME}, and some mild regularity
conditions provided in Appendix~\ref{sec:LargeSample}.

\begin{lemma}
\label{lem: corrected moment} Under Assumptions \ref{ass:MME}, \ref{ass: CME}
and \ref{ass: Moment function} in Appendix~\ref{sec:LargeSample}, 
\begin{equation*}
\mathbb{E}[\psi (X_{i},S_{i},\theta _{0},\gamma _{0})]=\mathbb{E}%
[g(X_{i}^{\ast },S_{i},\theta _{0})]+o\left( n^{-1/2}\right) =o\left(
n^{-1/2}\right).
\end{equation*}
\end{lemma}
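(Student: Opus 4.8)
The plan is to iterate the Taylor expansion~\eqref{eq: moment exp approx - step 1}, applying it not only to $g$ itself but also to each of its $x$-derivatives $g_x^{(k)}$, and then to show that the nested bias-correction terms so produced collapse exactly into the linear combination $\sum_{k=2}^{K}\gamma_{0k}\,g_x^{(k)}$ appearing in~\eqref{eq: psi moms definition}, with $\gamma_0$ as defined in~\eqref{eq: gammas}. \textbf{Step 1 (an expansion for every derivative).} Fix $\theta$ and write $a_j\equiv\mathbb{E}[\varepsilon_i^j]/j!$ (so $a_0=1$ and $a_1=0$ under Assumption~\ref{ass: CME}), $m_k\equiv\mathbb{E}[g_x^{(k)}(X_i,S_i,\theta)]$ and $m_k^{\ast}\equiv\mathbb{E}[g_x^{(k)}(X_i^{\ast},S_i,\theta)]$, with $m_0=\mathbb{E}[g(X_i,S_i,\theta)]$. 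Applying~\eqref{eq: moment exp approx - step 1} to $g_x^{(k)}$ with an expansion of order $K-k$ — whose remainder involves $g_x^{(K+1)}$, uniformly dominated by the integrability conditions of Assumption~\ref{ass: Moment function} — and then using independence of $\varepsilon_i$ from $(X_i^{\ast},S_i)$ together with $\mathbb{E}[\varepsilon_i]=0$ to factor the mixed moments and kill the $j=1$ term, yields
\[
m_k=\sum_{j=0}^{K-k}a_j\,m_{k+j}^{\ast}+O\!\left(\mathbb{E}[|\varepsilon_i|^{K-k+1}]\right)=\sum_{j=0}^{K-k}a_j\,m_{k+j}^{\ast}+O(\tau_n^{K-k+1}),
\]
where the second equality uses Assumption~\ref{ass:MME}(ii) (upgraded, via Lyapunov's inequality, to $\mathbb{E}[|\varepsilon_i|^{m}]=O(\sigma_\varepsilon^{m})=O(\tau_n^{m})$ for all $m\le L$). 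The case $k=0$ is precisely~\eqref{eq:g MME mu}.

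\textbf{Step 2 (the coefficients invert the expansion).} An easy induction on $k$ using~\eqref{eq: gammas} gives $\gamma_{0k}=O(\tau_n^{k})$. Substitute the Step-1 identities for $m_2,\dots,m_K$ into $\sum_{k=2}^{K}\gamma_{0k}m_k$: the accumulated remainders are $\sum_k O(\tau_n^{k})\,O(\tau_n^{K-k+1})=O(\tau_n^{K+1})=o(n^{-1/2})$ by Assumption~\ref{ass:MME}(i), and reindexing the double sum by $\ell=k+j$ gives
\[
\sum_{k=2}^{K}\gamma_{0k}m_k=\sum_{\ell=2}^{K}m_\ell^{\ast}\Bigl(\sum_{k=2}^{\ell}\gamma_{0k}\,a_{\ell-k}\Bigr)+o(n^{-1/2}).
\]
Because $a_1=0$, the bracketed coefficient of $m_\ell^{\ast}$ equals $\gamma_{0\ell}+\sum_{k=2}^{\ell-2}\gamma_{0k}a_{\ell-k}$, which the recursion~\eqref{eq: gammas} was designed to make equal to $a_\ell$ (the cases $\ell=2,3$ being the base cases $\gamma_{02}=a_2$, $\gamma_{03}=a_3$). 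Hence $\sum_{k=2}^{K}\gamma_{0k}m_k=\sum_{\ell=2}^{K}a_\ell\,m_\ell^{\ast}+o(n^{-1/2})$.

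\textbf{Step 3 (conclusion).} Combining Step 2 with the $k=0$ expansion~\eqref{eq:g MME mu} at $\theta=\theta_0$,
\[
\mathbb{E}[\psi(X_i,S_i,\theta_0,\gamma_0)]=m_0-\sum_{k=2}^{K}\gamma_{0k}m_k=\Bigl(m_0-\sum_{\ell=2}^{K}a_\ell m_\ell^{\ast}\Bigr)+o(n^{-1/2})=m_0^{\ast}+o(n^{-1/2}),
\]
that is, $\mathbb{E}[\psi(X_i,S_i,\theta_0,\gamma_0)]=\mathbb{E}[g(X_i^{\ast},S_i,\theta_0)]+o(n^{-1/2})$, which equals $o(n^{-1/2})$ by~\eqref{eq:g descr moment}.

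I expect the main obstacle to be the bookkeeping in Step 2: making the nested substitution rigorous — tracking which remainder terms are genuinely $o(n^{-1/2})$ once multiplied by factors of order $\tau_n^{\ge 2}$, controlling the (finite) number of substitution rounds, and recognizing that the triangular coefficient system generated by the reindexing is solved exactly by the $\gamma_{0k}$ of~\eqref{eq: gammas}. The secondary, purely analytic point is in Step 1: obtaining a Taylor remainder of order $\mathbb{E}[|\varepsilon_i|^{K-k+1}]$ uniformly in $i$ requires the domination/integrability conditions on the high-order $x$-derivatives of $g$ furnished by Assumption~\ref{ass: Moment function}, together with the requirement $L\ge K+1$ in Assumption~\ref{ass:MME}(ii) so that every moment of $\varepsilon_i$ that appears is $O(\sigma_\varepsilon^{m})$.
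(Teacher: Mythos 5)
Your proposal is correct and follows essentially the same route as the paper's proof: expand $g$ and each derivative $g_x^{(k)}$ around $X_i^{\ast}$, exploit independence and $\mathbb{E}[\varepsilon_i]=0$, and use the triangular structure of the resulting system so that the recursively defined $\gamma_{0}$ of equation~(\ref{eq: gammas}) cancels all bias terms through order $\tau_n^{K}$, leaving a remainder of order $\tau_n^{K+1}=o(n^{-1/2})$ (the paper solves the triangular system $B_n a_n=c_n$ and then reads off the recursion, whereas you verify the recursion directly — the same argument in reverse). One small repair: Assumption~\ref{ass: Moment function} does not assume $g_x^{(K+1)}$ exists, so the Taylor remainders in your Step 1 cannot be justified via a $(K+1)$-th derivative; instead each expansion must stop at $g_x^{(K)}$ and the remainder $\bigl(g_x^{(K)}(\tilde X_i,S_i,\theta_0)-g_x^{(K)}(X_i^{\ast},S_i,\theta_0)\bigr)\varepsilon_i^{K-k}$ is bounded using the Lipschitz-type condition~(\ref{eq: lipschitz moment}) with the envelopes $b_1,b_2$ (exactly as in the paper), which still yields your claimed order $O(\tau_n^{K-k+1})+O(\tau_n^{M-k})$.
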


Lemma~\ref{lem: corrected moment} implies that the corrected moment
conditions $\psi $ are valid and can potentially be used to jointly estimate
parameters $\theta _{0}$ and $\gamma _{0}$. The total number of parameters
to be estimated is now $\dim \left( \theta \right) +K-1$. Thus, joint
estimation of $\theta _{0}$ and $\gamma _{0}$ requires that $\dim \left(
\psi \right) =\dim \left( g\right) \geq \dim \left( \theta \right) +K-1$,
i.e., that the original moment conditions $g$ include sufficiently many
overidentifying restrictions. For example, the overidentifying restrictions can be
constructed by using an instrumental variable; we discuss this in more
detail below. %

\begin{remark}
        Construction of the corrected moment conditions $\psi$ requires the original moment function $g(x,s,\theta)$ to have a sufficient number of derivatives with respect to~$x$. Thus, the proposed correction method does not apply to settings with non-differentiable moment functions, for example, those arising in the instrumental variable quantile regression (IVQR).
\end{remark}

\subsection{Measurement Error Robust Moments (MERM) estimator}

The MERM estimator jointly estimates the parameters $\theta _{0}$ and $%
\gamma _{0}$ using moment conditions $\psi $. It is convenient to define the
joint vector of parameters%
\begin{equation*}
\beta \equiv (\theta ^{\prime },\gamma ^{\prime })^{\prime },\ \beta
_{0}\equiv (\theta _{0}^{\prime },\gamma _{0}^{\prime })^{\prime },\text{ }%
\hat{\beta}\equiv (\hat{\theta}^{\prime },\hat{\gamma}^{\prime })^{\prime },
\end{equation*}%
and the parameter space $\mathcal{B}\equiv \Theta \times \Gamma $, where $%
\Theta $ and $\Gamma $ are the parameter spaces for $\theta $ and $\gamma $.
Then, MERM estimator is the GMM\ estimator (\QTR{citealp}{Hansen1982Ecta}):%
\begin{equation}
\hat{\beta}\equiv \argmin_{\beta \in \mathcal{B}}\hat{Q}(\beta ),\qquad \hat{%
Q}(\beta )\equiv \overline{\psi }(\beta )^{\prime }\hat{\Xi}\overline{\psi }%
(\beta ),  \label{eq:MME definition}
\end{equation}%
where $\overline{\psi }(\beta )\equiv n^{-1}\sum_{i=1}^{n}\psi _{i}(\beta )$%
, $\psi _{i}(\beta )\equiv \psi \left( X_{i},S_{i},\beta \right) $, $\hat{\Xi%
}$ is a weighting matrix, and $\hat{Q}(\beta )$ is the standard GMM
objective function.

While Lemma~\ref{lem: corrected moment} establishes validity of the
corrected moment restrictions $\psi $, the MERM estimator also relies on $%
\beta _{0}$ being identified from $\psi $. This requirement is formalized by
the following assumption.

\begin{assumption}[ID]
\emph{(Identification)} %
\namedlabel{ass: ID}{ID}

\begin{enumerate}[(i)]%

\item%
\label{item: local ID} the Jacobian $\Psi ^{\ast }$ has full column rank,
where%
\begin{align*}
\Psi ^{\ast }& \equiv \mathbb{E}\left[ \nabla _{\theta }\psi (X_{i}^{\ast
},S_{i},\theta _{0},0),\nabla _{\gamma }\psi (X_{i}^{\ast },S_{i},\theta
_{0},0)\right] \\
& =\mathbb{E}\left[ \nabla _{\theta }g(X_{i}^{\ast },S_{i},\theta
_{0}),-g_{x}^{(2)}(X_{i}^{\ast },S_{i},\theta _{0}),\dots
,-g_{x}^{(K)}(X_{i}^{\ast },S_{i},\theta _{0})\right] ;
\end{align*}

\item%
\label{item: global ID} $\mathbb{E}\left[ \psi (X_{i}^{\ast },S_{i},\theta
,\gamma )\right] =0$ iff $\theta =\theta _{0}$ and $\gamma =0$.

\end{enumerate}%
\end{assumption}

Assumptions \ref{ass: ID}(\ref{item: local ID}) and (\ref{item: global ID})
are the standard GMM local and global identification conditions applied to
the moment function $\psi (X_{i}^{\ast },S_{i},\theta ,\gamma )$. These are
high-level conditions, which we will return to later in the paper. The moment conditions formulation is
sufficiently general to encompass a wide variety of sources of identification. In Section~\ref{ssec:MERM-ID}, we discuss identification in detail and illustrate the construction of the moment function using an instrumental variable or a second measurement.

\bigskip

Under some additional regularity conditions, estimator $\hat{\beta}$ behaves
as a standard GMM-type estimator: it is $\sqrt{n}$-consistent and
asymptotically normal and unbiased. This result is formalized by the
following theorem.

\begin{theorem}[Asymptotic Normality]
\label{the: asy normality} Suppose that $\{(X_{i}^{\ast },S_{i}^{\prime
},\varepsilon _{i})\}_{i=1}^{n}$ are i.i.d.. Then, under Assumptions \ref%
{ass:MME}, \ref{ass: CME}, \ref{ass: ID}, and \ref{ass: Moment function}-\ref%
{ass: basic conditions} in Appendix~\ref{sec:LargeSample}, 
\begin{equation}
n^{1/2}\Sigma ^{-1/2}(\hat{\beta}-\beta _{0})\overset{d}{\rightarrow }%
N(0,I_{\dim (\beta )}),\text{ where}  \label{eq:betah asy-N}
\end{equation}%
\vspace{-4ex}%
\begin{equation*}
\Sigma \equiv (\Psi ^{\prime }\Xi \Psi )^{-1}\Psi ^{\prime }\Xi \Omega
_{\psi \psi }\Xi \Psi (\Psi ^{\prime }\Xi \Psi )^{-1}.
\end{equation*}
\end{theorem}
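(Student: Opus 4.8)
The plan is to verify the standard GMM regularity conditions for the moment function $\psi$ and the estimator $\hat\beta$, treating the $o(n^{-1/2})$ slack from Lemma~\ref{lem: corrected moment} as the one nonstandard ingredient. The usual GMM argument has three pieces: (a) consistency $\hat\beta \prob \beta_0$; (b) asymptotic normality of the (recentered) sample moments $n^{1/2}\overline\psi(\beta_0)$; (c) the first-order expansion that converts (a) and (b) into the stated result. For (a), I would combine Assumption~\ref{ass: ID}~(\ref{item: global ID}) (global identification of $\beta_0$ from $\mathbb{E}[\psi(X_i^\ast,S_i,\theta,\gamma)]$) with uniform convergence of $\overline\psi(\beta)$ to its population counterpart over the compact parameter space $\mathcal{B}$ — this uniform LLN is the content of the "basic conditions" (Assumption~\ref{ass: basic conditions}) cited from the appendix, which will impose compactness, continuity, and a dominating envelope on $\psi$ and its derivatives. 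A subtle point: the population limit of $\overline\psi(\beta)$ is $\mathbb{E}[\psi(X_i,S_i,\theta,\gamma)]$ (evaluated at the mismeasured $X_i$), not $\mathbb{E}[\psi(X_i^\ast,S_i,\theta,\gamma)]$; but by the same Taylor-expansion argument underlying Lemma~\ref{lem: corrected moment} these differ by $O(\tau_n^{K+1}) = o(n^{-1/2}) = o(1)$ uniformly, so the identification condition stated in terms of $X_i^\ast$ transfers to the actual limit and consistency follows.

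For (b), the key observation is that $n^{1/2}\overline\psi(\beta_0) = n^{1/2}\big(\overline\psi(\beta_0) - \mathbb{E}[\psi(X_i,S_i,\beta_0)]\big) + n^{1/2}\mathbb{E}[\psi(X_i,S_i,\beta_0)]$. The first term is a normalized i.i.d.\ sum of mean-zero terms, to which the Lindeberg–Lévy CLT applies (the envelope/moment conditions in the appendix guarantee a finite variance $\Omega_{\psi\psi}$), giving $N(0,\Omega_{\psi\psi})$. The second term equals $n^{1/2}\,\mathbb{E}[\psi(X_i,S_i,\theta_0,\gamma_0)]$, and by Lemma~\ref{lem: corrected moment} together with the original moment condition~(\ref{eq:g descr moment}) this is $n^{1/2}\cdot o(n^{-1/2}) = o(1)$. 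This is exactly where Assumption~\ref{ass:MME}~(i), $\tau_n = o(n^{-1/(2K+2)})$, earns its keep: it makes the deterministic EIV bias of the corrected moments asymptotically negligible relative to the $n^{-1/2}$ sampling scale, so the recentering introduces no bias in the limit. Hence $n^{1/2}\overline\psi(\beta_0) \dist N(0,\Omega_{\psi\psi})$.

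For (c), I would expand the first-order condition of the GMM minimization. Writing $\overline{\nabla\psi}(\beta)$ for the sample Jacobian, the FOC gives $\overline{\nabla\psi}(\hat\beta)^{\prime}\hat\Xi\,\overline\psi(\hat\beta) = 0$; a mean-value expansion of $\overline\psi(\hat\beta)$ around $\beta_0$, consistency of $\hat\beta$, $\hat\Xi \prob \Xi$, and a uniform LLN for the Jacobian (so that $\overline{\nabla\psi}$ evaluated at points between $\hat\beta$ and $\beta_0$ converges to $\Psi$) yield the standard sandwich: $n^{1/2}(\hat\beta - \beta_0) = -(\Psi^{\prime}\Xi\Psi)^{-1}\Psi^{\prime}\Xi\, n^{1/2}\overline\psi(\beta_0) + o_p(1)$. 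Here $\Psi$ is the probability limit of the sample Jacobian; one checks, again via a Taylor expansion in $\varepsilon_i$ controlled by Assumption~\ref{ass:MME}, that $\Psi$ differs from $\Psi^\ast$ of Assumption~\ref{ass: ID}~(\ref{item: local ID}) only by $o(1)$ terms, so $\Psi$ has full column rank and the inverse exists. Combining with (b) and the continuous mapping theorem delivers $n^{1/2}(\hat\beta-\beta_0) \dist N(0,\Sigma)$ with $\Sigma$ as stated, and premultiplying by $\Sigma^{-1/2}$ gives~(\ref{eq:betah asy-N}).

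The main obstacle — and the place where genuine care is needed rather than boilerplate — is the bookkeeping that shows all three occurrences of "replace $X_i^\ast$ by $X_i$" (in the identification limit, in the recentering bias, and in the Jacobian limit) produce errors that are $o(1)$ or $o(n^{-1/2})$ as appropriate, uniformly in $\beta$ over $\mathcal{B}$. This requires that the regularity conditions in the appendix supply integrable envelopes not just for $\psi$ but for enough $x$-derivatives of $g$ (through order $K+1$ or so) to justify the Taylor expansions with uniform remainder control, and that these expansions be compatible with the moment bounds in Assumption~\ref{ass:MME}~(ii). Once that uniform-expansion machinery is in place, the rest is the textbook GMM argument of \citet{Hansen1982Ecta}, and the nonstandard drifting-$\tau_n$ feature enters only through the single, clean fact that $n^{1/2}\mathbb{E}[\psi(X_i,S_i,\beta_0)] = o(1)$.
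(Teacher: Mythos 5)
Your overall architecture matches the paper's proof (consistency via a uniform LLN plus identification, a CLT for $n^{1/2}\overline{\psi}(\beta_{0})$, then the FOC expansion with the Jacobian converging to $\Psi^{\ast}$), but step (b) contains a genuine gap, and your closing sentence asserts the opposite of what actually needs work. Under the drifting asymptotics the distribution of $\varepsilon_{i}$ — hence of $X_{i}$, of $\psi_{i}(\beta_{0n})$, and of $\gamma_{0n}$ itself — changes with $n$: the summands form a triangular array and $\Omega_{\psi\psi}$ is an $n$-dependent matrix, so ``Lindeberg--L\'evy applies and gives $N(0,\Omega_{\psi\psi})$'' is not a valid step as written, and the claim that the drifting-$\tau_{n}$ feature ``enters only through $n^{1/2}\mathbb{E}[\psi(X_{i},S_{i},\beta_{0})]=o(1)$'' is precisely what is false: it also contaminates the stochastic part. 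The paper's Lemma~\ref{lem: moment normality} handles this by expanding $\psi_{i}(\theta_{0},\gamma_{0n})$ around $X_{i}^{\ast}$ and showing that every $\varepsilon_{i}$-dependent term is $o_{p}(1)$ after $\sqrt{n}$-scaling, so that $n^{1/2}\overline{\psi}(\beta_{0n})=n^{-1/2}\sum_{i}g(X_{i}^{\ast},S_{i},\theta_{0})+o_{p}(1)$, a sum whose distribution does not depend on $n$, with fixed limit variance $\Omega_{gg}^{\ast}$; the theorem then follows only after additionally proving $\Psi\rightarrow\Psi^{\ast}$ \emph{and} $\Omega_{\psi\psi}\rightarrow\Omega_{gg}^{\ast}$, i.e.\ $\Sigma\rightarrow\Sigma^{\ast}$, which is what legitimizes normalizing by the feasible $\Sigma^{-1/2}$ in equation~(\ref{eq:betah asy-N}). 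Your sketch supplies the Jacobian half of this ($\Psi=\Psi^{\ast}+o(1)$) but omits the variance half, which is not boilerplate: it requires expanding $g$ and the $g_{x}^{(k)}$ around $X_{i}^{\ast}$ again and invoking the moment bounds of Assumption~\ref{ass:MME}.

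Two smaller inaccuracies. In step (a), the uniform gap between $\mathbb{E}[\psi(X_{i},S_{i},\theta,\gamma)]$ and $\mathbb{E}[\psi(X_{i}^{\ast},S_{i},\theta,\gamma)]$ at generic $(\theta,\gamma)$ is only $O(\tau_{n}^{2})=o(1)$, not $O(\tau_{n}^{K+1})$ — the $o(n^{-1/2})$ rate of Lemma~\ref{lem: corrected moment} holds only at $(\theta_{0},\gamma_{0n})$; since consistency needs only $o(1)$, this is an overstatement rather than a failure (the paper instead proves $\sup_{\beta}\Vert\overline{\psi}(\beta)-\psi^{\ast}(\beta)\Vert=o_{p}(1)$ directly and works with the starred limit). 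Second, the limiting criterion is uniquely minimized at $(\theta_{0},0_{K-1})$, not at $(\theta_{0},\gamma_{0n})$, because Assumption~\ref{ass: ID}(\ref{item: global ID}) fixes $\gamma=0$; concluding $\hat{\gamma}-\gamma_{0n}\rightarrow_{p}0$ (needed before expanding around $\beta_{0n}$) requires the extra observation that $\gamma_{0n}\rightarrow0$ with $0_{K-1}$ interior to $\Gamma$, which the paper's Lemma~\ref{lem: consistency} states explicitly and your sketch leaves implicit.
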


Theorem \ref{the: asy normality} shows that the MERM\ approach addresses the
EIV bias problem, and in particular provides a $\sqrt{n}$-consistent
asymptotically normal and unbiased estimator $\hat{\theta}$, which can be
used to conduct inference about the true parameters $\theta _{0}$. The
asymptotic variance $\Sigma $ takes the standard sandwich\ form, with $\Psi
\equiv \mathbb{E}\left[ \nabla _{\beta }\psi _{i}(\beta _{0})\right] $, $%
\Omega _{\psi \psi }\equiv \mathbb{E}\left[ \psi _{i}\left( \beta
_{0}\right) \psi _{i}^{\prime }\left( \beta _{0}\right) \right] $, and $\hat{%
\Xi}\rightarrow _{p}\Xi $. %

\begin{remark}
Notice that the bias of naive estimators (such as a GMM estimator based on
the original moment conditions) is $O(\tau _{n}^{2})$, so their rate of
convergence is $O_{p}(\tau _{n}^{2}+n^{-1/2})$. The bias dominates sampling
variability and naive estimators are not $\sqrt{n}$-consistent unless $\tau
_{n}=O(n^{-1/4})$, i.e., unless the magnitude of the measurement error is
rather small. At the same time, the MERM estimator remains $\sqrt{n}$%
-consistent for much larger values of $\tau _{n}$, up to $\tau
_{n}=O(n^{-1/(2K+2)})$, whereas the rate of convergence of naive estimators
is only $O_{p}(n^{-1/(K+1)})$ in this case.
\end{remark}

Once the corrected moment condition $\psi $ is constructed, estimation of
and inference about parameters $\beta _{0}$ can be performed using any
standard software package for GMM\ estimation. In other words, the proposed
estimator can be simply treated as a standard GMM estimator based on the
corrected moment conditions $\psi $, and the conventional standard errors,
tests, and confidence intervals are valid.

In addition to the estimation of the parameters $\theta_{0}$, researchers are often interested in average
effects of the form $\lambda_{0}\equiv \mathbb{E}\left[ \lambda\left(
X_{i}^{\ast },S_{i},\theta _{0}\right) \right] $. For instance, in the NLR model, one may be
interested in the average partial effect of $x$ (i.e., $\lambda _{0}\equiv  \mathbb{E}\left[ \nabla _{x}\rho \left( X_{i}^{\ast },S_{i},\theta_{0}\right) \right] $) or another covariate. 
The naive average partial effect estimator $\hat{\lambda}_{\text{Naive}}\equiv \frac{1}{n} \sum_{i=1}^{n}\lambda(X_{i},S_{i},\hat{\theta})$ suffers from the EIV bias,
unless function $\lambda$ is linear in $X_{i}^{\ast }$. Instead, one should use
estimates $\hat{\gamma}$ to construct the bias-corrected estimator $\hat{%
\lambda}_{\text{MERM}}\equiv \frac{1}{n}\sum_{i=1}^{n}\left\{
\lambda(X_{i},S_{i},\hat{\theta})-\sum_{k=2}^{K}\hat{\gamma}%
_{k}\lambda_{x}^{(k)}(X_{i},S_{i},\hat{\theta})\right\} $.%

\begin{remark}
The standard $J$-test of overidentifying restrictions remains valid in the
MERM settings, and can be used to check the model specification. The $J$%
-test jointly tests the following hypotheses: \emph{(i)} $K$ is sufficiently
large to correct the EIV\ bias; \emph{(ii)} assumptions on the EIV are
valid; and \emph{(iii)} the original moment conditions $g$ are correctly
specified so equation~(\ref{eq:g descr moment}) holds, i.e., that the
original economic model is correctly specified aside from the presence of
the EIV in $X_{i}$. Thus, if the $J$-test rejects the validity of the corrected moment conditions $\psi$, the researcher might want to \emph{(i)}~consider taking a larger $K$; \emph{(ii)} employ a different correction method; or \emph{(iii)}~consider an alternative specification of the original moments $g$.
\end{remark}

\begin{remark}
Considering larger $K$ allows for $\tau _{n}\ $converging to zero at a slower
rate, which in finite samples corresponds to the asymptotics providing
better approximations for larger magnitudes of measurement errors. On the
other hand, taking a larger $K$ increases the dimension of the nuisance
parameter $\gamma _{0}$ and thus typically increases the variance of $\hat{%
\theta}$. We consider this issue in more detail and provide a data-driven method for choosing $K$ in Section~\ref{ssec:adaptive K}.
{}
\end{remark}

\begin{remark}
The MERM framework can be extended to the case of non-classical measurement errors; see \citet{EvdokimovZeleneev2022WP-NPID} for details and a fully nonparametric analysis. In this paper, we focus on the classical measurement errors, developing a practical bias correction approach, which can be easily implemented in a wide range of economic applications. Even when Assumption~\ref{ass: CME} is violated, the deviations from it are often limited in magnitude, so the corrections based on the \ref{ass: CME} assumption remove most of the EIV bias. Thus, in practice, using the estimator designed for classical measurement errors is typically preferable to ignoring mismeasurement altogether.
\end{remark}

\begin{remark}
It is important to note that $\gamma _{0k}\neq \mathbb{E}\left[ \varepsilon
_{i}^{k}\right] \left/ k!\right. $ for $k\geq 4$,\ contrary to what
equation~(\ref{eq:g MME mu}) might suggest. For example, $\gamma
_{04}=\left( \mathbb{E}\left[ \varepsilon _{i}^{4}\right] -6\sigma
_{\varepsilon }^{4}\right) \left/ 24\right. $ is negative for many
distributions, including normal. The reason that
generally $\gamma _{0k}\neq \mathbb{E}\left[ \varepsilon _{i}^{k}\right]
\left/ k!\right. $ is that the estimators of the correction terms themselves
need a correction, which is accounted for by the form of $\gamma _{0k}$.
Since there is a one-to-one relationship between $\gamma _{0}$ and the
moments $\mathbb{E}\left[ \varepsilon _{i}^{\ell }\right] $, parameter space 
$\Gamma $ for $\gamma _{0}$ can incorporate restrictions that the moments
must satisfy (e.g., $\sigma _{\varepsilon }^{2}\geq 0$ and $\mathbb{E}\left[
\varepsilon _{i}^{4}\right] \geq \sigma _{\varepsilon }^{4}$). Such
restrictions can increase the efficiency of the estimator and the power of
tests.
\end{remark}

\begin{remark}
No parametric assumptions are imposed on the distribution of $\varepsilon
_{i}$, i.e. the distribution of $\varepsilon _{i}$ is treated
nonparametrically. The regularity conditions restrict only the magnitude of
the moments of $\varepsilon _{i}$. The approach imposes no restrictions on
the smoothness of the distributions of $X_{i}^{\ast }$ and $\varepsilon _{i}$%
, which are not even required to be continuous. Examples in which this can
be useful include individual wages (whose distributions may have point
masses at round numbers), and allowing the measurement error $\varepsilon
_{i}$ to have a point mass at zero (a fraction of the population may have a
zero measurement or recall error).
\end{remark}

\begin{remark}
The formulas of the derivatives $g_{x}^{(k)}(\cdot )$ are typically easy to
compute analytically or using symbolic algebra software. Alternatively,
these derivatives can be computed using numerical differentiation. Thus, the corrected set of moments can be automatically produced for a
generic moment function $g(\cdot )$ provided by the user.
\end{remark}

\subsection{\label{ssec:MERM-ID}Model Identification:\ Jacobian $\Psi $}

Theorem~\ref{the: asy normality} requires $\beta _{0}$ to be identified and
the Jacobian matrix $\Psi $ to be full rank. Notably, the MERM framework
encompasses many possible sources of identification at once, including
instrumental variables, additional measurements, or nonlinearities of the
functional form. The identifying information is incorporated in the moment
functions. Essentially, our approach first characterizes in what directions
the measurement errors can bias the moment conditions $\mathbb{E}\left[
g\left( X_{i},S_{i},\theta \right) \right] $, and then uses the moments
orthogonal to those directions for identification of $\theta _{0}$. To be
more specific, we will now consider identification when in addition to the
error-laden $X_{i}$ we have either (i) a general instrument $Z_{i}$ or (ii)
a second measurement $Q_{i}$.

\paragraph{Identification Using A General Instrument $Z_{i}$}

Many applications can be formulated as the following conditional moment
restriction:%
\begin{equation}
\mathbb{E}\left[ u\left( X_{i}^{\ast },S_{i},\theta \right) |X_{i}^{\ast }%
\right] =0\text{ iff }\theta =\theta _{0},  \label{eq: eg-ID mom u}
\end{equation}%
for some moment function $u$. For example, consider the nonlinear regression
model $\mathbb{E}\left[ Y_{i}|X_{i}^{\ast }=x\right] =\rho \left( x,\theta
_{0}\right) $, then $u\left( x,y,\theta \right) =\rho \left( x,\theta
\right) -y$.\footnote{%
For simplicity of the exposition, in the expectation in equation~(\ref{eq:
eg-ID mom u}) we only condition on $X_{i}^{\ast }$. The discussion applies
in a straightforward way to the settings with additional correctly measured
variables $W_{i}$ in the conditioning set, i.e., the model $\mathbb{E}\left[
u\left( X_{i}^{\ast },S_{i},\theta _{0}\right) |X_{i}^{\ast },W_{i}\right]
=0 $. For example, in the nonlinear regression example with additional
covariates $W_{i}$ we have $\mathbb{E}\left[ Y_{i}|X_{i}^{\ast }=x,W_{i}=w%
\right] =\rho \left( x,w,\theta _{0}\right) $, so $u\left( x,y,w,\theta
\right) =\rho \left( x,w,\theta \right) -y$.}

In applications, identification of the models with EIV would typically rely
on an instrumental variable $Z_{i}$. Suppose the instrument satisfies the
exclusion restriction $\mathbb{E}\left[ u\left( X_{i}^{\ast },S_{i},\theta
\right) |X_{i}^{\ast },Z_{i}\right] =\mathbb{E}\left[ u\left( X_{i}^{\ast
},S_{i},\theta \right) |X_{i}^{\ast }\right] $, i.e., conditional on the
true $X_{i}^{\ast }$ the instrument has no further effect on the moment
conditions $u$. Consider the moment functions $h\left( x,s,\theta \right)
\equiv u\left( x,s,\theta \right) \otimes \varphi _{X}\left( x\right) $,
where $\varphi _{X}\left( x\right) $ is a vector of functions of $x$, e.g., $%
\varphi _{X}\left( x\right) \equiv \left( 1,x,\ldots ,x^{J}\right) ^{\prime
} $. By the Law of Iterated Expectations, $\mathbb{E}\left[ h\left(
X_{i}^{\ast },S_{i},\theta _{0}\right) |Z_{i}=z\right] =0$ for all $z$.
However, the same expectation with $X_{i}^{\ast }$ replaced by the observed $%
X_{i}$, $\mathbb{E}\left[ h\left( X_{i},S_{i},\theta _{0}\right) |Z_{i}=z%
\right] ${}, will depend on $z$. One way to see this is to consider a Taylor
expansion similar to equation~(\ref{eq:moment expectation for K=2}):{\small 
\begin{equation*}
\mathbb{E}\left[ h\left( X_{i},S_{i},\theta _{0}\right) |Z_{i}=z\right] =%
\underbrace{\mathbb{E}\left[ h\left( X_{i}^{\ast },S_{i},\theta _{0}\right)
|Z_{i}=z\right] }_{=0\text{ by the LIE}}+\gamma _{0}\mathbb{E}\left[
h_{x}^{\left( 2\right) }\left( X_{i}^{\ast },S_{i},\theta _{0}\right)
|Z_{i}=z\right] +O\left( \tau _{n}^{3}\right) ,
\end{equation*}%
}which shows that $\mathbb{E}\left[ h\left( X_{i},S_{i},\theta _{0}\right)
|Z_{i}=z\right] $ is zero for all $z$ (up to a negligible remainder) unless $%
\gamma _{0}\neq 0$, where $\gamma _{0}=\sigma ^{2}/2$. Thus, $\mathbb{E}%
\left[ h\left( X_{i},S_{i},\theta _{0}\right) |Z_{i}=z\right] $ varies with $%
z$ only because of the presence of the measurement error. Intuitively, the
magnitude of this variation then identifies the nuisance parameters $\gamma
_{0}$. Thus, one can rely on the original moment functions of the typical
form $g\left( x,s,\theta \right) =h\left( x,s,\theta \right) \otimes \varphi
_{Z}\left( z\right) $, where $\varphi _{Z}\left( z\right) $ is a vector of
functions of $z$.

The above discussion provides the intuition for identification of the
nonlinear moment condition models with EIV. It is important to note that
identification of general nonlinear moment condition models is a complicated
problem. Even in the settings without measurement errors, it is generally
not possible to give low-level conditions guaranteeing that a specific set
of nonlinear moment conditions identifies the parameter vector. The presence
of EIV\ makes the question of identification even harder.

We attempt to address this concern and make the above intuitions more
precise in two ways. First, in the following subsection we consider a
specific (but frequently employed) kind of an instrument: a second
measurement (possibly non-classical). The specific form of the excluded
variable allows us to provide more transparent identification conditions.
Second, in \cite{EvdokimovZeleneev2022WP-NPID} we study nonparametric
regression model with EIV using the $\tau _{n}\rightarrow 0$ approximation.
We show that the model is identified using an instrument (even a discrete
one), and motivate the MERM approach from a nonparametric perspective.
Finally, since MERM estimator is a standard GMM\ estimator, one can test the
strength of identification of the model parameters, or conduct
identification-robust inference using the standard methods (e.g., 
\QTR{citealp}{%
StockWright2000Ecta,Kleibergen2005Ecta,GuggenbergerSmith2005ET,GuggenbergerRamalhoSmith2012JoE,AndrewsMikusheva2016Ecta-ConditionalFunctionalNuisance,AndrewsI-2016-Ecta-CLC,AndrewsGuggenberger2019QE%
}). %

\paragraph{Identification Using A Second Measurement}

Suppose we observe a second measurement 
\begin{equation*}
Q_{i}=\alpha _{1}X_{i}^{\ast }+\varepsilon _{Q,i},
\end{equation*}
where $\alpha _{1}$ may not be known. Assume that $\alpha _{1}\neq 0$ and $%
\mathbb{E}\left[ \varepsilon _{Q,i}|X_{i}^{\ast },S_{i},\varepsilon _{i}%
\right] =0$. The variance of $\varepsilon _{Q,i}$ does not need to be small.
Note that the measurement error in $Q_{i}$ can be non-classical: $%
Q_{i}-X_{i}^{\ast }$ and $X_{i}^{\ast }$ are correlated unless $\alpha
_{1}=1 $.

Consider the conditional moment restrictions~(\ref{eq: eg-ID mom u}). If $%
X_{i}^{\ast }$ were observed, we could have constructed the unconditional
moments%
\begin{equation*}
\mathbb{E}\left[ h\left( X_{i}^{\ast },S_{i},\theta _{0}\right) \right] =0,%
\text{\qquad }h\left( x,s,\theta \right) \equiv u\left( x,s,\theta \right)
\times \left( 1,x,\ldots ,x^{J}\right) ^{\prime },
\end{equation*}%
for some $J\geq \dim \left( \theta \right) -1$. Suppose that the model is
identified if $X_{i}^{\ast }$ observed, which means that the Jacobian of
these moment conditions has full rank: 
\begin{equation*}
\limfunc{Rk}\left( H^{\ast }\right) =\dim \left( \theta \right) \text{,
where }H^{\ast }\equiv E\left[ \nabla _{\theta }h\left( X_{i}^{\ast
},S_{i},\theta _{0}\right) \right] .
\end{equation*}

To deal with the error-laden $X_{i}$, consider the MERM estimator with $K=2$
based on the following moment function%
\begin{equation}
g\left( x,s,q,\theta \right) \equiv \left( 
\begin{array}{l}
h\left( x,s,\theta \right) \\ 
u\left( x,s,\theta \right) q\times \left( 1,x,\ldots ,x^{J-1}\right)
^{\prime }%
\end{array}%
\right) .  \label{eq:rank-cond:eg g}
\end{equation}%
Here the total number of moments is $m=2J+1$. The additional $J$ moments
added in equation~(\ref{eq:rank-cond:eg g}) use $Q_{i}$, which will allow
identifying $\gamma _{0}=\mathbb{E}[\varepsilon _{i}^{2}]/2$.

It turns out that in these settings there is a simple sufficient condition
for Assumption~\ref{ass: ID}(\ref{item: local ID}) to hold. Appendix~\ref{sec:full rank Psi}
demonstrates that $\Psi ^{\ast }$ will have full rank if%
\begin{equation}
\mathbb{E}\left[ u_{x}^{\left( 1\right) }\left( X_{i}^{\ast },S_{i},\theta
_{0}\right) \times \left( 1,X_{i}^{\ast },\ldots ,\left( X_{i}^{\ast
}\right) ^{J-1}\right) ^{\prime }\right] \neq 0.
\label{eq:eg rk Psi:suff cond}
\end{equation}%
Condition~(\ref{eq:eg rk Psi:suff cond}) has a very simple interpretation:
it essentially it means that $\mathbb{E}\left[ \left. u_{x}^{\left( 1\right)
}\left( X_{i}^{\ast },S_{i},\theta _{0}\right) \right\vert X_{i}^{\ast }%
\right] $ should not be identically zero. For example, in the nonlinear
regression model $u\left( x,y,\theta \right) =\rho \left( x,\theta \right)
-y $, and condition~(\ref{eq:eg rk Psi:suff cond}) is satisfied as long as $%
\mathbb{E}\left[ \rho _{x}^{(1)}\left( X_{i}^{\ast },\theta _{0}\right)
\left( X_{i}^{\ast }\right) ^{j}\right] \neq 0$ for some $j\in \left\{
0,\ldots ,J-1\right\} $.

\bigskip

\section{\label{sec:MC}Numerical Evidence}

\subsection{Comparison with a Semi-Nonparametric Estimation Approach}
\label{ssec: MC S07 comparison}

We compare MERM estimator with the state-of-the-art semiparametric estimator of \citet[henceforth S07]{Schennach2007Ecta} for nonlinear regression models. The Monte Carlo designs are taken from S07, and include a polynomial, rational fraction, and Probit nonlinear regression models. Identification of the model is ensured by the availability of an instrument.
\begin{equation}
    Y_i = \rho(X_i^*,\theta_0) + U_i, \quad X_i^* = \pi_1 Z_i + V_i, \quad X_i = X_i^* + \varepsilon_i, \label{eq: MC DGP}
\end{equation}
$(Z_i,V_i,\varepsilon_i)' \sim N \left((0, 0, 0)', \text{Diag}(1, 1/4, 1/4)\right)$, $\pi_1=1$, and $n = 1000$. The conditional expectation function $\rho$, the true value of the parameter of interest $\theta_0$, and the conditional distribution of the regression error $U_i$ are design-specific and reported in Tables \ref{tab: MC poly}-\ref{tab: MC probit} below. In all designs, $\tau = \sigma_{\varepsilon}/\sigma_{X}^* \approx 0.45$, so the measurement error is ``fairly large'' \citep{Schennach2007Ecta}.

We report simulation results for the MERM estimator considering correction schemes with $K=2$ and $K=4$. The original moment function is
\begin{equation*}
    g(x,y,z,\theta) = (y - \rho(x,\theta)) \varphi(x,z),
\end{equation*}
where we use $\varphi(x,z) = \left(1, x, z, x^2, z^2, x^3, z^3\right)'$ for $K=2$ and $\varphi(x,z) = \left(1, x, z, x^2, xz, z^2, x^3, x^2 z, x z^2, z^3\right)'$ for $K=4$.

The finite sample properties of the MERM estimators (evaluated based on 5,000 replications) are reported in Tables \ref{tab: MC poly}-\ref{tab: MC probit} below. For comparison, we also provide the same statistics for naive estimators (OLS/NLLS) and for the benchmark estimator of S07 (as reported in the original paper). For the polynomial model (Table \ref{tab: MC poly}), both $K=2$ and $K=4$ MERM estimators effectively remove the EIV bias. Component-wise, the MERM estimators perform similarly (for $\theta_2$ and $\theta_4$) or better (for $\theta_1$ and $\theta_3$) compared to the benchmark estimator of S07. For the rational fraction model (Table \ref{tab: MC frac}), both the MERM estimators are vastly superior to the benchmark estimator both in terms of the bias and the standard deviation. For the probit model (Table \ref{tab: MC probit}), the MERM estimator with $K=2$ removes a large fraction of the EIV bias compared to the NLLS estimator. However, the EIV bias remains non-negligible when this simplest correction scheme is used. Employing a higher order correction scheme with $K=4$ completely eliminates the remaining EIV bias, while at the same time having smaller standard deviations (than the benchmark estimator of S07) . Overall, in the considered designs, the MERM estimator with $K=4$ consistently outperforms the benchmark estimator. It also proves to be more effective in removing the EIV bias compared to the $K=2$ estimator, especially in the highly nonlinear settings of the considered probit design.

\begin{table}[h!]
\begin{center}
\begin{footnotesize}
\begin{threeparttable}
\caption{Simulation results for the polynomial model of S07}
\label{tab: MC poly}
      \begin{tabular*}{\textwidth}{@{\extracolsep{\fill}} l  c c c c  c c c c c c c c c }
      \toprule
      \toprule
      & \multicolumn{4}{c}{Bias} & \multicolumn{4}{c}{Std. Dev.} & \multicolumn{5}{c}{RMSE} \\
      \cmidrule(lr){2-5}  \cmidrule(lr){6-9}  \cmidrule(lr){10-14}
      &\textbf{$\theta_1$}&\textbf{$\theta_2$}&\textbf{$\theta_3$}&\textbf{$\theta_4$}&\textbf{$\theta_1$}&\textbf{$\theta_2$}&\textbf{$\theta_3$}&\textbf{$\theta_4$}&\textbf{$\theta_1$}&\textbf{$\theta_2$}&\textbf{$\theta_3$}&\textbf{$\theta_4$}& All \\ \midrule
      OLS &-0.00&-0.43&0.00&0.21&0.07&0.13&0.06&0.04&0.07&0.45&0.06&0.22&0.51\\
      S07 &-0.05&-0.07&-0.02&0.05&0.17&0.19&0.24&0.05&0.17&0.20&0.24&0.07&0.36\\
      $K=2$ &-0.00& 0.10& 0.00& 0.00& 0.10& 0.23& 0.10& 0.08& 0.10& 0.25& 0.10& 0.08& 0.29 \\
      $K=4$ &-0.00& 0.00& 0.00& 0.02& 0.09& 0.21& 0.10& 0.08& 0.09& 0.21& 0.10& 0.08& 0.27 \\
      \bottomrule
    \end{tabular*}
\begin{tablenotes}
\item \footnotesize The DGP is as in \eqref{eq: MC DGP} with $\rho(x,\theta) = \theta_{1} + \theta_{2} x + \theta_{3} x^2 + \theta_{4} x^3$, $\theta_0 = (1,1,0,-0.5)'$, and $U_i \sim N(0,1/4)$. %
\end{tablenotes}
\end{threeparttable}
\end{footnotesize}
\end{center}
\end{table}

\begin{table}[h!]
\begin{center}
\begin{footnotesize}
\begin{threeparttable}
\caption{Simulation results for the rational fraction model of S07}
\label{tab: MC frac}
      \begin{tabular*}{\textwidth}{@{\extracolsep{\fill}} l c c c  c c c  c c c  c }
        \toprule
        \toprule
        & \multicolumn{3}{c}{{Bias}} & \multicolumn{3}{c}{Std. Dev.} & \multicolumn{4}{c}{RMSE} \\
        \cmidrule(lr){2-4}  \cmidrule(lr){5-7}  \cmidrule(lr){8-11}     &\textbf{$\theta_1$}&\textbf{$\theta_2$}&\textbf{$\theta_3$}&\textbf{$\theta_1$}&\textbf{$\theta_2$}&\textbf{$\theta_3$}&\textbf{$\theta_1$}&\textbf{$\theta_2$}&\textbf{$\theta_3$}&{All}\\ \midrule
        OLS &0.339&-0.167&-0.644&0.040&0.020&0.076&0.341&0.168&0.648&0.752\\
        S07 &0.107&0.117&-0.150&0.146&0.139&0.328&0.181&0.182&0.361&0.443\\
        $K=2$ &-0.004&-0.018& 0.014& 0.062& 0.026& 0.139& 0.062& 0.032& 0.139& 0.156\\
        $K=4$ &0.014&-0.002&-0.024& 0.062& 0.031& 0.154& 0.063& 0.031& 0.156& 0.171 \\
        \bottomrule
      \end{tabular*}
\begin{tablenotes}
\item \footnotesize The DGP is as in \eqref{eq: MC DGP} with $\rho(x,\theta) = \theta_{1} + \theta_{2} x + \frac{\theta_3}{(1 + x^2)^{2}}$, $\theta_0 = (1,1,2)'$, and $U_i \sim N(0,1/4)$. %
\end{tablenotes}
\end{threeparttable}
\end{footnotesize}
\end{center}
\end{table}

\begin{table}[h!]
\begin{center}
\begin{footnotesize}
\begin{threeparttable}
\caption{Simulation results for the Probit model of S07}
\label{tab: MC probit}
    \begin{tabular*}{\textwidth}{@{\extracolsep{\fill}} l  c c  c c  c c  c}
      \toprule
      \toprule
      & \multicolumn{2}{c}{Bias} & \multicolumn{2}{c}{Std. Dev.} & \multicolumn{3}{c}{RMSE} \\
      \cmidrule(lr){2-3}  \cmidrule(lr){4-5}  \cmidrule(lr){6-8}
      &\textbf{$\theta_1$}&\textbf{$\theta_2$}&\textbf{$\theta_1$}&\textbf{$\theta_2$}&\textbf{$\theta_1$}&\textbf{$\theta_2$}&{All}\\
      \midrule
      NLLS& 0.38&-0.97&0.06&0.08&0.39&0.98&1.05\\
      S07 & 0.05&-0.06&0.39&0.53&0.39&0.53&0.69\\
      $K=2$ & 0.11&-0.31& 0.18& 0.34& 0.21& 0.46& 0.51 \\
      $K=4$ &-0.01&-0.01& 0.23& 0.42& 0.23& 0.42& 0.48 \\
      \bottomrule
    \end{tabular*}
\begin{tablenotes}
\item \footnotesize The DGP is as in \eqref{eq: MC DGP} with $\rho(x,\theta) = \frac{1}{2} (1 + \text{erf}(\theta_{1} + \theta_{2} x))$, $\theta_0 = (-1,2)'$, and $U_i = 1 - \rho(X_i^*,\theta_0)$ with probability $\rho(X_i^*,\theta_0)$ and $- \rho(X_i^*,\theta_0)$ otherwise. %
\end{tablenotes}
\end{threeparttable}
\end{footnotesize}
\end{center}
\end{table}

\subsection{\label{ssec: MC Mult Logit} Estimation and Inference in a Multinomial Choice Model}

Consider the standard multinomial logit model, in which an agent chooses between 3 available options. %
For an agent $i$ with characteristics $(X_i^*,W_i)$, the utility of option $j$ is given by
\begin{equation*}
  U_{ij} = \theta_{0j1} X_i^* + \theta_{0j2} W_{ij} + \theta_{0j3} + \epsilon_{ij} \quad \text{for } j \in \{1,2\},
\end{equation*}
and $U_{i0} = \epsilon_{i0}$ for the outside option $j = 0$, where $\epsilon_{ij}$ are i.i.d. (across $i$ and $j$) draws from a standard type-1 extreme value distribution. The researcher observes $\left\{(X_i,W_i,Y_{i1},Y_{i2},Y_{i0})\right\}_{i=1}^n$, where $Y_{ij}$ is a binary variable indicating whether agent $i$ chooses option $j$, i.e. $Y_{ij} = 1$ if and only if $j = \argmax_{j' \in \{0,1,2\}} U_{i j'}$. In addition,
\begin{equation*}
  X_i^* = V_{i1} Z_i + V_{i0}, \quad X_i = X_i^* + \varepsilon_i, \quad W_{ij} = {\rho} X_i^*/\sigma_X^* + \sqrt{1  - \rho^2} \nu_{ij},
\end{equation*}
and $\left(V_{i1}, V_{i0}, Z_i,\varepsilon_i,\nu_{i1},\nu_{i2}\right)' \sim N\left((1,0,0,0,0,0)', \text{Diag}(\sigma_{V1}^2,\sigma_{V0}^2, \sigma_Z^2, \sigma_\varepsilon^2, \sigma_\nu^2, \sigma_\nu^2)\right)$. In all of the designs, we fix $(\theta_{011},\theta_{012},\theta_{013},\theta_{021},\theta_{022},\theta_{023},\rho,\sigma_{V1}^2,\sigma_{V0}^2,\sigma_Z^2,\sigma_\nu^2) = (1,0,0,0,0,0,0.7,1/2,1/2,1,1)$ and $n=2000$. We consider $\tau = \sigma_{\varepsilon}/\sigma_{X^*} \in \{1/4, 1/2, 3/4\}$. Setting $\sigma_{V1}=0$ would correspond to the additive control variable model. We omit such simulation results for brevity.

Similarly to Section \ref{ssec: MC S07 comparison}, we report results for the MERM estimators with $K=2$ and $K=4$ based on the following original moment function
\begin{align*}
     g (x,w,y,z,\theta) &= \left(\left(y_1 - p_1(x,w,\theta) \right) \varphi_1(x,z,w)', \left(y_2 - p_2(x,w,\theta)\right) \varphi_2(x,z,w)' \right)',\\
       p_j(x,w,\theta) %
  &= \frac{\exp(\theta_{j1} x + \theta_{j2} w_{j} + \theta_{j3})}{1 + \exp(\theta_{11} x + \theta_{12} w_{1} + \theta_{13}) + \exp(\theta_{21} x + \theta_{22} w_{2} + \theta_{23})},
\end{align*}
where $\varphi_j(x,z,w) = \left(1, x, z, x^2, z^2, x^3, z^3, w_j\right)'$ for $K=2$ and $\varphi_j(x,z,w) = \left(1, x, z, x^2, xz, z^2, x^3, x^2 z, x z^2, z^3, w_j\right)'$ for $K=4$.

We report the results on estimation and inference on the partial derivatives of the conditional choice probabilities $p_j(x,w_1,w_2)$ with respect to $x$, $w_1$, and $w_2$, evaluated at the population means.

Table \ref{tab: MC mult logit 2} reports the finite sample biases, standard deviations, and RMSE of the MERM estimators, as well as the sizes of the corresponding t-tests with nominal size of 5\%. To illustrate the importance of dealing with EIV, we also report the same statistics for the standard (naive) MLE estimator that ignores the presence of the measurement errors. %

In all designs, the MLE estimator is biased, and the corresponding t-tests over-reject.
Note that failing to account for the EIV in the mismeasured variable $X_i^*$ generally biases estimators of all of the parameter, including those corresponding to the correctly measured variables $W_{i1}$ and $W_{i2}$. In particular, the t-tests may falsely reject true null hypotheses $\partial p_j/\partial w_\ell=0$ up to nearly $100\%$ of the time. %

The MERM estimator with $K=2$ removes a large fraction of the EIV bias in all of the designs. While this proves to be enough to achieve accurate size control when the magnitude of the measurement error  is moderate ($\tau = 1/4$), the remaining EIV bias may still result in size distortions of the t-tests with larger measurement errors, especially $\tau = 3/4$. Using the higher order correction scheme with $K=4$ effectively removes the EIV bias in all of the simulation designs for all of the parameters.
Remarkably, the corresponding finite sample null rejection probabilities remain close to the nominal $5\%$ rate even when the standard deviation of the measurement error is as large as $75\%$ of the standard deviation of the mismeasured $X^*$.

To further check the limits of applicability of our method, we also consider larger values of $\tau \in \{1,3/2,2\}$. The numerical results analogous to the ones reported in Table~\ref{tab: MC mult logit 2} are provided in Table~\ref{tab: MC mult logit 2, large tau} in Appendix~\ref{ssec: large tau MC}. Specifically, we find that inference results based on the correction scheme with $K=4$ remain accurate even for $\tau = 1$. Unsurprisingly, inference becomes less reliable for bigger $\tau = 3/2$ and $\tau = 2$. At the same time, while the $K=4$ correction scheme fails to entirely eliminate the EIV bias in these designs, it still removes a big fraction of the bias and greatly improves on MLE in terms of the RMSE. Thus, while inference based on our estimator might be less reliable in extreme settings when the measurement error overwhelms the signal, the MERM estimator using $K=4$ appears to be sufficiently accurate over a wide range of $\tau$. %

\afterpage{%
\clearpage%

\begin{landscape}

\begin{table}[h!]
\begin{footnotesize}
\begin{threeparttable}
\caption{Simulation results for the multinomial logit model}%
\label{tab: MC mult logit 2}
    \begin{tabular}{ l c c c c| c c c c | c c c c}
    \toprule
    \toprule
      & \multicolumn{4}{c}{MLE} & \multicolumn{4}{c}{$K=2$} & \multicolumn{4}{c}{$K=4$}\\
      \cmidrule(lr){2-5}  \cmidrule(lr){6-9} \cmidrule(lr){10-13}
      &{\text{bias}, $10^{-2}$}&{\text{std}, $10^{-2}$}&{\text{rmse}, $10^{-2}$}&{\text{size}}&{\text{bias}, $10^{-2}$}&{\text{std}, $10^{-2}$}&{\text{rmse}, $10^{-2}$}&{\text{size}}&{\text{bias}, $10^{-2}$}&{\text{std}, $10^{-2}$}&{\text{rmse}, $10^{-2}$}&{\text{size}}\\
      \midrule
      \multicolumn{13}{c}{$\tau = 1/4$}\\
      \midrule
       $\partial p_1/\partial x$&-3.24&1.36&3.51&66.98&0.74&2.63&2.74&4.30&1.13&2.73&2.95&7.86\\
     $\partial p_1/\partial w_1$&2.32&1.64&2.84&30.74&-0.11&2.30&2.30&4.82&-0.31&2.29&2.31&6.54\\
     $\partial p_1/\partial w_2$&0.48&0.75&0.90&9.40&-0.04&0.87&0.87&4.82&-0.08&0.87&0.87&5.36\\
       $\partial p_2/\partial x$&1.96&1.17&2.28&39.44&-0.40&1.88&1.92&4.72&-0.63&1.93&2.03&6.74\\
     $\partial p_2/\partial w_1$&-1.16&0.82&1.42&30.66&0.06&1.15&1.15&4.84&0.15&1.15&1.16&6.48\\
     $\partial p_2/\partial w_2$&-0.96&1.50&1.78&9.48&0.09&1.74&1.74&4.98&0.17&1.74&1.75&5.44\\
       $\partial p_0/\partial x$&1.28&1.01&1.63&25.28&-0.34&1.43&1.47&5.08&-0.50&1.48&1.56&7.36\\
     $\partial p_0/\partial w_1$&-1.16&0.82&1.42&30.60&0.06&1.15&1.15&4.82&0.15&1.15&1.16&6.46\\
     $\partial p_0/\partial w_2$&0.48&0.74&0.88&9.46&-0.05&0.87&0.87&4.90&-0.09&0.87&0.88&5.32\\
      \midrule
      \multicolumn{13}{c}{$\tau = 1/2$}\\
      \midrule
       $\partial p_1/\partial x$&-8.97&1.09&9.04&100.00&-1.69&2.60&3.10&9.84&0.97&2.89&3.05&6.04\\
     $\partial p_1/\partial w_1$&6.44&1.53&6.62&98.96&1.39&2.44&2.81&12.14&-0.21&2.41&2.42&5.54\\
     $\partial p_1/\partial w_2$&1.28&0.72&1.47&42.54&0.29&0.90&0.94&7.00&-0.06&0.92&0.92&5.00\\
       $\partial p_2/\partial x$&5.22&0.97&5.31&99.98&1.05&1.89&2.16&10.16&-0.53&2.08&2.15&6.14\\
     $\partial p_2/\partial w_1$&-3.21&0.77&3.30&98.96&-0.69&1.22&1.40&12.10&0.10&1.21&1.21&5.52\\
     $\partial p_2/\partial w_2$&-2.52&1.41&2.88&42.82&-0.56&1.79&1.87&7.20&0.13&1.84&1.84&4.98\\
       $\partial p_0/\partial x$&3.75&0.86&3.85&98.78&0.64&1.45&1.59&8.18&-0.44&1.59&1.65&6.50\\
     $\partial p_0/\partial w_1$&-3.23&0.78&3.32&98.96&-0.70&1.22&1.41&12.06&0.10&1.21&1.21&5.48\\
     $\partial p_0/\partial w_2$&1.23&0.69&1.41&42.90&0.28&0.89&0.93&7.20&-0.07&0.92&0.92&4.94\\
     \midrule
      \multicolumn{13}{c}{$\tau = 3/4$}\\
      \midrule      
       $\partial p_1/\partial x$&-13.35&0.86&13.38&100.00&-6.83&2.64&7.32&80.32&0.71&3.22&3.29&4.74\\
     $\partial p_1/\partial w_1$&9.69&1.45&9.80&100.00&4.95&2.65&5.61&65.52&0.01&2.62&2.62&5.34\\
     $\partial p_1/\partial w_2$&1.81&0.69&1.94&75.30&1.01&0.89&1.35&26.08&-0.01&0.98&0.98&5.24\\
       $\partial p_2/\partial x$&7.48&0.79&7.52&100.00&4.06&1.83&4.45&68.82&-0.37&2.32&2.35&5.74\\
     $\partial p_2/\partial w_1$&-4.83&0.73&4.88&100.00&-2.47&1.32&2.81&65.46&-0.01&1.31&1.31&5.28\\
     $\partial p_2/\partial w_2$&-3.51&1.33&3.76&75.60&-1.99&1.76&2.66&26.38&0.03&1.97&1.97&5.32\\
       $\partial p_0/\partial x$&5.87&0.73&5.92&100.00&2.77&1.47&3.14&56.28&-0.34&1.77&1.80&5.82\\
     $\partial p_0/\partial w_1$&-4.87&0.75&4.93&100.00&-2.48&1.33&2.81&65.40&-0.01&1.31&1.31&5.32\\
     $\partial p_0/\partial w_2$&1.70&0.64&1.82&75.76&0.98&0.87&1.31&26.50&-0.02&0.99&0.99&5.30\\
        \bottomrule
    \end{tabular}
\begin{tablenotes}
\item \footnotesize This table reports the simulated finite sample bias, standard deviation, RMSE, and size of the MLE and the MERM estimators and the corresponding t-tests for the partial derivatives $\partial p_j(x,w,\theta_0)/\partial x$, $\partial p_j(x,w,\theta_0)/\partial w_1$, $\partial p_j(x,w,\theta_0)/\partial w_2$ for $j \in \{1,2,0\}$ evaluated at the population mean. The true values of the marginal effects are $(\partial p_1/\partial x, \partial p_2/\partial x, \partial p_0/\partial x) = ( 0.222,   -0.111,   -0.111)$ and zeros for the rest. The results are based on 5,000 replications.%
\end{tablenotes}
\end{threeparttable}
\end{footnotesize}
\end{table}

\end{landscape}

}

\subsection{\label{sec:empirical}Empirical Illustration: Choice of Transportation Mode}
In this section, we illustrate the finite sample properties of the MERM estimator in the context of a classical multinomial choice application: choice of transportation mode (e.g., \citealp{McFadden1974JPubE}).

To calibrate the numerical experiment, we use the ModeCanada dataset, a survey of business travelers for the Montreal-Toronto corridor. We focus on the subset of travelers choosing between train, air, and car ($n = 2769$), and estimate the conditional logit model with traveler $i$'s utilities given in the table below.
\begin{center}
  \begin{tabular}{l|c}
      Mode & Utility \\
      \midrule
      Air   & $U_{i1} = \theta_{01} \thinspace Income_{i}^* + \theta_{02} \thinspace Urban_{i} + \theta_{03} + \theta_{07} \thinspace Price_{i1} + \theta_{08} \thinspace InTime_{i1} + \epsilon_{i1}$\\
      Car   & $U_{i2} = \theta_{04} \thinspace Income_{i}^* + \theta_{05} \thinspace Urban_{i} + \theta_{06} + \theta_{07} \thinspace Price_{i2} + \theta_{08} \thinspace InTime_{i2} + \epsilon_{i2}$\\
      Train & $U_{i0} = \theta_{07} \thinspace Price_{i0} + \theta_{08} \thinspace InTime_{i0} + \epsilon_{i0}$
  \end{tabular}
\end{center}

To generate the simulated samples, we randomly draw covariates from their joint empirical distribution. To generate the simulated outcomes, we draw $\epsilon_{ij}$ from the standard type-I extreme value distribution.
The true value of $\theta_0$ is set to be the MLE estimate based on the original dataset. More details about this numerical experiment are given in Appendix \ref{sec:Empirical Details}.

To evaluate the performance of the MERM estimator in these settings,
we generate mismeasured $Income_i = Income_i^* + \varepsilon_i$. We focus on the individual income because it is often mismeasured.%
We report the results for $\tau = \sigma_\varepsilon/\sigma_{Income^*} \in \{1/4, 1/2, 3/4\}$.

Table \ref{tab: MC emprical} reports the simulation results for the (naive) MLE estimator and for the MERM estimators with $K=2$ and $K=4$. We focus on estimation of and inference on the income elasticities (evaluated at the population mean of the covariates).
The MLE estimator is considerably biased for $\tau \in \{1/2, 3/4\}$, which results in substantial size distortions of the MLE based t-tests. The MERM estimator with $K=4$ effectively eliminates the EIV bias and the corresponding t-tests provide accurate size control in all of the considered designs. 
The estimator with $K=2$ is more precise, while successfully removing the EIV bias for $\tau\le1/2$.

Overall, the MERM estimators perform well in the considered empirical context, providing a basis for estimation and inference even for quite large values of $\tau$.

\begin{table}[h!]
\begin{footnotesize}
\begin{threeparttable}
\caption{Simulation results for the empirically calibrated conditional logit model}
\label{tab: MC emprical}
    \begin{tabular}{ l c c c c| c c c c | c c c c}
    \toprule
    \toprule
      & \multicolumn{4}{c}{MLE} & \multicolumn{4}{c}{$K=2$} & \multicolumn{4}{c}{$K=4$}\\
      \cmidrule(lr){2-5}  \cmidrule(lr){6-9} \cmidrule(lr){10-13}
      &{\text{bias}}&{\text{std}}&{\text{rmse}}&{\text{size}}&{\text{bias}}&{\text{std}}&{\text{rmse}}&{\text{size}}&{\text{bias}}&{\text{std}}&{\text{rmse}}&{\text{size}}\\
      \midrule
      \multicolumn{13}{c}{$\tau = 1/4$}\\
      \midrule
     $\partial\ln p_1/\partial\ln I$ &-0.07&0.12&0.14&9.00&0.01&0.14&0.14&5.68&0.02&0.19&0.19&7.02\\
     $\partial\ln p_2/\partial\ln I$ &0.03&0.07&0.08&5.84&-0.00&0.08&0.08&5.68&-0.01&0.10&0.10&6.40\\
     $\partial\ln p_0/\partial\ln I$ &0.05&0.13&0.13&6.10&0.00&0.14&0.14&5.42&-0.00&0.17&0.17&7.66\\
      \midrule
      \multicolumn{13}{c}{$\tau = 1/2$}\\
      \midrule
     $\partial\ln p_1/\partial\ln I$ &-0.24&0.11&0.27&61.84&-0.05&0.14&0.15&6.96&0.02&0.21&0.21&6.16\\
     $\partial\ln p_2/\partial\ln I$ &0.09&0.07&0.11&24.76&0.02&0.09&0.09&5.96&-0.01&0.10&0.10&6.16\\
     $\partial\ln p_0/\partial\ln I$ &0.16&0.12&0.20&25.36&0.04&0.15&0.15&6.06&-0.00&0.18&0.18&6.86\\
     \midrule
      \multicolumn{13}{c}{$\tau = 3/4$}\\
      \midrule      
     $\partial\ln p_1/\partial\ln I$ &-0.43&0.09&0.44&99.50&-0.19&0.14&0.24&27.46&0.02&0.22&0.22&5.84\\
     $\partial\ln p_2/\partial\ln I$ &0.16&0.06&0.17&71.88&0.07&0.08&0.11&13.78&-0.01&0.11&0.11&6.32\\
     $\partial\ln p_0/\partial\ln I$ &0.29&0.11&0.31&73.20&0.12&0.15&0.19&13.40&0.00&0.19&0.19&6.32\\
        \bottomrule
    \end{tabular}
\begin{tablenotes}
\item \footnotesize This table reports the simulated finite sample bias, standard deviation, RMSE, and size of the MLE and the MERM estimators and the corresponding t-tests for the income elasticities $\partial \ln p_j(I,w,\theta_0)/\partial \ln I$, $j \in \{1,2,0\}$, evaluated at the population mean. The true values of the income elasticities are $(\partial \ln p_1/\partial \ln I, \partial \ln p_2/\partial \ln I, \partial \ln p_0/\partial \ln I) = (1.11,   -0.39,   -0.82)$. The results are based on 5,000 replications.%
\end{tablenotes}
\end{threeparttable}
\end{footnotesize}
\end{table}

\section{\label{sec:Extensions}Extensions}

\subsection{\label{ssec:adaptive K}Data-driven choice of $K$}
Making an appropriate choice of the expansion order $K$ is important for the estimation procedure. 
One has to be cautious not to take $K$ too small, as this may result in an estimator that only partially removes the EIV bias. On the other hand, picking a larger $K$ than needed might inflate standard errors and result in less powerful inference.%

In this section, we address this issue by providing a %
data-dependent procedure for selecting~$K$. %
We demonstrate that our procedure has desirable theoretical properties. We also find that the procedure has good finite sample properties in a set of Monte Carlo simulation experiments across different values of $\tau$ and sample sizes.

\bigskip
Consider two alternative values of the expansion order: $L$ and $K$, where $2 \leq L<K$. In practice, even-order biases tend to dominate, so to reduce the set of choices, it is useful to focus on even values of the expansion orders $L$ and $K$. For example, one would typically be interested in choosing between $L=2$ and $K=4$.

Let $\hat \beta_L$ and $\hat \beta_K$ denote the corresponding MERM estimators. The estimator $\hat \beta_L$ should be preferred as having smaller asymptotic variance %
provided that its remaining EIV bias is negligible relative to its standard error. Otherwise, the more conservative $\hat \beta_K$ should be used instead.

Note that Lemma~\ref{lem: corrected moment} suggests that the remaining asymptotic bias of $\hat \beta_L$ (due to the additional terms accounted for when the expansion of higher order $K$ is used) is given by (up to an $o(n^{-1/2})$ remainder) %
  \begin{equation*}
    \text{AsyB}(\hat \beta_L) \equiv B \sum_{k=L+1}^{K}\gamma _{0k}\mathbb{E}%
    [g_{x}^{(k)}(X_{i},S_{i},\theta _{0})] ,\quad
    B \equiv - (\Psi ^{\prime }\Xi \Psi )^{-1}\Psi ^{\prime }\Xi,
  \end{equation*}
  where matrix $B$ is based on the moments used for estimation of $\hat \beta_L$.%

 Importantly, $\gamma _{0k}=O\left( \sigma_{\varepsilon}^{k}\right) $, and hence for $k>2$ we can estimate a bound on $\sqrt{n}\gamma _{0k}$ sufficiently quickly to provide a valid procedure for choosing $K$. To this end, we first estimate the model using the larger $K$. Let $\hat \beta_K = (\hat \theta', \hat \gamma')'$ and $\hat \sigma_\varepsilon^2 \equiv 2 \hat \gamma_{2}$, where we dropped the additional subscripts $K$ for notation simplicity. Then, we can estimate $\sigma_\varepsilon^K$ by $\hat \sigma_\varepsilon^K \equiv (\hat \sigma_\varepsilon^2)^{K/2}$. Using $\hat \sigma_\varepsilon^2 = \sigma_\varepsilon^2 + O_p (n^{-1/2})$, in the appendix we show that  %
\begin{equation}
    \sqrt{n}\left( \hat{\sigma}_\varepsilon^{K}-\sigma_\varepsilon ^{K}\right) =O_{p}\left( \sigma_\varepsilon
    ^{K-2}+n^{-\left( K-2\right) /4}\right) =o_{p}\left( 1\right). \label{eq:estn sigK}
\end{equation}

Next, consider a sequence $\varkappa _{n}\rightarrow 0$, which we will specify
precisely later, and let
\begin{equation}
    \delta _{n}=1\left\{ \max_{1\leq \ell \leq \dim (\beta )}\left\vert \hat{%
    \Sigma}_{\ell \ell }^{-1/2}\sqrt{n}\hat{\sigma}_\varepsilon^{K} \hat B_{\ell \cdot} 
    \overline{g}_{x}^{(K)}(\hat{\theta}) \right\vert \leq c_K \varkappa _{n}\right\}
    ,  \label{eq:crit choice of K}
\end{equation}%
where $\hat \Sigma$ and $\hat B$ are consistent estimators of the asymptotic variance of $\hat \beta_L$ and of $B$, with $\hat \Sigma_{\ell \ell}$ denoting its $\ell$-th diagonal element, $\hat B_{\ell \cdot}$ denoting its $\ell$-th row, ${\overline g_x^{(K)} (\hat \theta) \equiv n^{-1} \sum_{i=1}^n g^{(K)}(X_i,S_i,\hat \theta)}$, and $c_K > 0$ is a constant that we will calibrate below after we state the main theoretical result of this section.

If $\delta_n = 1$, the researchers should select $\hat \beta_L$. Otherwise, $\hat \beta_K$ should be used. The following lemma demonstrates that the proposed selection procedure has desirable theoretical properties.

\begin{lemma}
\label{lem:choice-of-K}Suppose the hypotheses of Theorem 2 hold for some even $%
K\geq 4$, and consider $L$ satisfying $2 \leq L<K$. Suppose $\varkappa
_{n}n^{\left( K-L-1\right) /\left( 2L+2\right) }\rightarrow 0$. Then 
$\sqrt{n}\text{AsyB}(\hat \beta_L) \delta _{n}=o_{p}\left(1\right)$.

Moreover, consider $\varkappa _{n} = n^{-\left( K-L-1\right) /\left(
2L+2\right) }\left( \ln n\right) ^{-a}$ for any $a>0$. Then the criterion is
consistent, in the sense that if $\tau_n =o( n^{-\frac{1}{2L+2}%
-\epsilon }) $ for any $\epsilon >0$, the criterion will choose $\hat \beta_L$ with probability approaching one.
\end{lemma}

The first part of Lemma~\ref{lem:choice-of-K} shows that, provided that $\varkappa_n$ goes to zero sufficiently fast, $\hat \beta_L$ is selected (i.e., $\delta_n = 1$) only when its asymptotic bias is negligible. Next, note that $\hat \beta_L$ is asymptotically unbiased as long as $\tau_n = ( n^{-\frac{1}{2L+2}})$. The second part of the lemma shows that, for the suggested choices of $\varkappa_n$, the criterion is non-vacuous, i.e., that it does select the MERM estimator with a smaller expansion order $L$ when this is appropriate. Typically, one would pick $L=K-2$, so the lemma suggests taking $\varkappa
_{n} = n^{-1/\left( 2K-2\right) }\left( \ln n\right) ^{-a}$ in this case.

In practice, it is important to pick an appropriate constant $c_K$ used in the construction of $\delta_n$ in equation~\eqref{eq:crit choice of K}. When constructing $\delta_n$, we used $\abs{\gamma_{0K}} \propto \sigma_\varepsilon^K$, which motivates choosing $c_K = \sigma_\varepsilon^K / \abs{\gamma_{0K}}$. It is convenient to use a rule-of-thumb
approach, using a reference distribution to determine $c_K$. It turns
out that the normal distribution is not only convenient, but also sufficiently
conservative (note that the bigger $\gamma_{0K}$ is, the smaller $c_K$ is, resulting in a more conservative selection procedure). For example, suppose $K=4$, and consider using
Student's $t(\nu )$ distribution as the reference distribution. Then, for $\nu \geq 5$, the biggest $\abs{\gamma_{04}}/\sigma_\varepsilon^4$ and the smallest $c_4$ correspond to $\nu = \infty$ matching the normal distribution.\footnote{Note that Student's $t(\nu )$ distribution does not have a finite $5$-th moment $\nu \leq 5$.}

Since for the normal distribution we have $\gamma_{0K} = \sigma_\varepsilon^K / K!!$ for even $K$, we recommend using $c_K = K!!$ in equation~\eqref{eq:crit choice of K}. Finally, while we recommend using normal distribution as the reference distribution, we also stress that the results of Lemma~\ref{lem:choice-of-K} hold even if the measurement error is not normal or if it is skewed.

We summarize the proposed procedure in the following suggested algorithm.

\begin{algorithm}%
\label{alg: choose K}
\leavevmode
\begin{enumerate}
\item Pick a large enough even $K\geq 4$ and compute $\hat \beta_K = (\hat \theta', \hat \gamma')'$ and $\hat{\sigma}_\varepsilon^{2} = 2 \hat \gamma_{2}$.

\item Compute $\delta _{n}$ in \eqref{eq:crit choice of K} with $L=K-2$, $c_K = K !!$, and $\varkappa
_{n}=\left( n\ln n\right) ^{-1/\left( 2K-2\right) }$.%

\item Pick the length of expansion $L = K-2$ if $\delta _{n}=1$, otherwise keep the initial $K$.
\end{enumerate}

\smallskip

This procedure can be iterated if desired.
\end{algorithm}

\smallskip

To illustrate the performance of the algorithm provided above, we revisit the numerical experiment considered in Section~\ref{ssec: MC Mult Logit}. We focus on choosing between $K=4$ and $L=2$, and, as in Section~\ref{ssec: MC Mult Logit}, we report results for $n=2000$ in Table~\ref{tab: adaptive K mult logit 2000} below. To ensure that the proposed algorithm performs well in a variety of sample sizes, we also consider $n=1000$ and $n=4000$ and report the corresponding results in Tables~\ref{tab: adaptive K mult logit 1000}~and~\ref{tab: adaptive K mult logit 4000} in Appendix \ref{ssec: additional adaptive MCs}. 

We report the finite sample bias and RMSE for the naive MLE estimator, as well as for the MERM estimators using $K=2$ and $K=4$, and for the adaptive MERM estimator using data-driven $K$ following Algorithm \ref{alg: choose K}. Notice that for the considered sample sizes, $K=2$ is preferred when $\tau = 1/4$, and $K=4$ is preferred when $\tau = 3/4$. We find that in both of these regimes and for all sample sizes, the adaptive estimator using data-driven $K$ is essentially equivalent to the preferred estimators, i.e., our procedure selects the appropriate $K$. Interestingly, in the intermediate regime with $\tau = 1/2$, the adaptive estimator also has the smallest bias among the considered estimators, coming at the cost of a slightly bigger RMSE compared to the MERM estimator using $K=4$. Thus, the considered numerical experiment suggests that our algorithm has good finite sample properties supporting the findings of Lemma~\ref{lem:choice-of-K}.

\begin{table}[h!]
\begin{scriptsize}
\begin{threeparttable}
\caption{Choice of $K$ simulation results for the multinomial logit model, $n=2000$}%
\label{tab: adaptive K mult logit 2000}
    \begin{tabular}{ l c c |  c c |  c c |  c c}
    \toprule
    \toprule
      & \multicolumn{2}{c}{MLE} & \multicolumn{2}{c}{$K=2$} & \multicolumn{2}{c}{$K=4$} & \multicolumn{2}{c}{data-driven $K$}\\
      \cmidrule(lr){2-3}  \cmidrule(lr){4-5} \cmidrule(lr){6-7} \cmidrule(lr){8-9}
      &{\text{bias}, $10^{-2}$}&{\text{rmse}, $10^{-2}$}&{\text{bias}, $10^{-2}$}&{\text{rmse}, $10^{-2}$}&{\text{bias}, $10^{-2}$}&{\text{rmse}, $10^{-2}$}&{\text{bias}, $10^{-2}$}&{\text{rmse}, $10^{-2}$}\\
      \midrule
      \multicolumn{9}{c}{$\tau = 1/4$}\\
      \midrule
       $\partial p_1/\partial x$&-3.24&3.51&0.74&2.74&1.13&2.95&0.75&2.75\\
     $\partial p_1/\partial w_1$&2.32&2.84&-0.11&2.30&-0.31&2.31&-0.11&2.30\\
     $\partial p_1/\partial w_2$&0.48&0.90&-0.04&0.87&-0.08&0.87&-0.04&0.87\\
       $\partial p_2/\partial x$&1.96&2.28&-0.40&1.92&-0.63&2.03&-0.41&1.93\\
     $\partial p_2/\partial w_1$&-1.16&1.42&0.06&1.15&0.15&1.16&0.06&1.15\\
     $\partial p_2/\partial w_2$&-0.96&1.78&0.09&1.74&0.17&1.75&0.09&1.74\\
       $\partial p_0/\partial x$&1.28&1.63&-0.34&1.47&-0.50&1.56&-0.34&1.48\\
     $\partial p_0/\partial w_1$&-1.16&1.42&0.06&1.15&0.15&1.16&0.06&1.15\\
     $\partial p_0/\partial w_2$&0.48&0.88&-0.05&0.87&-0.09&0.88&-0.05&0.87\\
      \midrule
      \multicolumn{9}{c}{$\tau = 1/2$}\\
      \midrule
       $\partial p_1/\partial x$&-8.97&9.04&-1.69&3.10&0.97&3.05&0.88&3.15\\
     $\partial p_1/\partial w_1$&6.44&6.62&1.39&2.81&-0.21&2.42&-0.15&2.48\\
     $\partial p_1/\partial w_2$&1.28&1.47&0.29&0.94&-0.06&0.92&-0.05&0.93\\
       $\partial p_2/\partial x$&5.22&5.31&1.05&2.16&-0.53&2.15&-0.48&2.20\\
     $\partial p_2/\partial w_1$&-3.21&3.30&-0.69&1.40&0.10&1.21&0.08&1.24\\
     $\partial p_2/\partial w_2$&-2.52&2.88&-0.56&1.87&0.13&1.84&0.11&1.85\\
       $\partial p_0/\partial x$&3.75&3.85&0.64&1.59&-0.44&1.65&-0.40&1.68\\
     $\partial p_0/\partial w_1$&-3.23&3.32&-0.70&1.41&0.10&1.21&0.08&1.24\\
     $\partial p_0/\partial w_2$&1.23&1.41&0.28&0.93&-0.07&0.92&-0.06&0.93\\
     \midrule
      \multicolumn{9}{c}{$\tau = 3/4$}\\
      \midrule      
       $\partial p_1/\partial x$&-13.35&13.38&-6.83&7.32&0.71&3.29&0.71&3.29\\
     $\partial p_1/\partial w_1$&9.69&9.80&4.95&5.61&0.01&2.62&0.01&2.62\\
     $\partial p_1/\partial w_2$&1.81&1.94&1.01&1.35&-0.01&0.98&-0.01&0.98\\
       $\partial p_2/\partial x$&7.48&7.52&4.06&4.45&-0.37&2.35&-0.37&2.35\\
     $\partial p_2/\partial w_1$&-4.83&4.88&-2.47&2.81&-0.01&1.31&-0.01&1.31\\
     $\partial p_2/\partial w_2$&-3.51&3.76&-1.99&2.66&0.03&1.97&0.03&1.97\\
       $\partial p_0/\partial x$&5.87&5.92&2.77&3.14&-0.34&1.80&-0.34&1.80\\
     $\partial p_0/\partial w_1$&-4.87&4.93&-2.48&2.81&-0.01&1.31&-0.01&1.31\\
     $\partial p_0/\partial w_2$&1.70&1.82&0.98&1.31&-0.02&0.99&-0.02&0.99\\
        \bottomrule
    \end{tabular}
\begin{tablenotes}
\item \scriptsize This table reports the simulated finite sample bias and RMSE of the MLE and the MERM estimators for the partial derivatives $\partial p_j(x,w,\theta_0)/\partial x$, $\partial p_j(x,w,\theta_0)/\partial w_1$, $\partial p_j(x,w,\theta_0)/\partial w_2$ for $j \in \{1,2,0\}$ evaluated at the population mean. The true values of the marginal effects are $(\partial p_1/\partial x, \partial p_2/\partial x, \partial p_0/\partial x) = ( 0.222,   -0.111,   -0.111)$ and zeros for the rest. The results are based on 5,000 replications.%
\end{tablenotes}
\end{threeparttable}
\end{scriptsize}
\end{table}

\subsection{\label{sec:Multivariate}Multiple Mismeasured Variables}

It is easy to use the MERM framework to deal with multiple mismeasured variables. This is useful in many applications, including not only settings with multiple mismeasured covariates, but also settings with serially correlated measurement errors, settings where repeated measurements are available, and panel data models. Using the MERM approach is particularly advantageous in such applications, since it avoids nonparametric estimation of multivariate unobserved distributions.

Suppose $X_i^*$, $\varepsilon_{i}$, and $X_i$ are $d \times 1$ vectors. Let $\tau_n \equiv \max_{j\le d} \sigma_{\varepsilon_j}/\sigma_{X^*_j}$, where $\sigma_{\varepsilon_j}$ and $\sigma_{X^*_j}$ denote the standard deviations of the $j$-th components of $\varepsilon_i$ and $X_i^*$, so $\ex{\abs{\varepsilon_{ij}}^k} = O(\tau_n^k)$ for $k\in\{1,\ldots,K\}$.

For a $d \times 1$ vector of non-negative integers $\kappa = (\kappa_1,\ldots,\kappa_d)  \in \mathbb Z_+^{d}$, let
\begin{equation*}
    \partial_\kappa \equiv \frac{\partial^{\abs{\kappa}}}{\partial {x_{1}}^{\kappa_1} \ldots \partial {x_d}^{\kappa_d}},\quad \text{where } \abs{\kappa} \equiv \sum_{j=1}^d \kappa_j.
\end{equation*}
Also, for a positive integer $k$, let $\mathcal K_k = \{\kappa \in \mathbb Z_+^d: \abs{\kappa} = k\}$. Then, we consider the following corrected moment function
\begin{equation*}
    \psi(x,s,\theta,\gamma) = g(x,s,\theta) - \sum_{k = 2}^{K} \sum_{\kappa \in \mathcal K_k } \gamma_\kappa \partial_{\kappa} g(x,s,\theta),
\end{equation*}
where, with some abuse of notation, $\gamma$ is a collection of all $\gamma_\kappa$ with $\kappa \in \mathcal K_k$ and $k \in \{2, \dots, K\}$.

Under mild smoothness conditions %
\begin{equation*}
    \ex{\psi(X_i,S_i,\theta_0,\gamma_{0})} = \ex{g(X_i^*,S_i,\theta_0)} + O(\tau_n^{K+1}) = o(n^{-1/2}),
\end{equation*}
where the second equality holds provided that $O(\tau_n^{K+1}) = o(n^{-1/2})$. Similarly to the scalar case, components of $\gamma_{0}$ are determined by the moments of $\varepsilon_i$. Specifically, let $\mu_{\kappa} \equiv \ex{\varepsilon_{i 1}^{\kappa_1} \dots \varepsilon_{i d}^{\kappa_d}}$, then
\begin{equation}
    \label{eq: kappa 2 and 3}
    \gamma_{0 \kappa} = \frac{\mu_{\kappa}}{\kappa!}, \quad \text{for } \kappa \in \{\mathcal K_2, \mathcal K_3\}, 
\end{equation}
where $\kappa! \equiv \kappa_1! \ldots \kappa_d!$.
For $\abs{\kappa} \geq 4$, the coefficients can be computed by the following formulas. For example, for $\kappa \in \mathcal K_4$, let $\mathcal K_{2, \kappa} = \{\tilde \kappa \in \mathcal K_2: \kappa - \tilde \kappa \in \mathcal K_2 \}$. Then,
\begin{equation*}
    \gamma_{0 \kappa} = \frac{\mu_\kappa}{\kappa!}   - \sum_{\tilde \kappa \in \mathcal K_{2, \kappa}} \frac{\mu_{\kappa - \tilde \kappa}}{(\kappa - \tilde \kappa)!} \gamma_{0 \tilde \kappa} , \quad \text{for } \kappa \in \mathcal K_4.
\end{equation*}
More generally, for $\kappa \in \mathcal K_{k}$ with $k \geq 4$, let $\mathcal K_{\ell, \kappa} = \{\tilde \kappa \in \mathcal K_\ell, \kappa - \tilde \kappa \in \mathcal K_{\abs{\kappa} - \ell} \}$ for $\ell \leq \abs{\kappa} - 2$. Then,
\begin{equation*}
    \gamma_{0 \kappa} = \frac{\mu_\kappa}{\kappa!}  - \sum_{\ell=2}^{k-2} \sum_{\tilde \kappa \in \mathcal K_{\ell, \kappa}} \frac{\mu_{\kappa - \tilde \kappa}}{(\kappa - \tilde \kappa)!} \gamma_{0 \tilde \kappa}.
\end{equation*}

\begin{example*}[Bivariate $X$, $K = 4$]\hfill \\
    Suppose $X$ is bivariate (i.e., $d = 2$) and $K = 4$. For $\kappa \in \mathcal K_2 = \{(2,0),(1,1),(0,2)\}$ and $\kappa \in \mathcal K_3 = \{(3,0), (2,1), (1,2), (0,3)\}$, $\gamma_{0 \kappa}$ is given by \eqref{eq: kappa 2 and 3}. For $\kappa \in \mathcal K_4$, $\gamma_{0 \kappa}$ is given by
    \begin{center}
    \begin{tabular}{c|c}
        $\kappa$ & $\gamma_{0 \kappa}$ \\
        \midrule
        (4,0) & $\left(\e[{\varepsilon_{i1}^4}] - 6 \e[{\varepsilon_{i1}^2}]^2\right)/24$ \\
        (3,1) & $\left(\e[{\varepsilon_{i1}^3 \varepsilon_{i2}}] - 6 \e[{\varepsilon_{i1}^2}] \e[{\varepsilon_{i1} \varepsilon_{i2}}]\right)/6$ \\
        (2,2) & $\left(\e[{\varepsilon_{i1}^2 \varepsilon_{i2}^2}] - 2 \e[{\varepsilon_{i1}^2}] \e[{\varepsilon_{i2}^2}] - 4 \e[{\varepsilon_{i1} \varepsilon_{i2}}]^2\right)/4$  \\
        (1,3) & $\left(\e[{\varepsilon_{i1} \varepsilon_{i2}^3}] - 6 \e[{\varepsilon_{i2}^2}] \e[{\varepsilon_{i1} \varepsilon_{i2}}]\right)/6$ \\
        (0,4) & $\left(\e[{\varepsilon_{i2}^4}] - 6 \e[{\varepsilon_{i2}^2}]^2\right)/24$
    \end{tabular}
    \end{center}
    If in addition measurement errors $\varepsilon_{i1}$ and $\varepsilon_{i2}$ are independent, $\gamma_{0 \kappa} = 0$ for $\kappa \in \{(1,1),(2,1),(1,2),(3,1),(1,3)\}$. In this case, the total number of the nuisance parameters to be estimated is 6. \QEDA %
\end{example*}

\begin{singlespacing}
\bibliographystyle{ecta}
\bibliography{library_out}%
\end{singlespacing}

\begin{appendices}%
\numberwithin{equation}{section} \numberwithin{remark}{section} %
\numberwithin{theorem}{section}

\section{\label{sec:LargeSample}Regularity Conditions}

\noindent{\bf Notation.}
Let $\mathcal X \subseteq \mathbb R$ be some closed convex set containing the union of the supports of $X_i^*$ and $X_i$, and $\mathcal S = \supp{S_i}$.

\begin{assumption}
    \emph{(Moment function)}
    \label{ass: Moment function}
    Suppose that the moment restrictions \eqref{eq:g descr moment} are satisfied and the following conditions hold:
    \begin{enumerate}[(i)]
        \item \label{item: lipschitz moment} For all $s \in \mathcal S$ and $\theta \in \Theta$, $g_x^{(K)}(x,s,\theta)$ exists and is continuous on $\mathcal X$. Moreover, there exist functions $b_1,b_2: \mathcal X \times \mathcal S \times \Theta \rightarrow \mathbb R_+$ and integer $M \geqslant K+1$ such that for all $x,x' \in \mathcal{X}$, $s \in \mathcal{S}$, and $\theta \in \Theta$,
        \begin{equation}
            \label{eq: lipschitz moment}
            \norm{g^{(K)}_x (x',s,\theta) - g^{(K)}_x (x,s,\theta)}\leqslant b_1 (x,s,\theta) |x' - x|+b_2(x,s,\theta) |x' - x|^{M-K};
        \end{equation}
        \item \label{item: M moments} Assumption \ref{ass:MME} holds with $L \geqslant M$;
        \item \label{item: basic moments} $\ex{g_x^{(k)}(X_i^*,S_i,\theta_0)}$, $k \in \{1,\dots,K\}$, and $\ex{b_j (X_i^*,S_i,\theta_0)}$, $j \in \{1,2\}$, exist and are bounded.
    \end{enumerate}        
\end{assumption}

Assumption \ref{ass: Moment function} allows us to bound the remainder of the Taylor expansion of $g(X_i,S_i,\theta)$ around $X_i^*$ by a polynomial in $\abs{X_i - X_i^*} = \abs{\varepsilon_{i}}$. Combined with Assumption \ref{ass:MME} (which bounds the moments of $\varepsilon_{i}$), it ensures that this remainder is $o(n^{-1/2})$, which is crucial for establishing validity of the corrected moment function $\psi$ (Lemma \ref{lem: corrected moment}).

Notice that if $\mathcal X$ is compact, condition \eqref{eq: lipschitz moment} is satisfied if $g_x^{(K+1)}(x,s,\theta)$ is bounded on $\mathcal X$ (for all $s \in \mathcal S$ and $\theta \in \Theta$). If $\mathcal X$  is unbounded, condition \eqref{eq: lipschitz moment} is satisfied if for some $J$, such that $K < J \leqslant M$, $\sup_{x \in \mathcal X} \norm{g_x^{(J)}(x,s,\theta)} \leqslant B(s,\theta)$ for some function $B(s,\theta)$. Also notice that condition \eqref{eq: lipschitz moment} is stronger than the standard Lipschitz continuity because in applications $\norm{g_x^{(K)}(x,s,\theta)}$ may behave like a polynomial in $x$ for large $x$.

\begin{assumption}    
    \emph{(Parameter space)}
    \label{ass: Parameter space}
    \leavevmode
    \begin{enumerate}[(i)]
        \item \label{item: PS} $\Theta \subset \mathbb{R}^{\dim(\theta)}$ and $\Gamma \subset \mathbb{R}^{K-1}$ are compact, $\theta_0 \in \inter{\Theta}$ and $\gamma_{0n} \in \Gamma$;
        \item \label{item: interior gamma} $0_{K-1} \in \inter{\Gamma}$.
    \end{enumerate}

\end{assumption}

\begin{assumption}
    \emph{(Regularity and smoothness conditions)}
    \label{ass: basic conditions}
    \leavevmode
    \begin{enumerate}[(i)]
        \item \label{item: lipschitz jacobian} For all $s \in \mathcal{S}$, $G_x^{(K)}(x,s,\theta)$ exists and is continuous on $\mathcal X \times \Theta$; moreover, there exist functions $b_{G1},b_{G2}: \mathcal X \times \mathcal S \times \Theta \rightarrow \mathbb R_+$ and $\delta > 0$
        and for all $x,x' \in \mathcal{X}$, $s \in \mathcal{S}$, and $\theta \in B_\delta(\theta_0)$
        \begin{equation*}
            \norm{G^{(K)}_x (x',s,\theta) - G^{(K)}_x (x,s,\theta)} \leqslant  b_{G1} (x,s,\theta) |x' - x|+b_{G2}(x,s,\theta) |x' - x|^{M-K}
        \end{equation*}

        \item \label{item: Global and Point Moments} $\ex{\norm{g_{x}^{(k)}(X_i^*,S_i,\theta_{0})}^2}$, $\ex{\sup_{\theta \in \Theta} \norm{g_{x}^{(k)}(X_i^*,S_i,\theta)}}$, for $k \in \{0,\dots,K\}$, and $\ex{{b_{j}(X_i^*,S_i,\theta_0)}^2}$, $\ex{\sup_{\theta \in \Theta} {b_{j}(X_i^*,S_i,\theta)}}$, for $j \in \{1,2\}$, are bounded;

        \item \label{item: Jacobian smoothness and Local Moments} for some $\delta > 0$, $\ex{\sup_{\theta \in B_{\delta}(\theta_0)} \norm{G_{x}^{(k)}(X_i^*,S_i,\theta)}}$, for $k \in \{0,\dots, K\}$, and $\ex{\sup_{\theta \in B_{\delta}(\theta_0)} b_{Gj} (X_i^*,S_i,\theta)}$, for $j \in \{1, 2\}$, are bounded;

        \item \label{item: Wmatrix} $\hat \Xi \prob \Xi$, where $\Xi$ is a symmetric positive definite matrix;%

        \item \label{item: 2M moments} Assumption \ref{ass:MME} holds with $L \geqslant 2M$.
    \end{enumerate}
\end{assumption}

Assumptions \ref{ass: Parameter space} and \ref{ass: basic conditions} include basic regularity conditions, which help to ensure $\sqrt{n}$-consistency and asymptotic normality of the suggested estimator $\hat \theta$. Specifically, Assumption \ref{ass: basic conditions}\eqref{item: lipschitz jacobian} is a counterpart of Assumption \ref{ass: Moment function}\eqref{item: lipschitz moment} applied to the Jacobian function. It ensures that the effect of the measurement error on the Jacobian is localized and allows us to establish $G \rightarrow G^*$, so $\Psi \rightarrow \Psi^*$. As a result, local identification of $\theta_0$ and the asymptotic properties of $\hat \theta$ are controlled by $G^*$ (and $\Psi^*$), the Jacobian associated with the correctly measured variables (see Assumption \ref{ass: ID}\eqref{item: local ID}).%


\section{\label{sec:Proof of Lemma 1}Proof of Lemma~\protect\ref{lem:
corrected moment}}

    To stress that in our asymptotic approximation the variance and the higher moments of $\varepsilon_i$ depend on $n$, we will use $\sigma_n^2 \equiv \ex{\varepsilon_i^2}$, $\gamma_{0n} \equiv \gamma_0$.

    Making use of Assumption \ref{ass: Moment function}\eqref{item: lipschitz moment} , we expand $g(X_i,S_i,\theta_0)$ around $X_i^*$ as
    \begin{align}
        g(X_i,S_i,\theta_0) = &g(X_i^*,S_i,\theta_0) + g_x^{(1)}(X_i^*,S_i,\theta_0) \varepsilon_i + \sum_{k=2}^K \frac{1}{k!} g_x^{(k)}(X_i^*,S_i,\theta_0) \varepsilon_i^k \nonumber \\ &+ \frac{1}{K!} \left(g_x^{(K)}(\tilde X_i, S_i, \theta_0) - g_x^{(K)}(X_i^*,S_i,\theta_0)\right) \varepsilon_i^K, \label{eq: A g exp}
    \end{align}
    where $\tilde X_i$ lies between $X_i^*$ and $X_i$ (and hereafter $\tilde X_i$ is allowed to be component specific). Similarly, for $k' \in \{2,\dots,K\}$, we have
    \begin{align}
        g_x^{(k)}(X_i,S_i,\theta_0) = &g_x^{(k)}(X_i^*,S_i,\theta_0) + \sum_{\ell = k+1}^K \frac{1}{(\ell - k)!} g_x^{(\ell)}(X_i^*,S_i,\theta_0) \varepsilon_i^\ell \nonumber \\ &+ \frac{1}{(K-k)!} \left(g_x^{(K)}(\tilde X_{ki}, S_i, \theta_0) - g_x^{(K)}(X_i^*,S_i,\theta_0)\right) \varepsilon_i^{K-k}, \label{eq: A g k expansion}
    \end{align}
    where $\tilde X_{ki}$ lies between $X_i^*$ and $X_i$. Hence, combining these expressions and rearranging the terms, we obtain
    \begin{align}
        \psi(X_i,S_i,\theta_0,\gamma) = &g(X_i,S_i,\theta_0) - \sum_{k=2}^K \gamma_{k} g_x^{(k)} (X_i,S_i,\theta_0) \nonumber\\
        = &g(X_i^*,S_i,\theta_0) + g_x^{(1)}(X_i^*,S_i,\theta_0) \varepsilon_i \nonumber \\ &+ \sum_{k=2}^{K} g_x^{(k)}(X_i^*,S_i,\theta_0) \left( \dfrac{1}{k!} \varepsilon_i^k - \sum_{\ell=2}^k \dfrac{1}{(k-\ell)!} \varepsilon_i^{k-\ell} \gamma_{\ell} \right) \nonumber\\
        &+\dfrac{1}{K!} \left(g_x^{(K)}(\tilde X_{i},S_i,\theta_0) - g_x^{(K)}(X_{i}^*,S_i,\theta_0)\right) \varepsilon_i^K  \nonumber\\
        &- \sum_{k=2}^K \dfrac{\gamma_k}{(K-k)!} \left(g_x^{(K)} (\tilde X_{ki},S_i,\theta_0) - g_x^{(K)} (X_i^*,S_i,\theta_0)\right) \varepsilon_i^{K-k}. \label{eq: A psi expansions}
    \end{align}
    We want to show that for a properly chosen $\gamma = \gamma_{0n}$, $\ex{\psi(X_i,S_i,\theta_0,\gamma_{0n})} = o(n^{-1/2})$. Note that the first two terms in \eqref{eq: A psi expansions} are mean zero, i.e. we have
    \begin{align}
       \ex{g(X_i^*,S_i,\theta_0)} &= 0, \quad \ex{g_x^{(1)}(X_i^*,S_i,\theta_0) \varepsilon_i} = 0, \label{eq: A zero g and g eps}
    \end{align}
    where the latter is guaranteed by Assumptions \ref{ass: CME}.%

    Second, we argue that for a properly chosen $\gamma = \gamma_{0n}$, we have
    \begin{equation}
        \label{eq: A gamma system}
        \ex{\dfrac{1}{k!} \varepsilon_i^k - \sum_{\ell=2}^k \dfrac{1}{(k-\ell)!} \varepsilon_i^{k-\ell} \gamma_{0 \ell n}} = 0,
    \end{equation}
    for all $k \in \{2, \dots, K\}$. Let us reparameterize $\gamma_{0n} = (\gamma_{02n}, \dots, \gamma_{0Kn})'$ using $\gamma_{0kn} = \sigma_n^k a_{kn}$. Then, \eqref{eq: A gamma system} can be rewritten as
    \begin{equation*}
            \e\left[\dfrac{1}{k!} (\varepsilon_i/\sigma_n)^k - \sum_{\ell=2}^k \dfrac{1}{(k-\ell)!} (\varepsilon_i/\sigma_n)^{k-\ell} a_{\ell n}\right] = 0,
    \end{equation*}
    which can also be represented as
    \begin{equation}
        \label{eq: A matrix system}
        B_n a_{n} = c_n
    \end{equation}
    where $a_{n} = (a_{2n}, \dots, a_{Kn})'$, and
    \begin{small}
    \begin{equation*}
        B_n = \begin{bmatrix}
        1 & 0 & \hdots & 0 & 0 \\
        \e[\varepsilon_i/\sigma_n] & 1 & \hdots & 0 & 0 \\
        \vdots & \vdots & \ddots & \vdots & \vdots \\
        \frac{\e[(\varepsilon_i/\sigma_n)^{K-3}]}{(K-3)!}  & \frac{\e[(\varepsilon_i/\sigma_n)^{K-4}]}{(K-4)!} & \hdots & 1 & 0 \\
        \frac{\e[(\varepsilon_i/\sigma_n)^{K-2}]}{(K-2)!}  & \frac{\e[(\varepsilon_i/\sigma_n)^{K-3}]}{(K-3)!} & \hdots & \e[(\varepsilon_i/\sigma_n)] & 1 \\
        \end{bmatrix}, \quad
        c_n = \begin{bmatrix}
        \e [(\varepsilon_i/\sigma_n)^2] / 2! \\
        \e [(\varepsilon_i/\sigma_n)^3] / 3! \\
        \vdots \\
        \e [(\varepsilon_i/\sigma_n)^{K-1}] / (K-1)! \\
        \e [(\varepsilon_i/\sigma_n)^K] / K!
        \end{bmatrix}.
    \end{equation*}
    \end{small}
    Since $B_n$ is invertible, \eqref{eq: A matrix system} has a unique solution $a_n = B_n^{-1} c_n$. Moreover, $a_n$ is bounded since both $B_n^{-1}$ and $c_n$ are bounded (Assumption \ref{ass:MME}). Hence, we conclude that $\eqref{eq: A gamma system}$ has a unique solution $\gamma_{0n} = \left(\sigma_n^2 a_{2n}, \dots, \sigma_n^K a_{Kn}\right)'$. Since \eqref{eq: A gamma system} is satisfied, using Assumption \ref{ass: CME}, we also conclude that
    \begin{equation}
        \label{eq: A proper gamma}
        \ex{\sum_{k=2}^{K} g_x^{(k)}(X_i^*,S_i,\theta_0) \left( \dfrac{1}{k!} \varepsilon_i^k - \sum_{\ell=2}^k \dfrac{1}{(k-\ell)!} \varepsilon_i^{k-\ell} \gamma_{0\ell n} \right)} = 0.
    \end{equation}
    To complete the proof of $\ex{\psi(X_i,S_i,\theta_0,\gamma_{0n})} = o_n(n^{-1/2})$, it is sufficient to show that
    \begin{align}
         \ex{\left(g_x^{(K)}(\tilde X_{i},S_i,\theta_0) - g_x^{(K)}(X_{i}^*,S_i,\theta_0)\right) \varepsilon_i^K} &= o(n^{-1/2}), \label{eq: A delta g_K 1}\\
        \gamma_{0kn} \ex{\left(g_x^{(K)} (\tilde X_{ki},S_i,\theta_0) - g_x^{(K)} (X_i^*,S_i,\theta_0)\right) \varepsilon_i^{K-k}} &= o(n^{-1/2}) \label{eq: A delta g_K 2} %
    \end{align}
    for $k\in\{2, \dots, K\}$. We start with \eqref{eq: A delta g_K 1}. Using Assumption \ref{ass: Moment function}\eqref{item: lipschitz moment}, we obtain
    \begin{equation}
        \label{eq: A delta g bound}
        \norm{\left(g_x^{(K)}(\tilde X_{i},S_i,\theta_0) - g_x^{(K)}(X_{i}^*,S_i,\theta_0)\right) \varepsilon_i^K} \leqslant b_1 (X_i^*,S_i,\theta_0) \abs{\varepsilon_i}^{K+1} + b_2(X_i^*,S_i,\theta_0) \abs{\varepsilon_i}^{M}.
    \end{equation}
    Hence, using Assumption \ref{ass: CME}, and the fact $\abs{\tilde X_i - X_i^*} \leqslant \varepsilon_i$, we get
    \begin{align*}
        &\ex{\left(g_x^{(K)}(\tilde X_{i},S_i,\theta_0) - g_x^{(K)}(X_{i}^*,S_i,\theta_0)\right) \varepsilon_i^K} \\ &\leqslant \sigma_n^{K+1} {\ex{b_1 (X_i^*,S_i,\theta_0)} \ex{\abs{\varepsilon_i/\sigma_n}^{K+1}}} + \sigma_n^{M} {\ex{b_2(X_i^*,S_i,\theta_0)} \ex{\abs{\varepsilon_i/\sigma_n}^{M}}}.
    \end{align*}
    Since (i) the expectations above are bounded (Assumptions \ref{ass:MME}, \ref{ass: Moment function}\eqref{item: M moments}, and \ref{ass: Moment function}\eqref{item: basic moments}) and (ii) $\sigma_n^{K+1} = o(n^{-1/2})$ and $\sigma_n^{M} = o(n^{-1/2})$ (Assumption \ref{ass:MME}), this implies that \eqref{eq: A delta g_K 1} holds. To inspect \eqref{eq: A delta g_K 2}, recall that $\gamma_{0kn} = \sigma_n^k a_{kn}$. As a result, using Assumptions \ref{ass: Moment function}\eqref{item: lipschitz moment} and \ref{ass: CME}, and $\abs{\tilde X_{ki} - X_i^*} \leqslant \varepsilon_i$ again, we also have
    \begin{align*}
        &\gamma_{0kn} \ex{\left(g_x^{(K)} (\tilde X_{ki},S_i,\theta_0) - g_x^{(K)} (X_i^*,S_i,\theta_0)\right) \varepsilon_i^{K-k}} \\ &\leqslant a_{kn} \left(\sigma_n^{K+1} {\ex{b_1 (X_i^*,S_i,\theta_0)} \ex{\abs{\varepsilon_i/\sigma_n}^{K+1-k}}} + \sigma_n^{M} {\ex{b_2(X_i^*,S_i,\theta_0)} \ex{\abs{\varepsilon_i/\sigma_n}^{M-k}}}\right).
    \end{align*}
    Since $a_{kn}$ is bounded, we conclude that $\eqref{eq: A delta g_K 2}$ holds analogously to \eqref{eq: A delta g_K 1}. 

    Combining \eqref{eq: A psi expansions} with \eqref{eq: A zero g and g eps}, and \eqref{eq: A proper gamma}-\eqref{eq: A delta g_K 2}, we conclude that $\ex{\psi(X_i,S_i,\theta_0,\gamma_{0n})} = o(n^{-1/2})$.

    Finally, we want to verify the recursive expressions for the components of $\gamma_{0n}$ using \eqref{eq: A gamma system}. First, $\gamma_{02n} = \ex{\varepsilon_i^2}/2$ and $\gamma_{03n} = \ex{\varepsilon_i^3}/6$ (since $\ex{\varepsilon_i} = 0$). For $k \geqslant 4$, suppose that $\gamma_{0\ell n}$ are known for $\ell \in \{2, \dots, k-1\}$. Then $\gamma_{0kn}$ can be 
    directly computed from \eqref{eq: A gamma system}:
    \begin{equation*}
        \sum_{\ell = 2}^k \frac{\ex{\varepsilon_i^{k - \ell}}}{(k - \ell)!} \gamma_{0 \ell n}   = \frac{\ex{\varepsilon_i^{k}}}{k!}.
    \end{equation*}
    Plugging $\ex{\varepsilon_i} = 0$ and rearranging the terms give the expession in \eqref{eq: gammas}. \hfill Q.E.D.
    %
    %
    %
    %

    %
    %
    %
    %
    %
    %
    %
    %
    %
    %
    %
    %
    %
    %
    %
    %
    %
    %
    %
    %
    %
    %
    %
    %
    %
    %
    %
    %
    %
    %
    %
    %

\section{\label{sec:Proofs for LargeSample}Proof of Theorem \protect\ref%
{the: asy normality}}

\noindent{\bf Notation.} To stress that in our asymptotic approximation the variance and the higher moments of $\varepsilon_i$ depend on $n$, we will use $\sigma_n^2 \equiv \ex{\varepsilon_i^2}$, $\gamma_{0n} \equiv \gamma_0$, and $\beta_{0n} \equiv \beta_0 \equiv (\theta_0',\gamma_{0n}')'$.

All vectors are columns. For some generic parameter vector $\alpha$ and a vector (or matrix) valued function $a(x,s, \alpha )$ and , let $a_{i}(\beta )\equiv a(X_{i},S_{i},\alpha )$, $\overline{a}(\alpha )\equiv n^{-1}\sum_{i=1}^{n}a_{i}(\alpha )$, $a(\alpha )\equiv \mathbb{E}[a_{i}(\alpha)]$. Similarly, we let $a_{i}^*(\alpha)\equiv a(X_{i}^*,S_{i},\alpha )$, $\overline{a}^*(\alpha )\equiv n^{-1}\sum_{i=1}^{n}a_{i}^*(\alpha )$, $a^*(\alpha ) \equiv \mathbb{E}[a_{i}^*(\alpha)]$.

For the true value of the parameter $\alpha_{0}$, we often write $a_i \equiv a(\alpha_0)$, $\overline{a}\equiv \overline{a}(\alpha_0)$, $a\equiv a(\alpha_0)$, $a_i^* \equiv a(\alpha_0)$, $\overline{a}^*\equiv \overline{a}^*(\alpha_0)$, $a^*\equiv a^*(\alpha_0)$.

\subsection{\label{ssec:aux lemmas}Auxiliary lemmas}
\begin{lemma}
    \label{lem: moments UC}
    Suppose that $\{(X_i^*,S_i^{\prime},\varepsilon_i)\}_{i=1}^n$ are i.i.d.. Then, under Assumptions \ref{ass:MME}, \ref{ass: CME}, \ref{ass: Moment function}, \ref{ass: Parameter space}\eqref{item: PS}, and \ref{ass: basic conditions}\eqref{item: lipschitz jacobian}-\eqref{item: Jacobian smoothness and Local Moments}, we have
    \begin{enumerate}[(i)]
        \item
        \begin{equation*}
            \sup_{\theta \in \Theta} \norm{ \overline g_x^{(k)}(\theta) - g_x^{(k)*} (\theta)} = o_p(1)
        \end{equation*}
        and $g_x^{(k)*}(\theta)$ is continuous on $\Theta$ for $k \in \{0, \dots, K\}$;
        \item for some $\delta > 0$,
        \begin{equation*}
            \sup_{\theta \in B_\delta(\theta_0)} \norm{ \overline G_x^{(k)}(\theta) - G_x^{(k)*} (\theta)} = o_p(1),
        \end{equation*}
        and $G_x^{(k)*}(\theta)$ is continuous on $B_\delta(\theta_0)$ for $k \in \{0, \dots, K\}$. 
    \end{enumerate}
\end{lemma}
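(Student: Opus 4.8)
The plan is to prove both parts of Lemma~\ref{lem: moments UC} by the same two-step argument: first replace the observed regressor $X_i$ by the true $X_i^\ast$ in the sample averages, up to an error that is uniform in $\theta$ and $o(1)$ in expectation, and then invoke a standard uniform law of large numbers. I would spell out part~(i); part~(ii) follows by the identical argument with $g_x^{(k)}$ replaced by $G_x^{(k)}$, Assumption~\ref{ass: Moment function}~\eqref{item: lipschitz moment} replaced by Assumption~\ref{ass: basic conditions}~\eqref{item: lipschitz jacobian}, the sup-envelopes of Assumption~\ref{ass: basic conditions}~\eqref{item: Global and Point Moments} replaced by those of Assumption~\ref{ass: basic conditions}~\eqref{item: Jacobian smoothness and Local Moments}, and $\Theta$ replaced by a ball $B_\delta(\theta_0)$ with $\delta$ as in Assumption~\ref{ass: basic conditions}~\eqref{item: Jacobian smoothness and Local Moments}.

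For Step~1, fix $k\in\{0,\dots,K\}$ and Taylor-expand $g_x^{(k)}(X_i,S_i,\theta)$ around $X_i^\ast$, analogously to~\eqref{eq: A g k expansion} in the proof of Lemma~\ref{lem: corrected moment}, controlling the remainder through the growth condition~\eqref{eq: lipschitz moment} and $|\tilde X_{ki}-X_i^\ast|\le|\varepsilon_i|$. This gives a pointwise bound of the form
\[
\norm{g_x^{(k)}(X_i,S_i,\theta)-g_x^{(k)}(X_i^\ast,S_i,\theta)} \le \sum_{\ell=k+1}^{K} C_\ell\,\norm{g_x^{(\ell)}(X_i^\ast,S_i,\theta)}\,|\varepsilon_i|^{\ell-k} + \left(b_1(X_i^\ast,S_i,\theta)|\varepsilon_i|+b_2(X_i^\ast,S_i,\theta)|\varepsilon_i|^{M-K}\right)|\varepsilon_i|^{K-k}
\]
for absolute constants $C_\ell$. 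Taking $\sup_{\theta\in\Theta}$ inside and then expectations, and using that $\varepsilon_i$ is independent of $(X_i^\ast,S_i)$ with $\mathbb{E}[|\varepsilon_i|^\ell]=O(\tau_n^\ell)$ for $\ell\le M$ (Assumptions~\ref{ass: CME}, \ref{ass:MME}, \ref{ass: Moment function}~\eqref{item: M moments}), every term on the right has finite expectation by the sup-envelope bounds of Assumption~\ref{ass: basic conditions}~\eqref{item: Global and Point Moments}; and since $\ell-k\ge1$, $K-k+1\ge1$, and $M-k\ge1$, the whole right-hand side has expectation $o(1)$. This would yield at once (a) $\mathbb{E}[\sup_{\theta\in\Theta}\norm{g_x^{(k)}(X_i,S_i,\theta)}]<\infty$, i.e.\ an integrable $\theta$-envelope for the observed-data integrand, and (b) $\sup_{\theta\in\Theta}\norm{\mathbb{E}[g_x^{(k)}(X_i,S_i,\theta)]-g_x^{(k)\ast}(\theta)}=o(1)$.

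For Step~2, I would note that $\theta\mapsto g_x^{(k)}(X_i,S_i,\theta)$ is continuous on the compact set $\Theta$ (Assumption~\ref{ass: Parameter space}~\eqref{item: PS}), thanks to the $\theta$-differentiability built into Assumption~\ref{ass: basic conditions}~\eqref{item: lipschitz jacobian}, and by Step~1(a) it is dominated by an integrable function free of $\theta$. A standard uniform law of large numbers for i.i.d.\ data then gives $\sup_{\theta\in\Theta}\norm{\overline g_x^{(k)}(\theta)-\mathbb{E}[g_x^{(k)}(X_i,S_i,\theta)]}=o_p(1)$ together with continuity of $\theta\mapsto\mathbb{E}[g_x^{(k)}(X_i,S_i,\theta)]$. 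Combining with Step~1(b) and the triangle inequality gives $\sup_{\theta\in\Theta}\norm{\overline g_x^{(k)}(\theta)-g_x^{(k)\ast}(\theta)}=o_p(1)$, while continuity of $g_x^{(k)\ast}(\theta)=\mathbb{E}[g_x^{(k)}(X_i^\ast,S_i,\theta)]$ follows by dominated convergence from a.s.\ continuity of $\theta\mapsto g_x^{(k)}(X_i^\ast,S_i,\theta)$ and the integrable envelope in Assumption~\ref{ass: basic conditions}~\eqref{item: Global and Point Moments}. Part~(ii) follows verbatim with the substitutions indicated above.

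The main obstacle is Step~1: the regularity assumptions supply envelope and moment bounds only at the \emph{true} regressor $X_i^\ast$, whereas $\overline g_x^{(k)}$ and $\overline G_x^{(k)}$ are averages over the \emph{observed} $X_i$, so one must transport the domination across the measurement error. This is exactly what the polynomial growth bound~\eqref{eq: lipschitz moment}, the thin-tail moment bounds of Assumption~\ref{ass:MME}, and the independence in Assumption~\ref{ass: CME} are designed for; the delicate point is keeping the bound uniform in $\theta$, which is why one has to work with the $\sup_\theta$-envelopes rather than the pointwise (at $\theta_0$) ones used in the proof of Lemma~\ref{lem: corrected moment}. Once that integrable $\theta$-envelope is secured, the uniform LLN and the continuity statements are entirely routine.
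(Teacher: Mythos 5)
Your Step 1 --- transporting the $\sup_{\theta}$ envelopes from $X_i^{\ast}$ to $X_i$ via the expansion \eqref{eq: A g k expansion}, the growth bound \eqref{eq: lipschitz moment}, and the moment conditions in Assumptions \ref{ass:MME}, \ref{ass: CME}, and \ref{ass: Moment function} --- is sound, and is essentially the same computation the paper performs. The gap is in Step 2. You invoke ``a standard uniform law of large numbers for i.i.d.\ data'' for $g_x^{(k)}(X_i,S_i,\theta)$, i.e.\ for the \emph{observed} data. But in this paper's asymptotic framework the distribution of $\varepsilon_i$, and hence of $X_i=X_i^{\ast}+\varepsilon_i$, drifts with $n$ (that is the whole point of $\tau_n\rightarrow 0$; the appendix writes $\sigma_n$, $\gamma_{0n}$ precisely to emphasize this). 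So $\{(X_i,S_i)\}_{i\le n}$ is a triangular array whose per-observation law changes with $n$, and $\theta\mapsto\mathbb{E}[g_x^{(k)}(X_i,S_i,\theta)]$ is an $n$-dependent object. The ULLN you cite (the Newey--McFadden Lemma 2.4 type) is stated for a fixed data-generating distribution: its proof uses an LLN at finitely many points of $\Theta$ and a dominated-convergence continuity-modulus argument, both for one fixed law. Neither step applies as stated to a drifting array; you would need a triangular-array ULLN whose conditions (uniform-in-$n$ integrability of the envelope, a continuity modulus controlled uniformly in $n$) are verified, and you have not supplied that.

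This is exactly what the paper's own decomposition is built to avoid: it writes $\overline g_x^{(k)}(\theta)-g_x^{(k)\ast}(\theta)=\bigl[\overline g_x^{(k)}(\theta)-\overline g_x^{(k)\ast}(\theta)\bigr]+\bigl[\overline g_x^{(k)\ast}(\theta)-g_x^{(k)\ast}(\theta)\bigr]$, applies the standard ULLN only to the latent variables $(X_i^{\ast},S_i)$, whose law does not depend on $n$ (this also delivers continuity of $g_x^{(k)\ast}$), and disposes of the first, drifting bracket at the \emph{sample} level: it is bounded by averages of terms such as $\sup_{\theta\in\Theta}\Vert g_x^{(\ell)}(X_i^{\ast},S_i,\theta)\Vert\,\vert\varepsilon_i\vert^{\ell-k}$ and $\sup_{\theta\in\Theta}b_j(X_i^{\ast},S_i,\theta)\,\vert\varepsilon_i\vert^{M-k}$, each of which is $o_p(1)$ by Markov's inequality, a tool that is insensitive to the drift because only first moments of order $O(\tau_n)$ or smaller are used. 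Your argument can be repaired either by adopting this decomposition --- your Step 1 bound already contains everything needed for the Markov step --- or by proving a triangular-array ULLN; as written, the appeal to the standard i.i.d.\ ULLN on the observed data is the missing piece. (The remaining ingredients, including the verbatim repetition for $G_x^{(k)}$ on $B_\delta(\theta_0)$ with Assumptions \ref{ass: basic conditions} \eqref{item: lipschitz jacobian} and \eqref{item: Jacobian smoothness and Local Moments}, are fine.)
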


\begin{proof}[Proof of Lemma \ref{lem: moments UC}]
    First, we show
    \begin{equation*}
        \sup_{\theta \in \Theta} \norm{ \overline g(\theta) - g^* (\theta)} = o_p(1).
    \end{equation*}
    By the triangle inequality,
    \begin{equation*}
        \sup_{\theta \in \Theta} \norm{ \overline g(\theta) - g^* (\theta)} \leqslant \sup_{\theta \in \Theta} \norm{ \overline g(\theta) - \overline g^* (\theta)} + \sup_{\theta \in \Theta} \norm{ \overline g^*(\theta) - g^* (\theta)}.
    \end{equation*}
    Then, it is sufficient to show that both terms on the right hand side of the inequality above are $o_p(1)$. Expanding $g(X_i,S_i,\theta_0)$ around $X_i^*$ as in \eqref{eq: A g exp} and invoking Assumption \ref{ass: Moment function}\eqref{item: lipschitz moment},
    \begin{align*}
      \sup_{\theta \in \Theta} \norm{\overline g (\theta) - \overline g^* (\theta)}
      = & \sup_{\theta \in \Theta} \norm{\sum_{k=1}^{K-1} \frac{1}{k!} \frac{1}{n} \sumin  g_x^{(k)}(X_i^*,S_i,\theta) \varepsilon_i^k + \dfrac{1}{K!} \frac{1}{n} \sumin  g_x^{(K)}(\tilde X_{i}^*,S_i,\theta) \varepsilon_i^K }  \\
       \leqslant &\sum_{k=1}^{K} \frac{1}{k!} \underbrace{\frac{1}{n} \sumin \sup_{\theta \in \Theta} \norm{g_x^{(k)}(X_i^*,S_i,\theta)} |\varepsilon_i|^k}_{o_p(1)}\\ &+ \dfrac{1}{K!} \underbrace{\frac{1}{n} \sumin \sup_{\theta \in \Theta} b_1(X_i^*,S_i,\theta) |\varepsilon_i|^{K+1}}_{o_p(1)} \\ &+ \dfrac{1}{K!} \underbrace{\frac{1}{n} \sumin \sup_{\theta \in \Theta} b_2(X_i^*,S_i,\theta) |\varepsilon_i|^{M}}_{o_p(1)},
    \end{align*}
    where $\tilde X_{i}$ lies in between of $X_i^*$ and $X_i$. Now observe that all the terms following the inequality sign are $o_p(1)$. Indeed, this is guaranteed by Markov's inequality paired with Assumptions \ref{ass:MME}, \ref{ass: CME}, and \ref{ass: basic conditions}\eqref{item: Global and Point Moments}. Hence, $\sup_{\theta \in \Theta} \norm{\overline g (\theta) - \overline g^* (\theta)} = o_p(1)$, and we are left to show $\sup_{\theta \in \Theta} \norm{ \overline g^*(\theta) - g^* (\theta)} = o_p(1)$. This, in turn, follows from the standard ULLN (e.g., Lemma 2.4 in \citealp{NeweyMcFadden1994HB}), which also ensures continuity of $g^*(\theta)$ on $\Theta$. Hence, we conclude that the assertion of the lemma holds for $g$.

    Applying nearly identical arguments, one can also establish the desired results for $g_x^{(k)}$ for $k \in \{1, \dots, K\}$ and for $G_x^{(k)}$ for $k \in \{0, \dots, K\}$ (for the latter, Assumptions~\ref{ass: basic conditions}\eqref{item: lipschitz jacobian} and \eqref{item: Jacobian smoothness and Local Moments} take the places of Assumptions \ref{ass: Moment function}\eqref{item: lipschitz moment} and \ref{ass: basic conditions}\eqref{item: Global and Point Moments}, respectively).
\end{proof}

\begin{lemma}
    \label{lem: consistency of objects}
    Suppose that the hypotheses of Lemma \ref{lem: moments UC} are satisfied. Then, $g_x^{(k)}  \rightarrow g_x^{(k)*}$ and $G_x^{(k)} \rightarrow G_x^{(k)*}$ for $k \in \{0,\dots,K\}$. Suppose also $\hat \theta \prob \theta_0$. Then, $\overline g_x^{(k)} (\hat \theta) \prob g_x^{(k)*}$ and $\overline G_x^{(k)} (\hat \theta) \prob G_x^{(k)*}$ for $k \in \{0,\dots,K\}$.
\end{lemma}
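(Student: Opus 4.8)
\textbf{Proof proposal for Lemma~\ref{lem: consistency of objects}.}

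The plan is to derive both conclusions from Lemma~\ref{lem: moments UC} together with a standard continuous-mapping-plus-uniform-convergence argument. I will treat the four assertions in the natural order: first the deterministic limits $g_x^{(k)} \rightarrow g_x^{(k)*}$ and $G_x^{(k)} \rightarrow G_x^{(k)*}$, then the random limits $\overline g_x^{(k)}(\hat\theta) \prob g_x^{(k)*}$ and $\overline G_x^{(k)}(\hat\theta) \prob G_x^{(k)*}$.

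For the first part, recall that $g_x^{(k)} \equiv \ex{g_x^{(k)}(X_i,S_i,\theta_0)}$ while $g_x^{(k)*} \equiv \ex{g_x^{(k)}(X_i^*,S_i,\theta_0)}$. I would bound the difference by Taylor-expanding $g_x^{(k)}(X_i,S_i,\theta_0)$ around $X_i^*$ exactly as in equation~\eqref{eq: A g k expansion} in the proof of Lemma~\ref{lem: corrected moment}: the leading term cancels, the intermediate terms are mean zero under Assumption~\ref{ass: CME} (or at worst $O(\tau_n)$) since $\ex{\varepsilon_i|X_i^*,S_i}=0$, and the remainder is controlled by Assumption~\ref{ass: Moment function}~\eqref{item: lipschitz moment} and the moment bounds of Assumption~\ref{ass:MME}, giving $\norm{g_x^{(k)} - g_x^{(k)*}} = O(\tau_n) \rightarrow 0$. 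The same argument applies verbatim to the Jacobian objects, using Assumption~\ref{ass: basic conditions}~\eqref{item: lipschitz jacobian} in place of Assumption~\ref{ass: Moment function}~\eqref{item: lipschitz moment}. (Alternatively, one can simply note that $g_x^{(k)*}(\theta)$ is continuous on $\Theta$ by Lemma~\ref{lem: moments UC}, that $\overline g_x^{(k)}(\theta) \overset{p}{\to} g_x^{(k)}(\theta)$ pointwise, and that $\sup_\theta \norm{\overline g_x^{(k)}(\theta) - g_x^{(k)*}(\theta)} = o_p(1)$ forces $g_x^{(k)}(\theta) = g_x^{(k)*}(\theta)$; but the direct Taylor bound is cleaner and self-contained.)

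For the second part, I would combine the uniform convergence from Lemma~\ref{lem: moments UC} with the consistency $\hat\theta \prob \theta_0$ via the standard fact that if $\sup_{\theta \in \Theta}\norm{\overline a(\theta) - a^*(\theta)} = o_p(1)$, $a^*$ is continuous at $\theta_0$, and $\hat\theta \prob \theta_0$, then $\overline a(\hat\theta) \prob a^*(\theta_0)$. Concretely, write $\norm{\overline g_x^{(k)}(\hat\theta) - g_x^{(k)*}} \leqslant \sup_{\theta\in\Theta}\norm{\overline g_x^{(k)}(\theta) - g_x^{(k)*}(\theta)} + \norm{g_x^{(k)*}(\hat\theta) - g_x^{(k)*}(\theta_0)}$; the first summand is $o_p(1)$ by Lemma~\ref{lem: moments UC}~(i), and the second is $o_p(1)$ by continuity of $g_x^{(k)*}$ and the continuous mapping theorem. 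The Jacobian statement is identical using Lemma~\ref{lem: moments UC}~(ii), except that one works on the neighborhood $B_\delta(\theta_0)$ rather than all of $\Theta$, which is fine because $\hat\theta \prob \theta_0$ implies $\hat\theta \in B_\delta(\theta_0)$ with probability approaching one.

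The only mild subtlety — the ``main obstacle,'' such as it is — is purely bookkeeping: making sure the decomposition for $\overline G_x^{(k)}(\hat\theta)$ is carried out on the shrinking neighborhood $B_\delta(\theta_0)$ where uniform convergence of the Jacobians is available, rather than on $\Theta$, and invoking $\pro{\hat\theta \in B_\delta(\theta_0)} \rightarrow 1$ so that the sup over $B_\delta(\theta_0)$ can legitimately be applied to $\hat\theta$. Otherwise the lemma is a routine consequence of the preceding uniform LLN and the assumed consistency of $\hat\theta$.
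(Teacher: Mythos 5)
Your proposal is correct and takes essentially the same approach as the paper: the population limits $g_x^{(k)}\rightarrow g_x^{(k)*}$ and $G_x^{(k)}\rightarrow G_x^{(k)*}$ are obtained by Taylor-expanding around $X_i^*$ and bounding the remainder via Assumptions \ref{ass: Moment function}~\eqref{item: lipschitz moment} (resp.\ \ref{ass: basic conditions}~\eqref{item: lipschitz jacobian}) and the moment bounds, while the plug-in limits follow from the triangle inequality combining the uniform convergence of Lemma~\ref{lem: moments UC}, continuity of the limit functions, and $\hat\theta\prob\theta_0$, working on $B_\delta(\theta_0)$ with probability approaching one for the Jacobian terms. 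Only your parenthetical alternative is slightly loose—under the drifting asymptotics the distribution of $X_i$ (hence $g_x^{(k)}(\theta)$) depends on $n$, so uniform convergence yields $g_x^{(k)}(\theta)-g_x^{(k)*}(\theta)\rightarrow 0$ rather than exact equality—but your main argument does not rely on it.
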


\begin{proof}[Proof of Lemma \ref{lem: consistency of objects}]
    First, we prove the assertions of the lemma for $g_x^{(k)}$. Note that, by the standard expansion of $g_x^{(k)}(X_i,S_i,\theta_0)$ around $X_i^*$ (see Eq. \eqref{eq: A g k expansion} above), we have
    \begin{align*}
        \norm{g_x^{(k)} - g_x^{(k)^*}} \leqslant &\ex{\norm{g(X_i,S_i,\theta_0) - g_x^{(k)}(X_i^*,S_i,\theta)}} \\
         \leqslant &\sum_{\ell = k+1}^K \frac{1}{(\ell - k)!} \ex{\norm{g_x^{(\ell)}(X_i^*,S_i,\theta_0)} \abs{\varepsilon_i}^\ell} \\ &+ \frac{1}{(K-k)!} \ex{\norm{\left(g_x^{(K)}(\tilde X_i, S_i, \theta_0) - g_x^{(K)}(X_i^*,S_i,\theta_0)\right)} \abs{\varepsilon_i}^{K-k}}.
    \end{align*}
    By Assumptions \ref{ass:MME}, \ref{ass: CME}, and \ref{ass: basic conditions}\eqref{item: Global and Point Moments}, $\ex{\norm{g_x^{(\ell)}(X_i^*,S_i,\theta_0)} \abs{\varepsilon_i}^\ell} \rightarrow 0$ for all $\ell \in \{1, \dots, K\}$. Next, using Assumptions \ref{ass: Moment function}\eqref{item: lipschitz moment} and \ref{ass: CME},
    \begin{align*}
         &\ex{\norm{\left(g_x^{(K)}(\tilde X_i, S_i, \theta_0) - g_x^{(K)}(X_i^*,S_i,\theta_0)\right)} \abs{\varepsilon_i}^{K-k}} \\ &\leqslant {\ex{b_1 (X_i^*,S_i,\theta_0)} \ex{\abs{\varepsilon_i}^{K+1-k}}} +  {\ex{b_2(X_i^*,S_i,\theta_0)} \ex{\abs{\varepsilon_i}^{M-k}}} \rightarrow 0,
    \end{align*}
    where the convergence follows from Assumptions \ref{ass:MME}, \ref{ass: Moment function}\eqref{item: M moments} and \ref{ass: basic conditions}\eqref{item: Global and Point Moments}. Hence, we conclude $g_x^{(k)} \rightarrow g_x^{(k)*}$.

    Next, we show $\overline g_x^{(k)} (\hat \theta) \prob g_x^{(k)*}$. By the triangle inequality,
    \begin{equation*}
        \norm{\overline g_x^{(k)} (\hat \theta) - g_x^{(k)*}} \leqslant \sup_{\theta \in B_\delta (\theta_0)} \norm{\overline g_x^{(k)} (\theta) - g_x^{(k)*}(\theta)} + \norm{g_x^{(k)*} (\hat \theta) - g_x^{(k)*} (\theta_0)},
    \end{equation*}
    where the inequality holds with probability approaching one since $\hat \theta \in B_\delta (\theta_0)$ with probability approaching one. Note that, By Lemma \ref{lem: moments UC}, $\sup_{\theta \in B_\delta (\theta_0)} \norm{\overline g_x^{(k)} (\theta) - g_x^{(k)*}(\theta)} = o_p(1)$ and $\norm{g_x^{(k)*} (\hat \theta) - g_x^{(k)*} (\theta_0)} = o_p(1)$, where the second result follows from consistency of $\hat \theta$ and continuity of $g_x^{(k)*}(\theta)$. Hence, $\overline g_x^{(k)} (\hat \theta) \prob g_x^{(k)*}$, which completes the proof of the results for $g_x^{(k)}$ for all $k \in \{0, \dots, K\}$.

    A nearly identical argument, can be invoked to establish the same results for $G_x^{(k)}$ for $k \in \{0, \dots, K\}$, with Assumptions \ref{ass: basic conditions}\eqref{item: lipschitz jacobian} and \eqref{item: Jacobian smoothness and Local Moments} taking the places of Assumptions \ref{ass: Moment function}\eqref{item: lipschitz moment} and \ref{ass: basic conditions}\eqref{item: Global and Point Moments}, respectively.
\end{proof}

\begin{lemma}
    \label{lem: consistency}
    Suppose that the hypotheses of Lemma \ref{lem: moments UC} are satisfied. Then, under additional Assumptions \ref{ass: basic conditions}\eqref{item: Wmatrix} and \ref{ass: ID}\eqref{item: global ID}, we have $\hat \theta \prob \theta_0$, $\hat \gamma \prob 0$ and $\hat \gamma \prob \gamma_{0n}$.
\end{lemma}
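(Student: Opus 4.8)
The plan is to prove consistency of the MERM estimator $\hat\beta = (\hat\theta', \hat\gamma')'$ by verifying the standard conditions for consistency of an extremum estimator (e.g., Theorem 2.1 in Newey and McFadden, 1994), where the objective function is the GMM criterion $\hat Q(\beta) = \overline\psi(\beta)'\hat\Xi\,\overline\psi(\beta)$. There are two wrinkles relative to the textbook case: first, the ``true value'' $\gamma_{0n}$ drifts with $n$ toward $0$, so the limiting identified value is $(\theta_0', 0')'$ rather than a fixed $\beta_{0n}$; second, $\overline\psi(\beta)$ does not converge to $\e[\psi(X_i,S_i,\beta)]$ but rather to the limit of the ``starred'' moments $\e[\psi(X_i^*,S_i,\beta)]$ because of the measurement error, and this is precisely what the auxiliary Lemmas~\ref{lem: moments UC} and~\ref{lem: consistency of objects} are designed to handle.

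First, I would establish uniform convergence of $\overline\psi(\beta)$ over $\mathcal B = \Theta\times\Gamma$ to the function $\psi^*(\theta,0) \equiv \e[g(X_i^*,S_i,\theta)]$, suitably interpreted. Write $\overline\psi(\beta) = \overline g(\theta) - \sum_{k=2}^K \gamma_k\,\overline g_x^{(k)}(\theta)$. By Lemma~\ref{lem: moments UC}~(i), $\sup_{\theta\in\Theta}\norm{\overline g(\theta) - g^*(\theta)} = o_p(1)$ and $\sup_{\theta\in\Theta}\norm{\overline g_x^{(k)}(\theta) - g_x^{(k)*}(\theta)} = o_p(1)$ for $k\in\{1,\dots,K\}$, with all limit functions continuous on $\Theta$. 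Since $\Gamma$ is compact (Assumption~\ref{ass: Parameter space}), $\gamma_k$ is uniformly bounded, so $\sup_{\beta\in\mathcal B}\norm{\overline\psi(\beta) - \big(g^*(\theta) - \sum_{k=2}^K\gamma_k g_x^{(k)*}(\theta)\big)} = o_p(1)$. Next, because $\gamma_{0n}\to 0$ (each component is $\sigma_n^k a_{kn}$ with $a_{kn}$ bounded and $\sigma_n\to 0$, as shown in the proof of Lemma~\ref{lem: corrected moment}), and because Lemma~\ref{lem: corrected moment} gives $\overline\psi$-limit $\e[\psi(X_i,S_i,\theta_0,\gamma_{0n})] = \e[g(X_i^*,S_i,\theta_0)] + o(n^{-1/2}) = 0$, the sample criterion $\hat Q(\beta_{0n})\to_p 0$.

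Then I would pin down the identified limit. Define $Q_0(\theta,\gamma) \equiv \big(g^*(\theta) - \sum_{k=2}^K\gamma_k g_x^{(k)*}(\theta)\big)'\Xi\big(g^*(\theta) - \sum_{k=2}^K\gamma_k g_x^{(k)*}(\theta)\big) = \e[\psi(X_i^*,S_i,\theta,\gamma)]'\Xi\,\e[\psi(X_i^*,S_i,\theta,\gamma)]$, which by Assumption~\ref{ass: basic conditions}~(iv) ($\hat\Xi\to_p\Xi$ positive definite) and the uniform convergence above is the uniform probability limit of $\hat Q(\beta)$ over $\mathcal B$. Assumption~\ref{ass: ID}~(\ref{item: global ID}) states $\e[\psi(X_i^*,S_i,\theta,\gamma)] = 0$ iff $\theta=\theta_0$, $\gamma=0$; combined with positive-definiteness of $\Xi$, this shows $Q_0$ has a unique minimizer $(\theta_0, 0)$ and $Q_0 > 0$ away from it. Continuity of $Q_0$ on the compact $\mathcal B$ follows from continuity of $g^*$ and $g_x^{(k)*}$. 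The standard extremum-estimator argument then yields $\hat\theta\to_p\theta_0$ and $\hat\gamma\to_p 0$; and since $\gamma_{0n}\to 0$, also $\hat\gamma\to_p\gamma_{0n}$ (i.e., $\hat\gamma - \gamma_{0n} = o_p(1)$).

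The main obstacle is conceptual rather than technical: one must be careful that the argmin identified by the population criterion is $(\theta_0, 0)$ — not $(\theta_0,\gamma_{0n})$ — because the bias-correction terms vanish in the limit and the sample moments $\overline g_x^{(k)}(\theta)$ converge to the \emph{correctly-measured} population moments $g_x^{(k)*}(\theta) = \e[g_x^{(k)}(X_i^*,S_i,\theta)]$, not to their error-contaminated counterparts. The two statements $\hat\gamma\to_p 0$ and $\hat\gamma\to_p\gamma_{0n}$ are then reconciled only by the drift $\gamma_{0n}=O(\sigma_n^2)=o(1)$. Everything else — the uniform LLN, boundedness of $\gamma$ on the compact $\Gamma$, continuity of the limiting objective — is routine given Lemmas~\ref{lem: moments UC} and~\ref{lem: consistency of objects} and Assumptions~\ref{ass: Parameter space}, \ref{ass: basic conditions}~(\ref{item: Wmatrix}), and~\ref{ass: ID}~(\ref{item: global ID}).
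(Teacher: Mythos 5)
Your proposal is correct and follows essentially the same route as the paper's proof: uniform convergence of $\overline\psi(\beta)$ to $\psi^*(\beta)=\mathbb{E}[\psi(X_i^*,S_i,\theta,\gamma)]$ via Lemma~\ref{lem: moments UC} and compactness of $\Gamma$, then uniform convergence of $\hat Q$ to the limiting criterion using $\hat\Xi\to_p\Xi$, unique minimization at $(\theta_0,0)$ from Assumption~\ref{ass: ID}(\ref{item: global ID}), the standard Newey--McFadden consistency argument, and finally $\hat\gamma\to_p\gamma_{0n}$ from the drift $\gamma_{0n}\to 0$. The conceptual point you emphasize — that the population minimizer is $(\theta_0,0)$ rather than $(\theta_0,\gamma_{0n})$ — is exactly how the paper handles it.
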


\begin{proof}[Proof of Lemma \ref{lem: consistency}]
    First, we argue that $\sup_{\beta \in \mathcal B} \norm{\overline \psi(\beta) - \psi^*(\beta)} = o_p(1)$. Notice that, by the triangle inequality,
    \begin{equation}
        \sup_{\beta \in \mathcal B} \norm{\overline \psi(\beta) - \psi^*(\beta)} \leqslant \sup_{\theta \in \Theta} \norm{\overline g(\theta) - g^*(\theta)} + \sum_{k=2}^K \abs{\gamma_k} \sup_{\theta \in \Theta} \norm{\overline g_x^{(k)}(\theta) - g_x^{(k)*}(\theta)} =o_p(1), \label{eq: A psi UC}
    \end{equation}
    where the equality follows from Lemma \ref{lem: moments UC}(i) and boundedness of $\gamma$ (Assumption \ref{ass: Parameter space}\eqref{item: PS}). Moreover, Lemma \ref{lem: moments UC}(i) also ensures that $\psi^*(\beta)$ is continuous on compact $\mathcal B$ and, consequently, is bounded.

    Let $\hat Q(\beta) = \overline \psi (\beta)' \hat \Xi \overline \psi (\beta)$ and $Q^*(\beta) = \psi^* (\beta)' \Xi \psi^* (\beta)$. Notice that \eqref{eq: A psi UC}, boundedness of $\psi^*(\beta)$, and Assumption \ref{ass: basic conditions}\eqref{item: Wmatrix} together guarantee that $\sup_{\beta \in \mathcal B} \abs{\hat Q(\beta) - Q^*(\beta)} = o_p(1)$. Next, recall that $\gamma_{0n} \rightarrow 0_{K-1}$ (Lemma \ref{lem: corrected moment}). Since $\Gamma$ is compact and $\gamma_{0n} \in \Gamma$ (Assumption \ref{ass: Parameter space}\eqref{item: PS}), $0_{K-1} \in \Gamma$. Consequently, Assumptions~\ref{ass: ID}\eqref{item: global ID} and \ref{ass: basic conditions}\eqref{item: Wmatrix} together guarantee that $Q^* (\beta)$ is uniquely minimized at $\theta = \theta_{0}$ and $\gamma = 0_{K-1}$. Consequently, applying the standard consistency argument (e.g., Theorem 2.1 of \citealp{NeweyMcFadden1994HB}), we conclude that $\hat \theta \rightarrow \theta_0$ and $\hat \gamma \rightarrow 0_{K-1}$. Finally, since $\gamma_{0n} \rightarrow 0$ (Lemma \ref{lem: corrected moment}), we also have $\hat \gamma \prob \gamma_{0n}$.
\end{proof}

\begin{lemma}
    \label{lem: moment normality}
    Suppose that $\{(X_i^*,S_i^{\prime},\varepsilon_i)\}_{i=1}^n$ are i.i.d.. Then, under Assumptions \ref{ass:MME}, \ref{ass: CME}, \ref{ass: Moment function}, and \ref{ass: basic conditions}\eqref{item: Global and Point Moments} and \eqref{item: 2M moments}, we have
    \begin{equation*}
        n^{1/2} \overline \psi (\beta_{0n}) \dist N(0,\Omega_{gg}^*),
    \end{equation*}
    where $\Omega_{gg}^* \equiv \ex{g(X_i,S_i,\theta_0) g(X_i,S_i,\theta_0)'}$.
\end{lemma}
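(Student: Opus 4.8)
The plan is to show that $n^{1/2}\overline\psi(\beta_{0n})$ is asymptotically equivalent to $n^{1/2}\overline g^*(\theta_0) = n^{-1/2}\sum_{i=1}^n g(X_i^*,S_i,\theta_0)$, and then apply a standard CLT to the latter. Concretely, write $n^{1/2}\overline\psi(\beta_{0n}) = n^{1/2}\overline g^*(\theta_0) + n^{1/2}\big(\overline\psi(\beta_{0n}) - \overline g^*(\theta_0)\big)$, and the goal is to show the second term is $o_p(1)$. Once that is done, since $\{(X_i^*,S_i',\varepsilon_i)\}_{i=1}^n$ are i.i.d., $g(X_i^*,S_i,\theta_0)$ are i.i.d. mean-zero random vectors (mean zero by the moment restriction \eqref{eq:g descr moment}) with finite second moment $\Omega_{gg}^* = \ex{g(X_i^*,S_i,\theta_0)g(X_i^*,S_i,\theta_0)'}$, finite by Assumption \ref{ass: basic conditions}~\eqref{item: Global and Point Moments}; the Lindeberg–Lévy CLT then gives $n^{1/2}\overline g^*(\theta_0)\dist N(0,\Omega_{gg}^*)$, and Slutsky finishes the argument.

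First I would handle the asymptotic-equivalence step. Using the expansions \eqref{eq: A psi expansions}–\eqref{eq: A proper gamma} established in the proof of Lemma~\ref{lem: corrected moment} (which hold under the present assumptions), one has the pointwise-in-$i$ identity
\begin{align*}
\psi_i(\beta_{0n}) - g_i^*(\theta_0) = {}& g_x^{(1)}(X_i^*,S_i,\theta_0)\varepsilon_i + \sum_{k=2}^{K} g_x^{(k)}(X_i^*,S_i,\theta_0)\Big(\tfrac{1}{k!}\varepsilon_i^k - \sum_{\ell=2}^k \tfrac{1}{(k-\ell)!}\varepsilon_i^{k-\ell}\gamma_{0\ell n}\Big) \\
& + \tfrac{1}{K!}\big(g_x^{(K)}(\tilde X_i,S_i,\theta_0) - g_x^{(K)}(X_i^*,S_i,\theta_0)\big)\varepsilon_i^K \\
& - \sum_{k=2}^K \tfrac{\gamma_{0kn}}{(K-k)!}\big(g_x^{(K)}(\tilde X_{ki},S_i,\theta_0) - g_x^{(K)}(X_i^*,S_i,\theta_0)\big)\varepsilon_i^{K-k}.
\end{align*}
Averaging and multiplying by $n^{1/2}$, I would bound $n^{1/2}\|\overline\psi(\beta_{0n}) - \overline g^*(\theta_0)\|$ in $L^1$. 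The linear term $n^{-1/2}\sum_i g_x^{(1)}(X_i^*,S_i,\theta_0)\varepsilon_i$ has mean zero by Assumption~\ref{ass: CME} and variance $O(\sigma_n^2) = o(1)$ (using $\ex{\|g_x^{(1)}(X_i^*,S_i,\theta_0)\|^2}<\infty$ from Assumption~\ref{ass: basic conditions}~\eqref{item: Global and Point Moments} and independence), hence is $o_p(1)$. Similarly each term in the finite sum over $k$ has mean zero by \eqref{eq: A proper gamma} and second moment of order $\sigma_n^4 = o(1)$, so those are $o_p(1)$ too. For the two remainder blocks I would use the Lipschitz-type bound \eqref{eq: A delta g bound} together with $|\tilde X_i - X_i^*|\le|\varepsilon_i|$: their $L^1$ norms are bounded by constants times $\sigma_n^{K+1} + \sigma_n^M$ and (for the $\gamma_{0kn}$ terms) $|a_{kn}|(\sigma_n^{K+1} + \sigma_n^M)$, so after multiplying by $n^{1/2}$ they are $O(n^{1/2}(\sigma_n^{K+1}+\sigma_n^M)) = o(1)$ by Assumption~\ref{ass:MME} (which gives $\sigma_n^{K+1}=o(n^{-1/2})$, and $\sigma_n^M = o(n^{-1/2})$ since $M\ge K+1$, using also Assumption~\ref{ass: basic conditions}~\eqref{item: 2M moments} to guarantee the relevant moments of $\varepsilon_i/\sigma_n$ are finite). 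Markov's inequality then converts these $L^1$ bounds into $o_p(1)$.

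The main obstacle is not any single step but getting the bookkeeping right in the remainder bounds: one must be careful that the substitutions $\tilde X_i$, $\tilde X_{ki}$ are handled uniformly, that independence of $\varepsilon_i$ from $(X_i^*,S_i)$ is used to factor the expectations of $b_j(X_i^*,S_i,\theta_0)|\varepsilon_i|^{\cdot}$, and that the normalization $\gamma_{0kn}=\sigma_n^k a_{kn}$ with $a_{kn}$ bounded is invoked so the $\gamma_{0kn}$-weighted remainders carry the same $\sigma_n^{K+1}$ rate. A secondary subtlety is confirming $\Omega_{gg}^*$ is well-defined and that evaluating it at $X_i^*$ versus the (observable) formula in the statement agree — but since the statement already writes $\Omega_{gg}^* = \ex{g(X_i,S_i,\theta_0)g(X_i,S_i,\theta_0)'}$ one should note that, to the relevant order, this equals $\ex{g(X_i^*,S_i,\theta_0)g(X_i^*,S_i,\theta_0)'} + o(1)$ by the same expansion argument, so the CLT limit is the one claimed. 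Everything else is a routine application of Lindeberg–Lévy and Slutsky.
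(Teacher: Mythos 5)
Your proposal is correct and follows essentially the same route as the paper's proof: the expansion \eqref{eq: A psi expansions}, mean-zero arguments via Assumption \ref{ass: CME} and \eqref{eq: A proper gamma} with Chebyshev/variance bounds for the middle terms, the Lipschitz bound \eqref{eq: A delta g bound} together with $\gamma_{0kn}=\sigma_n^k a_{kn}$ and $n^{1/2}\sigma_n^{K+1}\rightarrow 0$ for the remainders, and the CLT applied to $n^{-1/2}\sum_{i=1}^n g(X_i^*,S_i,\theta_0)$. Your remark that the covariance in the statement, written with $X_i$, coincides with $\mathbb{E}[g(X_i^*,S_i,\theta_0)g(X_i^*,S_i,\theta_0)']$ up to $o(1)$ is a sensible clarification that the paper itself only makes explicit later, in the proof of Theorem~\ref{the: asy normality}.
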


\begin{proof}[Proof of Lemma \ref{lem: moment normality}]
    Using expansion \eqref{eq: A psi expansions}, we obtain
    \begin{align}
        n^{1/2} \overline \psi(\beta_{0n}) = & n^{-1/2} \sumin g(X_i^*,S_i,\theta_0) + n^{-1/2} \sumin g_x^{(1)}(X_i^*,S_i,\theta_0) \varepsilon_i \nonumber \\ &+  \sum_{k=2}^{K} n^{-1/2} \sumin g_x^{(k)}(X_i^*,S_i,\theta_0) \left( \dfrac{1}{k!} \varepsilon_i^k - \sum_{\ell=2}^k \dfrac{1}{(k-\ell)!} \varepsilon_i^{k-\ell} \gamma_{0kn} \right) \nonumber\\
        &+\dfrac{1}{K!} n^{-1/2} \sumin \left(g_x^{(K)}(\tilde X_{i},S_i,\theta_0) - g_x^{(K)}(X_{i}^*,S_i,\theta_0)\right) \varepsilon_i^K  \nonumber\\
        &- \sum_{k=2}^K \dfrac{\gamma_{0kn}}{(K-k)!} n^{-1/2} \sumin  \left(g_x^{(K)} (\tilde X_{ki},S_i,\theta_0) - g_x^{(K)} (X_i^*,S_i,\theta_0)\right) \varepsilon_i^{K-k}. \label{eq: A sample psi exp}
    \end{align}
    First, note that, by the standard CLT, $n^{-1/2} \sumin g(X_i^*,S_i,\theta_0) \dist N(0,\Omega_{gg}^*)$. The rest of the proof is to show that the remaining terms are $o_p(1)$. By Assumptions \ref{ass:MME}, \ref{ass: CME}, \ref{ass: basic conditions}\eqref{item: Global and Point Moments}, Chebyshev's inequality guarantees
    \begin{equation*}
        n^{-1/2} \sumin g_x^{(1)} (X_i^*,S_i,\theta_{0n}) \varepsilon_i = o_p(1)
    \end{equation*}
    Next, \eqref{eq: A proper gamma} ensures that we can similarly apply Chebyshev's inequality (combined with Assumptions \ref{ass:MME}, \ref{ass: CME}, \ref{ass: basic conditions}\eqref{item: Global and Point Moments} and \eqref{item: 2M moments}) to ensure that for $k \in \{2, \ldots, K\}$
    \begin{equation*}
        n^{-1/2} \sumin g_x^{(k)}(X_i^*,S_i,\theta_0) \left( \dfrac{1}{k!} \varepsilon_i^k - \sum_{\ell=2}^k \dfrac{1}{(k-\ell)!} \varepsilon_i^{k-\ell} \gamma_{0kn} \right) = o_p(1).
    \end{equation*}
    Next, using \eqref{eq: A delta g bound},
    \begin{align*}
        &\norm{n^{-1/2} \sumin \left(g_x^{(K)}(\tilde X_{i},S_i,\theta_0) - g_x^{(K)}(X_{i}^*,S_i,\theta_0)\right) \varepsilon_i^K} \\ \leqslant &n^{-1/2} \sumin  b_1 (X_i^*,S_i,\theta_0) \abs{\varepsilon_i}^{K+1} + n^{-1/2} \sumin b_2(X_i^*,S_i,\theta_0) \abs{\varepsilon_i}^{M} \\
        \leqslant &\underbrace{n^{1/2} \sigma_n^{K+1}}_{\rightarrow 0}  \underbrace{\left(n^{-1} \sumin  b_1 (X_i^*,S_i,\theta_0) \abs{\varepsilon_i/\sigma_n}^{K+1}\right)}_{O_p(1)} \\ &+ \underbrace{n^{1/2} \sigma_n^M}_{\rightarrow 0} \underbrace{\left(n^{-1} \sumin b_2(X_i^*,S_i,\theta_0) \abs{\varepsilon_i/\sigma_n}^{M}\right)}_{O_p(1)} = o_p(1),
    \end{align*}
    where both $n^{1/2} \sigma_n^{K+1}$ and $n^{1/2} \sigma_n^M$ converge to zero by Assumption \ref{ass:MME}, and the terms 
    in the brackets are $O_p(1)$ by Markov's inequality (ensured by Assumptions \ref{ass:MME}, \ref{ass: CME}, \ref{ass: Moment function}\eqref{item: M moments} and \ref{ass: basic conditions}\eqref{item: Global and Point Moments}). Recall that in the proof of Lemma \ref{lem: corrected moment}, we have demonstrated that $\gamma_{0kn} = \sigma_n^k a_{kn}$, where $a_{kn}$ are bounded, for $k \in \{2, \dots, K\}$. Hence, similarly, we have
    \begin{align*}
        &\norm{\gamma_{0kn} n^{-1/2} \sumin \left(g_x^{(K)}(\tilde X_{i},S_i,\theta_0) - g_x^{(K)}(X_{i}^*,S_i,\theta_0)\right) \varepsilon_i^{K-k}} \\ \leqslant &a_{kn} \sigma_n^k \left[n^{-1/2} \sumin  b_1 (X_i^*,S_i,\theta_0) \abs{\varepsilon_i}^{K-k+1} + n^{-1/2} \sumin b_2(X_i^*,S_i,\theta_0) \abs{\varepsilon_i}^{M-k}\right] \\
        \leqslant & a_{kn} \underbrace{ n^{1/2} \sigma_n^{K+1}}_{\rightarrow 0}  \underbrace{\left(n^{-1} \sumin  b_1 (X_i^*,S_i,\theta_0) \abs{\varepsilon_i/\sigma_n}^{K-k+1}\right)}_{O_p(1)} \\ &+ a_{kn} \underbrace{n^{1/2} \sigma_n^M}_{\rightarrow 0} \underbrace{\left(n^{-1} \sumin b_2(X_i^*,S_i,\theta_0) \abs{\varepsilon_i/\sigma_n}^{M-k}\right)}_{O_p(1)} = o_p(1).
    \end{align*}
    Hence, we have demonstrated that all the remaining terms in \eqref{eq: A sample psi exp} are $o_p(1)$, i.e. we have
    \begin{align*}
        n^{1/2} \overline \psi (\beta_{0n}) = &n^{-1/2} \sumin g(X_i^*,S_i,\theta_0) + o_p(1)\\
        &\dist N(0,\Omega_{gg}^*),
    \end{align*}
    which completes the proof.
\end{proof}

\subsection{Proof of Theorem \ref{the: asy normality}}
Equipped with Lemmas \ref{lem: moments UC}-\ref{lem: moment normality}, we are ready to prove Theorem \ref{the: asy normality}.

    Since (i) $\hat \theta$ and $\hat \gamma$ are consistent for $\theta_0$ and $\gamma_{0n}$, respectively (Lemma \ref{lem: consistency}) and (ii) both $\theta_0$ and $\gamma_{0n} \rightarrow 0$ (Assumption \ref{ass:MME}) are bounded away from the boundaries of $\Theta$ and $\Gamma$ respectively (Assumption \ref{ass: Parameter space}), the standard GMM FOC is satisfied with probability approaching one, i.e., we have (with probability approaching one)
    \begin{equation*}
        \overline \Psi (\hat \beta)' \hat \Xi \overline \psi (\hat \beta) = 0.
    \end{equation*}
    Expanding $\overline \psi (\hat \beta)$ around $\overline \psi (\beta_{0n})$ gives
    \begin{equation}
        \label{eq: A FOC}
        \overline \Psi (\hat \beta)' \hat \Xi \left(\overline \psi (\beta_{0n}) + \overline \Psi(\tilde \beta) (\hat \beta - \beta_0) \right) = 0,
    \end{equation}
    where $\tilde \beta$ lies between $\beta_{0n}$ and $\hat \beta$ (and, consequently, $\tilde \theta \prob \theta_{0}$ and $\tilde \gamma \prob 0$). Next, we argue that $\overline \Psi (\hat \beta) = \Psi^* + o_p(1)$. Observe
    \begin{equation*}
        \overline \Psi (\hat \beta) = \left[\overline G(\hat \theta) - \sum_{k=2}^K \hat \gamma_k \overline G_x^{(k)}(\hat \theta), - \overline g_x^{(2)} (\hat \theta), \dots,  - \overline g_x^{(K)} (\hat \theta) \right].
    \end{equation*}
    Since $\hat \theta \prob \theta_0$ (Lemma \ref{lem: consistency}), we can invoke the result of Lemma \ref{lem: consistency of objects} to argue that $\overline g_x^{(k)} (\hat \theta) \prob g_x^{(k)*}$ and $\overline G_x^{(k)} (\hat \theta) \prob G_x^{(k)*}$ for all $k \in \{0, \dots, K\}$. This, combined with $\hat \gamma \rightarrow 0$ (Lemma \ref{lem: consistency}), ensures that $\overline \Psi (\hat \beta) = \Psi^* + o_p(1)$ and, analogously, $\overline \Psi (\tilde \beta) = \Psi^* + o_p(1)$. Coupling these result with Assumption \ref{ass: basic conditions}\eqref{item: Wmatrix}, we conclude that $\overline \Psi (\hat \beta)' \hat \Xi \overline \Psi (\tilde \beta) \prob \Psi^{* \prime} \Xi \Psi^*$, which is invertible by Assumption \ref{ass: ID}\eqref{item: local ID}. Hence, \eqref{eq: A FOC} can be rearranged as (with probability approaching one)
    \begin{align*}
        n^{1/2} (\hat \beta - \beta_{0n}) &= - \left(\overline \Psi (\hat \beta)' \hat \Xi \overline \Psi (\tilde \beta) \right)^{-1} \overline \Psi (\hat \beta)' \hat \Xi n^{1/2} \overline \psi (\beta_{0n}) \\
        &= - \left( \Psi^{* \prime} \Xi \Psi^* \right)^{-1} \Psi^{* \prime} \Xi  n^{1/2}  \overline \psi (\beta_{0n}) + o_p(1),        
    \end{align*}
    where, by Lemma \ref{lem: moment normality}, $n^{1/2} \overline \psi (\beta_{0n}) \dist N(0,\Omega_{gg}^*)$. Hence, we conclude
    \begin{equation*}
        n^{1/2} (\hat \beta - \beta_{0n}) \dist N(0,\Sigma^*),
    \end{equation*}
    where
    \begin{equation*}
        \Sigma^* = \left( \Psi^{* \prime} \Xi \Psi^* \right)^{-1} \Psi^{* \prime} \Xi \Omega_{gg}^* \Psi^* \Xi \left( \Psi^{* \prime} \Xi \Psi^* \right)^{-1}.
    \end{equation*}
    To complete the proof, we need to show that $\Sigma \rightarrow \Sigma^*$. First, note that, by Lemma~\ref{lem: consistency of objects} and $\gamma_{0n} \rightarrow 0$ (Assumption \ref{ass:MME})%
    \begin{equation*}
        \Psi  = \left[G - \sum_{k=2}^K \gamma_{0kn} G_x^{(k)}, -g_x^{(2)}, \dots, -g_x^{(K)}\right] \rightarrow \left[G^*, -g_x^{(2)*}, \dots, -g_x^{(K)*} \right] = \Psi^*.
    \end{equation*}
    Next, we want to argue that $\Omega_{\psi \psi} \rightarrow \Omega_{gg}^*$. Observe that
    \begin{equation*}
        \Omega_{\psi \psi} = \ex{\left(g_i - \sum_{k=2}^K \gamma_{0kn} g_{xi}^{(k)}\right) \left(g_i - \sum_{k=2}^K \gamma_{0kn} g_{xi}^{(k)}\right)'} = \ex{g_i g_i'} + o(1),
    \end{equation*}
    where the equality follows since (i) $\gamma_{0kn} \rightarrow 0$ for all $k \in \{2, \dots, K\}$ (Assumption~\ref{ass:MME}) and (ii) $\ex{g_{xi}^{(k)} \left(g_{xi}^{(k')}\right)'}$ is bounded for all $k,k' \in \{0, \dots, K\}$. In particular, (ii) can be inspected by expanding $g_x^{{(k)}}(X_i,S_i,\theta_0)$ and $g_x^{{(k')}}(X_i,S_i,\theta_0)$ around $X_i^*$ as in \eqref{eq: A g k expansion} and bounding the expectations as in the proof of Lemma \ref{lem: consistency of objects} (using Assumptions~\ref{ass:MME}, \ref{ass: CME}, \ref{ass: Moment function}\eqref{item: lipschitz moment}, \ref{ass: basic conditions}\eqref{item: Global and Point Moments}, and \ref{ass: basic conditions}\eqref{item: 2M moments}). Similarly, by expanding $g(X_i,S_i,\theta_0)$ around $X_i^*$ and bounding the residual terms as in the proof of Lemma~\ref{lem: consistency of objects} (again, using Assumptions \ref{ass:MME}, \ref{ass: CME}, \ref{ass: Moment function}\eqref{item: lipschitz moment}, \ref{ass: basic conditions}\eqref{item: Global and Point Moments}, and \ref{ass: basic conditions}\eqref{item: 2M moments}), we verify that $\ex{g_i g_i'} \rightarrow \ex{g_i^* g_i^{* \prime}} = \Omega_{gg}^*$. Hence, $\Omega_{\psi \psi} \rightarrow \Omega_{gg}^*$ and, consequently, we verified that $\Sigma \rightarrow \Sigma^*$. Finally, we conclude
    \begin{equation*}
        n^{1/2} \Sigma^{-1/2} (\hat \beta - \beta_{0n}) \rightarrow N(0,I_{\dim(\theta)+K-1}),
    \end{equation*}
    which completes the proof. \hfill \qed

\section{\label{sec:Proofs for Adaptive}Proof of Lemma \protect\ref{lem:choice-of-K}}

As in the previous proofs, we use notation $\sigma_n^2 \equiv \mathbb E[\varepsilon_i^2] \equiv \sigma_\varepsilon^2$, $\hat \sigma^2_n \equiv \hat \sigma_\varepsilon^2$, and $\hat \sigma_n^K \equiv \hat \sigma_\varepsilon^K$.

First, we show that \eqref{eq:estn sigK} holds. By Theorem~\ref{the: asy normality}, we have $\hat \sigma_n^2 = \sigma_n^2 + O_p(n^{-1/2})$. Since $K$ is even, \eqref{eq:estn sigK} can be obtained by expanding %
\begin{align*}
    \hat \sigma_n^K &= (\sigma_n^2 + O_p(n^{-1/2}))^{K/2} = \sum_{\ell=0}^{K/2} \binom{n}{\ell} (\sigma_n^2)^{K/2 - \ell} (O_p(n^{-1/2}))^{\ell} \\ &=  \sigma_n^K + \sum_{\ell=1}^{K/2} \binom{n}{\ell} (\sigma_n^2)^{K/2 - \ell} O_p(n^{-\ell/2}) \\ & = \sigma_n^K + O_p (\max\{\sigma_n^{K - 2} n^{-1/2}, n^{-K/4} \}).
\end{align*}

Next, we prove the first statement of the lemma. First, using Lemma~\ref{lem: consistency of objects}, we have $ \overline g_x^{(K)}(\hat \theta) = g_x^{(K)*} + o_p(1)$. Here, $\Vert g_x^{(K)*}\Vert$ is bounded away from zero and above, with the former implied by Assumption~\ref{ass: ID}\eqref{item: local ID}. Likewise, since the Jacobian corresponding to $\hat \beta_L$ (involved in the construction of $B$) is a submatrix of the Jacobian corresponding to $\hat \beta_K$, we conclude that $\Vert B g_x^{(K)*} \Vert$ is also bounded from zero and above. Hence, using consistency of $\hat B$ and $\hat \Sigma$, we conclude that, with probability approaching one, for some $\underline C, \overline C > 0$, we have
\begin{align}
    \label{eq: bounds for delta_n}
    \underline C \sqrt{n} \hat \sigma_n^K \leq \max_{1\leq \ell \leq \dim (\beta )}\vert \hat{\Sigma}_{\ell \ell }^{-1/2}\sqrt{n}\hat{\sigma}^{K} \hat B_{\ell \cdot} \overline{g}_{x}^{(K)}(\hat{\theta})\vert \leq \overline C \sqrt{n} \hat \sigma_n^K.
\end{align}

To prove the first part of the lemma it is sufficient to show that $\delta_n \sigma_n^{L+1} = o_p(n^{-1/2})$. Let $n_j$ index a subsequence of $\delta_{n_j}$ such that $\delta_{n_j} = 1$. If this subsequence is finite, the statement is trivial. Otherwise, we need to show that $ \sigma_{n_j}^{L+1} = o(n_{j}^{-1/2})$ along this subsequence. We prove this by contradiction. Suppose $\sigma_{n_j}^{L+1} = o(n_j^{-1/2})$ does not hold meaning that there exists another subsequence within $n_j$, denoted by $n_m$ such that for all sufficiently large $m$ we have $\sqrt{n_{m}} \sigma_{n_m}^{L+1} \geq C > 0 $ and $\delta_{n_m} = 1$. Since we have $\sigma_{n_m} \geq C n_m^{-\frac{1}{2(L+1)}}$ and $\sqrt{n_m} \sigma_{n_m}^K \geq C n_m^{-(K - L - 1)/2(L+1)}$, using \eqref{eq:estn sigK}, we conclude that, for some $C > 0$, with probability approaching one, we have
\begin{align}
    \label{eq: sigma_n^K bound}
    \sqrt{n_m} \hat \sigma_{n_m}^K = \sqrt{n_m} \sigma_{n_m}^K + O_{p}\left( \sigma_{n_m}
    ^{K-2}+n_{n_m}^{-\left( K-2\right) /4}\right) \geq C n_m^{-(K - L - 1)/2(L+1)}
\end{align}
because the remainder $O_{p}( \sigma_{n_m} ^{K-2}+n_{n_m}^{-\left( K-2\right) /4})$ is dominated by $\sqrt{n_m} \sigma_{n_m}^K$ in this case. At the same time, $\delta_{n_m} = 1$ together with \eqref{eq: bounds for delta_n} implies that $\sqrt{n_m} \hat \sigma_{n_m}^K \leq C \varkappa_{n_m}$ for some $C > 0$. This contradicts \eqref{eq: sigma_n^K bound} since $\varkappa
_{n}n^{\left( K-L-1\right) /\left( 2L+2\right) }\rightarrow 0$. The contradiction completes the proof of the first part.

We now turn to the second part of the lemma. First, note that, using Jensen's inequality, we obtain
\begin{align*}
    \sqrt{n} \hat \sigma_n^K = \sqrt{n} (\sigma_n^2 + O_p(n^{-1/2}))^{K/2} \leq C \sqrt{n} \sigma_n^K + O_p(n^{-(K-2)/4}).
\end{align*}
In particular, for $\varkappa _{n} = n^{-\left( K-L-1\right) /\left(2L+2\right) }\left( \ln n\right) ^{-a}$ and $\sigma_n =o( n^{-\frac{1}{2L+2}-\epsilon })$, the above implies that $\sqrt{n} \hat \sigma_n^K / \varkappa_n = o_p(1)$. This, together with \eqref{eq: bounds for delta_n}, implies that $\delta_n = 1$ with probability approaching one, which completes the proof. \hfill \qed

\section{\label{sec:full rank Psi}Details on the sufficient condition~(%
\protect\ref{eq:eg rk Psi:suff cond})}

In this section, we demonstrate that the Jacobian $\Psi ^{\ast }$ associated
with the moment conditions (\ref{eq:rank-cond:eg g}) in Section~\ref%
{ssec:MERM-ID} is guaranteed to have a full rank (for $K=2$) when the
sufficient condition (\ref{eq:eg rk Psi:suff cond}) holds.

It is convenient to define $\varrho _{i}\left( x,\theta \right) \equiv
u\left( x,S_{i},\theta \right) \times \left( 1,x,\ldots ,x^{J-1}\right)
^{\prime }$ and $u_{i}\left( x,\theta \right) \equiv \left( x,S_{i},\theta
\right) $. Then%
\begin{eqnarray*}
g\left( x,S_{i},q,\theta \right) &\equiv &\left( 
\begin{array}{c}
u_{i}\left( x,\theta \right) \\ 
\varrho _{i}\left( x,\theta \right) x \\ 
\varrho _{i}\left( x,\theta \right) q%
\end{array}%
\right) ,\ \text{so} \\
g_{x}^{(2)}\left( x,S_{i},q,\theta \right) &=&\left( 
\begin{array}{l}
u_{x,i}^{(2)}\left( x,\theta \right) \\ 
\varrho _{x,i}^{(2)}\left( x,\theta \right) x+2\varrho _{x,i}^{(1)}\left(
x,\theta \right) \\ 
\varrho _{x,i}^{(2)}\left( x,\theta \right) q%
\end{array}%
\right) .
\end{eqnarray*}

First, notice that 
\begin{equation*}
\Psi ^{\ast }=\mathbb{E}%
\begin{pmatrix}
u_{\theta ,i}\left( X_{i}^{\ast },\theta \right) & -u_{x,i}^{(2)}\left(
X_{i}^{\ast },\theta \right) \\ 
\varrho _{\theta ,i}\left( X_{i}^{\ast },\theta _{0}\right) X_{i}^{\ast } & 
-\varrho _{x,i}^{(2)}\left( X_{i}^{\ast },\theta _{0}\right) X_{i}^{\ast
}-2\varrho _{x,i}^{(1)}\left( X_{i}^{\ast },\theta _{0}\right) \\ 
\varrho _{\theta ,i}\left( X_{i}^{\ast },\theta _{0}\right) \left( \alpha
_{1}X_{i}^{\ast }\right) & -\varrho _{x,i}^{(2)}\left( X_{i}^{\ast },\theta
_{0}\right) \left( \alpha _{1}X_{i}^{\ast }\right)%
\end{pmatrix}%
,
\end{equation*}%
where we used $\mathbb{E}\left[ \varepsilon _{Q,i}|X_{i}^{\ast
},S_{i},\varepsilon _{i}\right] =0$. Dividing moments $J+2,\ldots ,2J+1$ by $%
\alpha _{1}$ and subtracting them from the moments $2,\ldots ,J+1$, we
obtain 
\begin{eqnarray*}
\limfunc{Rk}\left( \Psi ^{\ast }\right) &=&\limfunc{Rk}\mathbb{E}%
\begin{pmatrix}
u_{\theta ,i}\left( X_{i}^{\ast },\theta \right) & -u_{x,i}^{(2)}\left(
X_{i}^{\ast },\theta \right) \\ 
0 & -2\varrho _{x,i}^{(1)}\left( X_{i}^{\ast },\theta _{0}\right) \\ 
\varrho _{\theta ,i}\left( X_{i}^{\ast },\theta _{0}\right) X_{i}^{\ast } & 
-\varrho _{x,i}^{(2)}\left( X_{i}^{\ast },\theta _{0}\right) X_{i}^{\ast }%
\end{pmatrix}
\\
&=&\limfunc{Rk}\mathbb{E}%
\begin{pmatrix}
u_{\theta ,i}\left( X_{i}^{\ast },\theta \right) & -u_{x,i}^{(2)}\left(
X_{i}^{\ast },\theta \right) \\ 
\varrho _{\theta ,i}\left( X_{i}^{\ast },\theta _{0}\right) X_{i}^{\ast } & 
-\varrho _{x,i}^{(2)}\left( X_{i}^{\ast },\theta _{0}\right) X_{i}^{\ast }
\\ 
0 & -2\varrho _{x,i}^{(1)}\left( X_{i}^{\ast },\theta _{0}\right)%
\end{pmatrix}
\\
&=&\limfunc{Rk}%
\begin{pmatrix}
H^{\ast } & \mathbb{E}%
\begin{pmatrix}
u_{x,i}^{(2)}\left( X_{i}^{\ast },\theta \right) \\ 
\varrho _{x,i}^{(2)}\left( X_{i}^{\ast },\theta _{0}\right) X_{i}^{\ast }%
\end{pmatrix}
\\ 
0 & 2\mathbb{E}\left[ \varrho _{x,i}^{(1)}\left( X_{i}^{\ast },\theta
_{0}\right) \right]%
\end{pmatrix}%
.
\end{eqnarray*}%
Here $\limfunc{Rk}\left( H^{\ast }\right) =\dim \left( \theta \right) $,
because this is the rank identification condition for $\theta _{0}$ in the
model without EIV. Thus, for $\Psi ^{\ast }$ to have full rank $\dim \left(
\theta \right) +1$, it is sufficient to have $\mathbb{E}\left[ \varrho
_{x,i}^{(1)}\left( X_{i}^{\ast },\theta _{0}\right) \right] \neq 0$. Note
that since $\mathbb{E}\left[ u\left( X_{i}^{\ast },S_{i},\theta _{0}\right)
|X_{i}^{\ast }\right] =0$, we have%
\begin{equation*}
\mathbb{E}\left[ \varrho _{x,i}^{(1)}\left( X_{i}^{\ast },\theta _{0}\right) %
\right] =\mathbb{E}\left[ u_{x}^{(1)}\left( X_{i}^{\ast },S_{i},\theta
_{0}\right) \left( 1,X_{i}^{\ast },\ldots ,\left( X_{i}^{\ast }\right)
^{J-1}\right) ^{\prime }\right] \neq 0,
\end{equation*}
where the last equality follows from (\ref{eq:eg rk Psi:suff cond}).%

\section{\label{sec:additional MCs}Additional Numerical Results}

In this section, we provide additional numerical results for the experiments considered in Sections~\ref{ssec: MC Mult Logit} and \ref{ssec:adaptive K}.

\subsection{\label{ssec: large tau MC}Additional numerical results for Section \protect{\ref{ssec: MC Mult Logit}}}

In this section, we provide additional simulation results for the same experiment as considered in Section~\ref{ssec: MC Mult Logit} but for larger values of $\tau \in \{1, 1.5, 2\}$. The results are provided in Table~\ref{tab: MC mult logit 2, large tau} below, reporting the same statistics as in Table~\ref{tab: MC mult logit 2} in the main text.

\begin{landscape}
\begin{table}[h!]
\begin{footnotesize}
\begin{threeparttable}
\caption{Simulation results for the multinomial logit model}%
\label{tab: MC mult logit 2, large tau}
    \begin{tabular}{ l c c c c| c c c c | c c c c}
    \toprule
    \toprule
      & \multicolumn{4}{c}{MLE} & \multicolumn{4}{c}{$K=2$} & \multicolumn{4}{c}{$K=4$}\\
      \cmidrule(lr){2-5}  \cmidrule(lr){6-9} \cmidrule(lr){10-13}
      &{\text{bias}, $10^{-2}$}&{\text{std}, $10^{-2}$}&{\text{rmse}, $10^{-2}$}&{\text{size}}&{\text{bias}, $10^{-2}$}&{\text{std}, $10^{-2}$}&{\text{rmse}, $10^{-2}$}&{\text{size}}&{\text{bias}, $10^{-2}$}&{\text{std}, $10^{-2}$}&{\text{rmse}, $10^{-2}$}&{\text{size}}\\
      \midrule
      \multicolumn{13}{c}{$\tau = 1$}\\
      \midrule
       $\partial p_1/\partial x$&-16.14&0.70&16.15&100.00&-11.45&2.21&11.66&99.16&0.21&3.53&3.54&5.44\\
     $\partial p_1/\partial w_1$&11.82&1.40&11.91&100.00&8.49&2.46&8.84&96.14&0.48&2.88&2.92&6.30\\
     $\partial p_1/\partial w_2$&2.08&0.67&2.19&89.28&1.63&0.84&1.84&56.78&0.10&1.04&1.05&5.94\\
       $\partial p_2/\partial x$&8.78&0.65&8.80&100.00&6.59&1.50&6.76&97.64&-0.04&2.53&2.53&6.42\\
     $\partial p_2/\partial w_1$&-5.87&0.71&5.91&100.00&-4.24&1.23&4.42&96.08&-0.24&1.44&1.46&6.26\\
     $\partial p_2/\partial w_2$&-4.03&1.27&4.22&89.50&-3.20&1.62&3.59&57.34&-0.18&2.09&2.10&6.24\\
       $\partial p_0/\partial x$&7.36&0.61&7.39&100.00&4.86&1.33&5.04&94.72&-0.17&1.95&1.96&5.30\\
     $\partial p_0/\partial w_1$&-5.95&0.73&6.00&100.00&-4.24&1.24&4.42&96.14&-0.24&1.44&1.46&6.28\\
     $\partial p_0/\partial w_2$&1.94&0.61&2.04&89.52&1.57&0.79&1.76&57.96&0.08&1.05&1.05&6.24\\
      \midrule
      \multicolumn{13}{c}{$\tau = 3/2$}\\
      \midrule
       $\partial p_1/\partial x$&-19.01&0.50&19.01&100.00&-16.59&1.32&16.64&100.00&-2.29&3.67&4.33&16.56\\
     $\partial p_1/\partial w_1$&14.07&1.34&14.13&100.00&12.83&1.90&12.97&99.98&2.70&3.40&4.34&24.14\\
     $\partial p_1/\partial w_2$&2.31&0.64&2.40&95.70&2.20&0.75&2.32&87.72&0.56&1.18&1.31&12.18\\
       $\partial p_2/\partial x$&9.97&0.47&9.98&100.00&9.05&0.95&9.10&100.00&1.58&2.70&3.13&16.10\\
     $\partial p_2/\partial w_1$&-6.96&0.69&6.99&100.00&-6.40&0.96&6.48&99.98&-1.35&1.71&2.18&24.08\\
     $\partial p_2/\partial w_2$&-4.44&1.21&4.60&95.88&-4.29&1.43&4.52&87.88&-1.09&2.35&2.59&12.68\\
       $\partial p_0/\partial x$&9.04&0.46&9.05&100.00&7.54&0.94&7.60&100.00&0.71&2.05&2.18&9.46\\
     $\partial p_0/\partial w_1$&-7.11&0.71&7.15&100.00&-6.42&0.98&6.50&99.98&-1.35&1.70&2.17&23.90\\
     $\partial p_0/\partial w_2$&2.13&0.58&2.21&95.90&2.09&0.69&2.20&88.08&0.53&1.17&1.29&13.00\\
     \midrule
      \multicolumn{13}{c}{$\tau = 2$}\\
      \midrule      
       $\partial p_1/\partial x$&-20.28&0.38&20.29&100.00&-18.86&0.88&18.88&100.00&-5.98&3.91&7.15&55.08\\
     $\partial p_1/\partial w_1$&15.09&1.32&15.15&100.00&14.91&1.66&15.00&100.00&5.87&3.87&7.03&55.08\\
     $\partial p_1/\partial w_2$&2.39&0.63&2.47&97.42&2.37&0.71&2.47&94.38&1.12&1.30&1.72&26.44\\
       $\partial p_2/\partial x$&10.45&0.36&10.46&100.00&9.97&0.69&9.99&100.00&3.80&2.82&4.73&47.94\\
     $\partial p_2/\partial w_1$&-7.45&0.68&7.48&100.00&-7.43&0.85&7.48&100.00&-2.95&1.95&3.54&54.96\\
     $\partial p_2/\partial w_2$&-4.58&1.18&4.73&97.42&-4.61&1.36&4.81&94.54&-2.20&2.54&3.36&27.46\\
       $\partial p_0/\partial x$&9.83&0.36&9.84&100.00&8.88&0.70&8.91&100.00&2.18&2.21&3.11&28.50\\
     $\partial p_0/\partial w_1$&-7.64&0.71&7.68&100.00&-7.48&0.87&7.53&100.00&-2.92&1.93&3.50&54.80\\
     $\partial p_0/\partial w_2$&2.19&0.57&2.26&97.42&2.24&0.66&2.34&94.62&1.08&1.26&1.65&27.78\\
        \bottomrule
    \end{tabular}
\begin{tablenotes}
\item \footnotesize This table reports the simulated finite sample bias, standard deviation, RMSE, and size of the MLE and the MERM estimators and the corresponding t-tests for the partial derivatives $\partial p_j(x,w,\theta_0)/\partial x$, $\partial p_j(x,w,\theta_0)/\partial w_1$, $\partial p_j(x,w,\theta_0)/\partial w_2$ for $j \in \{1,2,0\}$ evaluated at the population mean. The true values of the marginal effects are $(\partial p_1/\partial x, \partial p_2/\partial x, \partial p_0/\partial x) = ( 0.222,   -0.111,   -0.111)$ and zeros for the rest. The results are based on 5,000 replications.%
\end{tablenotes}
\end{threeparttable}
\end{footnotesize}
\end{table}
\end{landscape}

\subsection{\label{ssec: additional adaptive MCs}Additional numerical results for Section \protect{\ref{ssec:adaptive K}}}
In this section, we provide additional numerical results for the experiment considered in Section~\protect{\ref{ssec:adaptive K}}. In particular, Tables~\ref{tab: adaptive K mult logit 1000} and \ref{tab: adaptive K mult logit 4000} below report the same statistics as Table~\ref{tab: adaptive K mult logit 2000} in the main text, but for smaller and larger sample sizes $n=1000$ and $n=4000$.

\begin{table}[h!]
\begin{scriptsize}
\begin{threeparttable}
\caption{Choice of $K$ simulation results for the multinomial logit model, $n=1000$}%
\label{tab: adaptive K mult logit 1000}
    \begin{tabular}{ l c c |  c c |  c c |  c c}
    \toprule
    \toprule
      & \multicolumn{2}{c}{MLE} & \multicolumn{2}{c}{$K=2$} & \multicolumn{2}{c}{$K=4$} & \multicolumn{2}{c}{data-driven $K$}\\
      \cmidrule(lr){2-3}  \cmidrule(lr){4-5} \cmidrule(lr){6-7} \cmidrule(lr){8-9}
      &{\text{bias}, $10^{-2}$}&{\text{rmse}, $10^{-2}$}&{\text{bias}, $10^{-2}$}&{\text{rmse}, $10^{-2}$}&{\text{bias}, $10^{-2}$}&{\text{rmse}, $10^{-2}$}&{\text{bias}, $10^{-2}$}&{\text{rmse}, $10^{-2}$}\\
      \midrule
      \multicolumn{9}{c}{$\tau = 1/4$}\\
      \midrule
       $\partial p_1/\partial x$&-3.22&3.74&1.34&3.96&2.10&4.56&1.37&4.05\\
     $\partial p_1/\partial w_1$&2.29&3.27&-0.13&3.32&-0.57&3.39&-0.15&3.34\\
     $\partial p_1/\partial w_2$&0.52&1.19&-0.01&1.22&-0.10&1.26&-0.02&1.23\\
       $\partial p_2/\partial x$&1.93&2.55&-0.73&2.78&-1.18&3.11&-0.75&2.82\\
     $\partial p_2/\partial w_1$&-1.15&1.64&0.07&1.66&0.29&1.70&0.07&1.67\\
     $\partial p_2/\partial w_2$&-1.02&2.36&0.03&2.46&0.21&2.53&0.04&2.46\\
       $\partial p_0/\partial x$&1.28&1.93&-0.61&2.16&-0.92&2.41&-0.62&2.20\\
     $\partial p_0/\partial w_1$&-1.14&1.63&0.07&1.66&0.29&1.70&0.07&1.67\\
     $\partial p_0/\partial w_2$&0.51&1.17&-0.02&1.24&-0.11&1.27&-0.02&1.24\\
      \midrule
      \multicolumn{9}{c}{$\tau = 1/2$}\\
      \midrule
       $\partial p_1/\partial x$&-8.97&9.10&-1.82&4.10&1.76&4.57&1.35&4.82\\
     $\partial p_1/\partial w_1$&6.42&6.79&1.88&4.04&-0.35&3.52&-0.10&3.73\\
     $\partial p_1/\partial w_2$&1.32&1.67&0.42&1.33&-0.05&1.32&0.00&1.34\\
       $\partial p_2/\partial x$&5.21&5.39&1.16&2.91&-0.97&3.20&-0.73&3.30\\
     $\partial p_2/\partial w_1$&-3.21&3.39&-0.94&2.02&0.18&1.76&0.05&1.87\\
     $\partial p_2/\partial w_2$&-2.59&3.26&-0.83&2.65&0.11&2.65&0.01&2.68\\
       $\partial p_0/\partial x$&3.76&3.95&0.66&2.18&-0.79&2.50&-0.62&2.56\\
     $\partial p_0/\partial w_1$&-3.21&3.40&-0.94&2.02&0.18&1.76&0.05&1.86\\
     $\partial p_0/\partial w_2$&1.27&1.60&0.41&1.33&-0.06&1.33&-0.01&1.35\\
     \midrule
      \multicolumn{9}{c}{$\tau = 3/4$}\\
      \midrule      
       $\partial p_1/\partial x$&-13.36&13.41&-7.17&7.97&1.07&4.74&1.07&4.75\\
     $\partial p_1/\partial w_1$&9.68&9.90&5.59&6.67&0.17&3.79&0.17&3.79\\
     $\partial p_1/\partial w_2$&1.85&2.09&1.15&1.69&0.07&1.41&0.07&1.41\\
       $\partial p_2/\partial x$&7.48&7.57&4.26&4.91&-0.54&3.36&-0.54&3.36\\
     $\partial p_2/\partial w_1$&-4.82&4.94&-2.80&3.34&-0.09&1.90&-0.09&1.90\\
     $\partial p_2/\partial w_2$&-3.59&4.05&-2.27&3.33&-0.13&2.82&-0.13&2.82\\
       $\partial p_0/\partial x$&5.87&5.96&2.91&3.53&-0.53&2.67&-0.53&2.68\\
     $\partial p_0/\partial w_1$&-4.86&4.97&-2.79&3.34&-0.08&1.90&-0.08&1.90\\
     $\partial p_0/\partial w_2$&1.74&1.96&1.12&1.64&0.06&1.41&0.06&1.42\\
        \bottomrule
    \end{tabular}
\begin{tablenotes}
\item \scriptsize This table reports the simulated finite sample bias and RMSE of the MLE and the MERM estimators for the partial derivatives $\partial p_j(x,w,\theta_0)/\partial x$, $\partial p_j(x,w,\theta_0)/\partial w_1$, $\partial p_j(x,w,\theta_0)/\partial w_2$ for $j \in \{1,2,0\}$ evaluated at the population mean. The true values of the marginal effects are $(\partial p_1/\partial x, \partial p_2/\partial x, \partial p_0/\partial x) = ( 0.222,   -0.111,   -0.111)$ and zeros for the rest. The results are based on 5,000 replications.%
\end{tablenotes}
\end{threeparttable}
\end{scriptsize}
\end{table}

\begin{table}[h!]
\begin{scriptsize}
\begin{threeparttable}
\caption{Choice of $K$ simulation results for the multinomial logit model, $n=4000$}%
\label{tab: adaptive K mult logit 4000}
    \begin{tabular}{ l c c |  c c |  c c |  c c}
    \toprule
    \toprule
      & \multicolumn{2}{c}{MLE} & \multicolumn{2}{c}{$K=2$} & \multicolumn{2}{c}{$K=4$} & \multicolumn{2}{c}{data-driven $K$}\\
      \cmidrule(lr){2-3}  \cmidrule(lr){4-5} \cmidrule(lr){6-7} \cmidrule(lr){8-9}
      &{\text{bias}, $10^{-2}$}&{\text{rmse}, $10^{-2}$}&{\text{bias}, $10^{-2}$}&{\text{rmse}, $10^{-2}$}&{\text{bias}, $10^{-2}$}&{\text{rmse}, $10^{-2}$}&{\text{bias}, $10^{-2}$}&{\text{rmse}, $10^{-2}$}\\
      \midrule
      \multicolumn{9}{c}{$\tau = 1/4$}\\
      \midrule
       $\partial p_1/\partial x$&-3.28&3.41&0.39&1.90&0.56&1.93&0.39&1.90\\
     $\partial p_1/\partial w_1$&2.32&2.57&-0.08&1.59&-0.15&1.57&-0.08&1.59\\
     $\partial p_1/\partial w_2$&0.48&0.72&-0.04&0.62&-0.05&0.61&-0.04&0.62\\
       $\partial p_2/\partial x$&1.97&2.13&-0.23&1.35&-0.33&1.37&-0.23&1.35\\
     $\partial p_2/\partial w_1$&-1.16&1.28&0.04&0.80&0.08&0.78&0.04&0.79\\
     $\partial p_2/\partial w_2$&-0.96&1.43&0.07&1.23&0.11&1.23&0.07&1.23\\
       $\partial p_0/\partial x$&1.31&1.49&-0.16&1.01&-0.23&1.03&-0.16&1.01\\
     $\partial p_0/\partial w_1$&-1.16&1.29&0.04&0.80&0.08&0.78&0.04&0.80\\
     $\partial p_0/\partial w_2$&0.47&0.71&-0.04&0.62&-0.05&0.62&-0.04&0.62\\
      \midrule
      \multicolumn{9}{c}{$\tau = 1/2$}\\
      \midrule
       $\partial p_1/\partial x$&-9.00&9.03&-1.52&2.41&0.49&2.03&0.48&2.04\\
     $\partial p_1/\partial w_1$&6.43&6.52&1.06&1.99&-0.11&1.65&-0.10&1.65\\
     $\partial p_1/\partial w_2$&1.28&1.38&0.21&0.68&-0.04&0.65&-0.04&0.65\\
       $\partial p_2/\partial x$&5.22&5.27&0.90&1.64&-0.30&1.46&-0.29&1.47\\
     $\partial p_2/\partial w_1$&-3.21&3.25&-0.53&1.00&0.05&0.82&0.05&0.83\\
     $\partial p_2/\partial w_2$&-2.51&2.70&-0.42&1.35&0.09&1.30&0.09&1.30\\
       $\partial p_0/\partial x$&3.78&3.83&0.62&1.19&-0.20&1.10&-0.19&1.10\\
     $\partial p_0/\partial w_1$&-3.22&3.26&-0.53&1.00&0.05&0.82&0.05&0.83\\
     $\partial p_0/\partial w_2$&1.23&1.32&0.21&0.67&-0.05&0.65&-0.05&0.65\\
     \midrule
      \multicolumn{9}{c}{$\tau = 3/4$}\\
      \midrule      
       $\partial p_1/\partial x$&-13.37&13.38&-6.49&6.79&0.41&2.25&0.41&2.25\\
     $\partial p_1/\partial w_1$&9.68&9.73&4.44&4.84&-0.02&1.80&-0.02&1.80\\
     $\partial p_1/\partial w_2$&1.80&1.87&0.91&1.12&-0.03&0.69&-0.03&0.69\\
       $\partial p_2/\partial x$&7.48&7.50&3.83&4.08&-0.24&1.63&-0.24&1.63\\
     $\partial p_2/\partial w_1$&-4.82&4.85&-2.22&2.42&0.01&0.90&0.01&0.90\\
     $\partial p_2/\partial w_2$&-3.50&3.63&-1.80&2.22&0.06&1.39&0.06&1.39\\
       $\partial p_0/\partial x$&5.89&5.91&2.66&2.87&-0.16&1.22&-0.16&1.22\\
     $\partial p_0/\partial w_1$&-4.86&4.89&-2.22&2.42&0.01&0.90&0.01&0.90\\
     $\partial p_0/\partial w_2$&1.70&1.76&0.89&1.09&-0.03&0.69&-0.03&0.69\\
        \bottomrule
    \end{tabular}
\begin{tablenotes}
\item \scriptsize This table reports the simulated finite sample bias and RMSE of the MLE and the MERM estimators for the partial derivatives $\partial p_j(x,w,\theta_0)/\partial x$, $\partial p_j(x,w,\theta_0)/\partial w_1$, $\partial p_j(x,w,\theta_0)/\partial w_2$ for $j \in \{1,2,0\}$ evaluated at the population mean. The true values of the marginal effects are $(\partial p_1/\partial x, \partial p_2/\partial x, \partial p_0/\partial x) = ( 0.222,   -0.111,   -0.111)$ and zeros for the rest. The results are based on 5,000 replications.%
\end{tablenotes}
\end{threeparttable}
\end{scriptsize}
\end{table}

\section{\label{sec:MME expansion with large ME}MERM derivation when $%
\protect\sigma _{\protect\varepsilon }$ is not small}

Note that $\tau $ can be small without $\sigma _{\varepsilon }$ being small
in absolute magnitude. For example, suppose $\sigma _{\varepsilon }=10$ and $%
\sigma _{X^{\ast }}=100$. Then $\tau =0.1$, so the measurement error is
quite small relative to $\sigma _{X^{\ast }}$, and relying on the
approximation $\tau \rightarrow 0$ is reasonable. At the same time,
approximation $\sigma _{\varepsilon }\rightarrow 0$ may not be suitable for
this example.

In this Appendix we show that the corrected moment conditions and the MERM
estimator are valid without assuming that $\sigma _{\varepsilon }$ is small
in absolute magnitude. In Section~\ref{sec:framework} we used Taylor
expansions in $\varepsilon _{i}$ around $\varepsilon _{i}=0$, with the
remainder of order $\mathbb{E}\left[ \left\vert \varepsilon _{i}\right\vert
^{K+1}\right] $. When $\sigma _{\varepsilon }>1$, term $O\left( \mathbb{E}%
\left[ \left\vert \varepsilon _{i}\right\vert ^{K+1}\right] \right) $ in
equation~(\ref{eq: moment exp approx - step 1}) cannot be viewed as a
negligible remainder, because $\mathbb{E}\left[ \left\vert \varepsilon
_{i}\right\vert ^{K+1}\right] >1$ and, moreover, terms $\mathbb{E}\left[
\left\vert \varepsilon _{i}\right\vert ^{k}\right] $ increase rather than
decrease with $k$.

In Section~\ref{sec:framework}, to simplify the exposition, we have assumed
that $X^{\ast }$ is scaled so that $\sigma _{X^{\ast }}$ is of order one.
This in particular ensures that $\mathbb{E}\left[ \left\vert \varepsilon
_{i}\right\vert ^{k}\right] $ decrease with $k$. We will now show that this
assumption about the scale of $X^{\ast }$ is not necessary, and that the
procedure remains valid without any such scaling.

We will show that by rescaling the Taylor expansions in Section~\ref%
{sec:framework} can be written in terms of powers of $\tau ^{k}$, which
necessarily decrease with $k$ when $\tau <1$.

Remember the model of Section~\ref{sec:framework}:%
\begin{equation}
\mathbb{E}[g(X_{i}^{\ast },S_{i},\theta _{0})]=0,\quad X_{i}=X_{i}^{\ast
}+\varepsilon _{i},\quad \mathbb{E}[\varepsilon _{i}]=0.
\label{eq:appx:LargeME:original model}
\end{equation}%
Let $\xi _{i}$ denote a random variable with $\mathbb{E}\left[ \xi _{i}%
\right] =0$ and $\mathbb{E}\left[ \xi _{i}^{2}\right] =1$, $\mathbb{E}\left[
\left\vert \xi _{i}^{{}}\right\vert ^{L+1}\right] $ is bounded, and $%
\varepsilon _{i}\equiv \sigma _{\varepsilon }\xi _{i}$. Also, let us denote%
\begin{equation*}
\tau \equiv \sigma _{\varepsilon }/\sigma _{X^{\ast }},\quad \widetilde{X}%
_{i}\equiv X_{i}/\sigma _{X^{\ast }},\quad \widetilde{X}_{i}^{\ast }\equiv
X_{i}^{\ast }/\sigma _{X^{\ast }},\quad \widetilde{g}(\widetilde{x},s,\theta
)\equiv g(\sigma _{X^{\ast }}\widetilde{x},s,\theta ).
\end{equation*}%
Then, {} we can rewrite equation~(\ref%
{eq:appx:LargeME:original model}) as%
\begin{equation*}
\mathbb{E}[\widetilde{g}(\widetilde{X}_{i}^{\ast },S_{i},\theta
_{0})]=0,\quad \widetilde{X}_{i}=\widetilde{X}_{i}^{\ast }+\tau \xi
_{i},\quad \mathbb{E}[\xi _{i}]=0.
\end{equation*}%
Expand $\widetilde{g}(\widetilde{X}_{i},S_{i},\theta )=\widetilde{g}(%
\widetilde{X}_{i}^{\ast }+\tau \xi _{i},S_{i},\theta )$ around $\tau =0$ to
obtain 
\begin{equation*}
\mathbb{E}[\widetilde{g}(\widetilde{X}_{i},S_{i},\theta )]=\mathbb{E}[%
\widetilde{g}(\widetilde{X}_{i}^{\ast },S_{i},\theta )]+\sum_{k=2}^{K}\frac{%
\tau ^{k}\mathbb{E}\left[ \xi _{i}^{k}\right] }{k!}\mathbb{E}\left[ 
\widetilde{g}_{x}^{(k)}(\widetilde{X}_{i}^{\ast },S_{i},\theta )\right]
+O(\tau ^{K+1}),
\end{equation*}%
which is similar to equation~(\ref{eq: moment exp approx - step 1}), except $%
\mathbb{E}\left[ \varepsilon _{i}^{k}\right] $ is replaced by $\tau ^{k}%
\mathbb{E}\left[ \xi _{i}^{k}\right] $, and $\widetilde{X}_{i}$, $\widetilde{%
X}_{i}^{\ast }$, $\widetilde{g}$ are replaced by $X_{i}$, $X_{i}^{\ast }$, $%
g $. Then, the corrected moment condition has the form%
\begin{equation}
\widetilde{\psi }(\widetilde{X}_{i},S_{i},\theta ,\widetilde{\gamma })=%
\widetilde{g}(\widetilde{X}_{i},S_{i},\theta )-\sum_{k=2}^{K}\widetilde{%
\gamma }_{k}\widetilde{g}_{x}^{(k)}(\widetilde{X}_{i},S_{i},\theta ),
\label{eq:psi tilde def}
\end{equation}%
where true parameter values $\widetilde{\gamma }_{0}$ are $\widetilde{\gamma 
}_{02}=\tau ^{2}\mathbb{E}\left[ \xi _{i}^{2}\right] /2=\tau ^{2}/2$, $%
\widetilde{\gamma }_{03}=\tau ^{3}\mathbb{E}\left[ \xi _{i}^{3}\right] /6$,
and $\widetilde{\gamma }_{0k}=\frac{\tau ^{k}\mathbb{E}\left[ \xi _{i}^{k}%
\right] }{k!}-\sum_{\ell =2}^{k-2}\frac{\tau ^{k-\ell }\mathbb{E}\left[ \xi
_{i}^{k-\ell }\right] }{(k-\ell )!}\widetilde{\gamma }_{0\ell }$ for $k\geq
4 $.

We will now show that%
\begin{equation*}
\gamma _{0k}=\sigma _{X^{\ast }}^{k}\widetilde{\gamma }_{0k}\text{ for all }%
k\geq 2\text{.}
\end{equation*}%
First, $\gamma _{02}=\mathbb{E}\left[ \varepsilon _{i}^{2}\right] /2=\mathbb{%
E}\left[ \left( \sigma _{\varepsilon }\xi _{i}\right) ^{2}\right] /2=\sigma
_{X^{\ast }}^{2}\widetilde{\gamma }_{02}$, $\gamma _{03}=\mathbb{E}\left[
\varepsilon _{i}^{3}\right] /6=\sigma _{X^{\ast }}^{3}\widetilde{\gamma }%
_{03}$ by definition. Then, for $k\geq 4$, by induction we have%
\begin{eqnarray*}
\gamma _{0k} &=&\frac{\mathbb{E}\left[ \varepsilon _{i}^{k}\right] }{k!}%
-\sum_{\ell =2}^{k-2}\frac{\mathbb{E}\left[ \varepsilon _{i}^{k-\ell }\right]
}{(k-\ell )!}\gamma _{0\ell } \\
&=&\sigma _{X^{\ast }}^{k}\left( \frac{\left( \sigma _{\varepsilon }/\sigma
_{X^{\ast }}\right) ^{k}\mathbb{E}\left[ \xi _{i}^{k}\right] }{k!}%
-\sum_{\ell =2}^{k-2}\frac{\left( \sigma _{\varepsilon }/\sigma _{X^{\ast
}}\right) ^{k-\ell }\mathbb{E}\left[ \xi _{i}^{k-\ell }\right] }{(k-\ell )!}%
\frac{\gamma _{0\ell }}{\sigma _{X^{\ast }}^{\ell }}\right) \\
&=&\sigma _{X^{\ast }}^{k}\left( \frac{\tau ^{k}\mathbb{E}\left[ \xi _{i}^{k}%
\right] }{k!}-\sum_{\ell =2}^{k-2}\frac{\tau ^{k-\ell }\mathbb{E}\left[ \xi
_{i}^{k-\ell }\right] }{(k-\ell )!}\widetilde{\gamma }_{0\ell }\right)
=\sigma _{X^{\ast }}^{k}\widetilde{\gamma }_{0k}\text{.}
\end{eqnarray*}

Finally, let us now show that moment condition $\widetilde{\psi }$ in
equation~(\ref{eq:psi tilde def}) is numerically identical to $\psi $ in
equation~(\ref{eq: psi moms definition}) with $\gamma _{k}=\sigma _{X^{\ast
}}^{k}\widetilde{\gamma }_{k}$. Note that for $\widetilde{x}=x/\sigma
_{X^{\ast }}$ we have $\widetilde{g}_{\widetilde{x}}^{(k)}(\widetilde{x}%
,s,\theta )\equiv \nabla _{\widetilde{x}}^{k}g(\sigma _{X^{\ast }}\widetilde{%
x},s,\theta )=\sigma _{X^{\ast }}^{k}g_{a}^{(k)}(a,s,\theta )|_{a=\sigma
_{X^{\ast }}\widetilde{x}}=\sigma _{X^{\ast }}^{k}g_{x}^{(k)}(x,s,\theta )$,
and hence%
\begin{eqnarray*}
\widetilde{\psi }(\widetilde{X}_{i},S_{i},\theta ,\widetilde{\gamma })
&=&g(\sigma _{X^{\ast }}\widetilde{X}_{i},S_{i},\theta
)-\sum_{k=2}^{K}\left( \widetilde{\gamma }_{k}\sigma _{X^{\ast }}^{k}\right)
g_{x}^{(k)}(\sigma _{X^{\ast }}\widetilde{X}_{i},S_{i},\theta ) \\
&=&g(X_{i},S_{i},\theta )-\sum_{k=2}^{K}\left( \widetilde{\gamma }_{k}\sigma
_{X^{\ast }}^{k}\right) g_{x}^{(k)}(X_{i},S_{i},\theta ) \\
&=&\psi (X_{i},S_{i},\theta ,\gamma )\text{.}
\end{eqnarray*}

\section{\label{sec:Implementation}Some Implementation Details}

\noindent\textbf{Numerical Optimization} \newline
Since $\overline{\psi }\left( \theta ,\gamma \right) $ is a linear function
of $\gamma $ it can be profiled out of the quadratic form $\hat{Q}(\theta
,\gamma )$. Thus, the criterion function only needs to be minimized
numerically over $\theta $.

\bigskip

\noindent\textbf{Choice of the weighting matrix $\hat \Xi$} \newline
As for the standard GMM estimator, the optimal weighting matrix can be
estimated by 
\begin{equation*}
\hat{\Xi}_{\text{eff}}\equiv \hat{\Omega}_{\psi \psi }^{-1}(\tilde{\theta},%
\tilde{\gamma}),
\end{equation*}%
where $\tilde{\theta}$ and $\tilde{\gamma}$ are some preliminary estimators
of $\theta _{0}$ and $\gamma _{0}$, and $\hat{\Omega}_{\psi \psi }(\theta
,\gamma )\equiv n^{-1}\sum_{i=1}^n\psi _{i}(\theta ,\gamma )\psi _{i}(\theta
,\gamma )^{\prime }.$ One example of such a preliminary estimator would be
the 1-step (GMM-)MERM estimator using $\hat{\Xi}_{\text{GMM}1}\equiv \hat{%
\Omega}_{\psi \psi }^{-1}(\hat{\theta}_{\text{Naive}},0)$ as the first-step
GMM weighting matrix, where $\hat{\theta}_{\text{Naive}}$ is a naive
estimator of $\theta _{0}$ that ignores EIV. Note that $\hat{\Omega}_{\psi
\psi }(\hat{\theta}_{\text{Naive}},0)=$ $\hat{\Omega}_{gg}\left( \hat{\theta}%
_{\text{Naive}}\right) $, where $\hat{\Omega}_{gg}(\theta )\equiv
n^{-1}\sum_{i=1}^n g_{i}(\theta )g_{i}(\theta )^{\prime }$.

One may also consider the regularized version of the efficient weighting
matrix estimator $\hat{\Xi}_{\text{eff,R}}\equiv \hat{\Omega}_{\psi \psi
}^{-1}(\tilde{\theta},0)$. Since $\gamma _{0}\rightarrow 0$, using the
regularized version $\hat{\Xi}_{\text{eff,R}}$ does not lead to a loss of
efficiency. {} Moreover, our
simulation studies suggest that using the regularized weighting matrix $\hat{%
\Xi}_{\text{eff,R}}$ results in better finite sample performance of the MERM
estimator and, hence, is recommended in practice.

Although not indicated by the notation in equation~(\ref{eq:MME definition}%
), the weighting matrix $\hat{\Xi}\equiv \hat{\Xi}(\theta ,\gamma )$ is
allowed to be a function of $\theta $ and $\gamma $. For example,
Continuously Updating GMM Estimator (CUE)\ corresponds to taking $\hat{\Xi}_{%
\text{CUE}}\left( \theta ,\gamma \right) \equiv \hat{\Omega}_{\psi \psi
}^{-1}(\theta ,\gamma )$. Similarly to $\hat{\Xi}_{\text{eff,R}}$, one may
also consider $\hat{\Xi}_{\text{CUE,R}}\left( \theta ,\gamma \right) \equiv 
\hat{\Omega}_{\psi \psi }^{-1}(\theta ,0)$ without introducing any loss of
efficiency. In contrast to the criterion function of the CUE estimator,
criterion function of $\hat{Q}_{\text{CUE,R}}(\theta ,\gamma )$ is quadratic
in $\gamma $. {} 
This implies that $\gamma $ can be profiled out analytically. This
simplifies the numerical optimization problem reducing it to minimizing $%
\hat{Q}_{\text{CUE,R}}\left( \theta ,\hat{\gamma}\left( \theta \right)
\right) $ over $\theta \in \Theta $. Then, the dimension of the optimization
parameter $\theta $ for the corrected moment condition problem remains the
same as for the original (naive) estimation problem without the EIV
correction.

\bigskip

\noindent\textbf{Estimation of the asymptotic variance $\Sigma$} \newline
Theorem \ref{the: asy normality} shows that the MERM estimator $\hat{\beta}=(%
\hat{\theta}^{\prime },\hat{\gamma}^{\prime })^{\prime }$ behaves like a
standard GMM estimator based on the corrected moment function $\psi (\theta
,\gamma )$. The researcher can rely on the standard GMM inference
procedures. The asymptotic variance of $\hat{\beta}$ can be consistently
estimated by 
\begin{equation*}
\hat{\Sigma}\equiv (\hat{\Psi}^{\prime }\hat{\Xi}\hat{\Psi})^{-1}\hat{\Psi}%
^{\prime }\hat{\Xi}\hat{\Omega}_{\psi \psi }\hat{\Xi}\hat{\Psi}(\hat{\Psi}%
^{\prime }\hat{\Xi}\hat{\Psi})^{-1},
\end{equation*}%
where, $\hat{\Xi}$ is the chosen weighting matrix, and $\hat{\Psi}\equiv 
\overline{\Psi }(\hat{\theta},\hat{\gamma})=n^{-1}\sum_{i=1}^n\Psi _{i}(\hat{%
\theta},\hat{\gamma})$ and $\hat{\Omega}_{\psi \psi }=\hat{\Omega}_{\psi
\psi }(\hat{\theta},\hat{\gamma})$ are estimators of $\Psi $ and $\Omega
_{\psi \psi }$.

\bigskip

\section{\label{sec:Empirical Details}Implementation Details of the
Empirical Illustration}

In this section, we provide additional details on the implementation of the numerical experiment in Section~\ref{sec:empirical}.

\bigskip

\noindent{\bf Data} \newline The original dataset is the ModeCanada dataset supplied with the R package \verb|mlogit|. This dataset has been extensively used in transportation research. For a detailed description of the dataset see, for example, \citet{koppelman2000paired}, \citet{wen2001generalized}, and \citet{Hansen2021-Textbook}. As in \citet{koppelman2000paired}, we use only the subset of travelers who chose train, air, or car (and had all of those alternatives available for them), which leaves $n=2769$ observations.

\bigskip

\noindent{\bf Monte-Carlo design} \newline We choose $\theta_0$ to be the MLE estimates using the considered dataset, which are reported in the table bellow.

\begin{center}
\begin{tabular}{l| c c c c c c c c}
	\toprule
	\toprule
	&$\theta_1$ & $\theta_2$ & $\theta_3$ & $\theta_4$ & $\theta_5$ & $\theta_6$ & $\theta_7$ & $\theta_8$ \\ 
	\midrule
    Estimates& 0.0355&   0.2976&   -2.0891&    0.0079&   -0.9900&    1.8794&   -0.0223&   -0.0149 \\
    Std. Err.& 0.0036&   0.0844&    0.4674&    0.0036&    0.0876&    0.2037&    0.0038&    0.0008 \\
    \bottomrule
\end{tabular}
\end{center}

To generate the simulated samples, we randomly draw the covariates (with replacement) from their joint empirical distribution.
To ensure identification of the model, we also generate an instrumental variable $Z_i$ as
\begin{align*}
	Z_i = \kappa \thinspace Income_i^* / \sigma_{Income^*} + \sqrt{1 - \kappa^2} \zeta_i,
\end{align*}
where $\sigma_{Income^*} \approx 17.5$ is the standard deviation of $Income^*$, $\kappa = 0.5$, %
and $\zeta_i$ are i.i.d. draws from $N(0,1)$ (which are also independent from all the over variables). Note that the instrument $Z_i$ is ``caused by $X_{i}^*$''. For example, $Z_i$ can be some (noisy) measure of individual consumption.

\bigskip

\noindent{\bf Moments} \newline To simplify the notation, let $X_i^* \equiv Income_i^*$, $X_i \equiv Income_i$, $R_i \equiv Urban_i$, $R_{ij} \equiv (Price_{ij}, InTime_{ij})'$ for $j \in \{0,1,2\}$, and $W_i \equiv (R_i, R_{i1}', R_{i2}', R_{i0}')'$. Also let $Y_{ij} \equiv \ind \{j = \argmax_{j' \in \{0,1,2\}} U_{i j'}\}$ for $j \in \{0,1,2\}$, $Y_i \equiv (Y_{i1},Y_{i2},Y_{i0})'$, and $p_j(x,w,\theta) \equiv \pr (Y_{ij} = 1|X_i^* = x, W_i = w; \theta)$ with $w \equiv (r, r_1', r_2', r_0')$, so
\begin{align*}
	  p_1(x,w,\theta) &= \frac{e^{\theta_{1} x + \theta_{2} r + \theta_{3} + (\theta_7,\theta_8)r_1}}{e^{\theta_{1} x + \theta_{2} r + \theta_{3} + (\theta_7,\theta_8)r_1} + e^{\theta_{4} x + \theta_{5} r + \theta_{6} + (\theta_7,\theta_8)r_2} + e^{(\theta_7,\theta_8)r_0}}, \\
	p_2(x,w,\theta) &= \frac{e^{\theta_{4} x + \theta_{5} r + \theta_{6} + (\theta_7,\theta_8)r_2}}{e^{\theta_{1} x + \theta_{2} r + \theta_{3} + (\theta_7,\theta_8)r_1} + e^{\theta_{4} x + \theta_{5} r + \theta_{6} + (\theta_7,\theta_8)r_2} + e^{(\theta_7,\theta_8)r_0}},
\end{align*}
and $p_0(x,w,\theta) = 1 - p_1(x,w,\theta) - p_2(x,w,\theta)$. Then, the original moment function takes the form of
\begin{align*}
     g (x,w,y,z,\theta) = \left(\left(y_1 - p_1(x,w,\theta) \right) \varphi_1(x,z,w)', \left(y_2 - p_2(x,w,\theta)\right) \varphi_2(x,z,w)' \right)'.
\end{align*}
and $\varphi_j(x,z,w) = \left(1, x, z, x^2, z^2, x^3, z^3, r, (r_j - r_0)' \right)'$ for $K=2$ and $\varphi_j(x,z,w) = \left(1, x, z, x^2, xz, z^2, x^3, x^2 z, x z^2, z^3, r, (r_j - r_0)'\right)'$ for $K=4$.

\bigskip

\noindent{\bf Income Elasticities} \newline In Section~\ref{sec:empirical}, we focus on estimation of and inference on the income elasticities
\begin{align*}
	\frac{\partial \ln p_j}{\partial \ln x}(x,w,\theta) = \frac{x}{p_j(x,w,\theta)} \frac{\partial p_j (x,w,\theta)}{\partial x}.
\end{align*}
We report the results are for the income elasticities evaluated at the sample mean of $X^*$ and $W$ in the original sample.

\bigskip

\noindent\textbf{Estimation of and Inference on the $\theta_0$} \newline In Table \ref{tab: MC emp theta} below, we also report the estimation and inference results for the vector of parameters $\theta_0$ underlying the reported results about elasticities.

\begin{table}[h!]
\begin{scriptsize}
\begin{threeparttable}
\caption{Simulation results for the empirically calibrated conditional logit model}
\label{tab: MC emp theta}
    \begin{tabular}{ l c c c c| c c c c | c c c c}
    \toprule
    \toprule
      & \multicolumn{4}{c}{MLE} & \multicolumn{4}{c}{$K=2$} & \multicolumn{4}{c}{$K=4$}\\
      \cmidrule(lr){2-5}  \cmidrule(lr){6-9} \cmidrule(lr){10-13}
      &{\text{bias}}&{\text{std}}&{\text{rmse}}&{\text{size}}&{\text{bias}}&{\text{std}}&{\text{rmse}}&{\text{size}}&{\text{bias}}&{\text{std}}&{\text{rmse}}&{\text{size}}\\
      \midrule
      \multicolumn{13}{c}{$\tau = 1/4$}\\
      \midrule
                            $\theta_1$ &-0.0021&0.0035&0.0041&8.70&0.0001&0.0042&0.0042&5.48&0.0005&0.0057&0.0058&7.38\\
                            $\theta_2$ &0.0047&0.0932&0.0933&5.10&0.0028&0.0957&0.0957&5.36&0.0022&0.0960&0.0960&5.40\\
                            $\theta_3$ &0.1152&0.4452&0.4599&6.00&-0.0048&0.4821&0.4821&5.94&-0.0251&0.5336&0.5342&6.68\\
                            $\theta_4$ &-0.0004&0.0031&0.0031&4.52&-0.0001&0.0034&0.0034&5.32&-0.0001&0.0036&0.0036&6.86\\
                            $\theta_5$ &-0.0023&0.0894&0.0895&5.18&-0.0088&0.0918&0.0922&5.54&-0.0113&0.0922&0.0929&5.96\\
                            $\theta_6$ &0.0232&0.1821&0.1836&4.64&0.0250&0.1982&0.1998&5.72&0.0329&0.2089&0.2115&6.74\\
                            $\theta_7$ &-0.0001&0.0035&0.0035&5.58&-0.0002&0.0036&0.0036&6.24&-0.0003&0.0036&0.0037&6.04\\
                            $\theta_8$ &-0.0001&0.0007&0.0007&4.82&-0.0001&0.0007&0.0007&5.48&-0.0002&0.0007&0.0007&5.58\\
      \midrule
      \multicolumn{13}{c}{$\tau = 1/2$}\\
      \midrule
                            $\theta_1$ &-0.0073&0.0032&0.0080&60.08&-0.0016&0.0043&0.0046&6.86&0.0005&0.0061&0.0061&6.60\\
                            $\theta_2$ &0.0109&0.0930&0.0936&5.18&0.0050&0.0959&0.0960&5.54&0.0026&0.0964&0.0965&5.36\\
                            $\theta_3$ &0.4080&0.4452&0.6039&17.12&0.0936&0.4874&0.4963&6.58&-0.0263&0.5475&0.5481&6.38\\
                            $\theta_4$ &-0.0012&0.0029&0.0031&6.46&-0.0003&0.0035&0.0035&5.22&-0.0002&0.0038&0.0038&6.52\\
                            $\theta_5$ &-0.0006&0.0894&0.0894&5.16&-0.0083&0.0919&0.0923&5.52&-0.0110&0.0924&0.0930&5.92\\
                            $\theta_6$ &0.0655&0.1752&0.1870&6.22&0.0348&0.2035&0.2064&5.86&0.0326&0.2158&0.2183&6.42\\
                            $\theta_7$ &-0.0003&0.0035&0.0036&5.64&-0.0003&0.0036&0.0037&6.34&-0.0003&0.0037&0.0037&6.06\\
                            $\theta_8$ &-0.0001&0.0007&0.0007&5.06&-0.0001&0.0007&0.0007&5.54&-0.0002&0.0007&0.0007&5.48\\
     \midrule
      \multicolumn{13}{c}{$\tau = 3/4$}\\
      \midrule      
                            $\theta_1$ &-0.0132&0.0029&0.0135&99.34&-0.0056&0.0043&0.0071&25.12&0.0003&0.0065&0.0065&6.06\\
                            $\theta_2$ &0.0180&0.0923&0.0940&5.36&0.0102&0.0961&0.0966&5.76&0.0033&0.0973&0.0973&5.44\\
                            $\theta_3$ &0.7336&0.4496&0.8604&41.66&0.3203&0.4859&0.5820&12.00&-0.0130&0.5666&0.5667&6.20\\
                            $\theta_4$ &-0.0024&0.0026&0.0035&14.00&-0.0009&0.0035&0.0036&5.94&-0.0002&0.0041&0.0041&5.76\\
                            $\theta_5$ &0.0021&0.0890&0.0891&5.08&-0.0071&0.0921&0.0924&5.68&-0.0109&0.0926&0.0932&5.82\\
                            $\theta_6$ &0.1204&0.1654&0.2046&9.76&0.0648&0.2048&0.2148&6.56&0.0334&0.2294&0.2318&5.98\\
                            $\theta_7$ &-0.0004&0.0036&0.0036&6.00&-0.0004&0.0036&0.0037&6.34&-0.0003&0.0037&0.0037&6.06\\
                            $\theta_8$ &-0.0001&0.0007&0.0007&5.54&-0.0002&0.0007&0.0008&6.00&-0.0002&0.0007&0.0008&5.42\\
        \bottomrule
    \end{tabular}
\begin{tablenotes}
\item \scriptsize This table reports the simulated finite sample bias, standard deviation, RMSE, and size of the MLE and the MERM estimators and the corresponding t-tests for the components of $\theta_0$. The true value of the parameters of interest are $\theta_0 = (0.0355,   0.2976,   -2.0891,    0.0079,   -0.9900,    1.8794,   -0.0223,   -0.0149)'$. The results are based on 5,000 replications.
\end{tablenotes}
\end{threeparttable}
\end{scriptsize}
\end{table}

\end{appendices}%

\end{document}